\newcommand{\scri}{\mathscr{I}}
\def\Hess{\mbox{Hess}}
\def\grad{\mbox{grad}}
\def\F{\mathcal{F}}
\def\Z{T^{\perp}}
\def\V{T^{\perp\parallel}}
\def\T{T}
\def\Tperpu{\T^{\parallel_u}}
\def\Tperp{\T^{\parallel}}
\def\n{\hat{n}}
\def\amu{\alpha\beta\mu\nu}
\def\anu{\alpha\mu\nu\beta}
\def\tr{\mbox{tr}}
\def\traceh{\tr_{\proju}}
\def\g{\bm{g}}
\def\proju{\gamma}
\def\hu{\eta}
\def\Riem{\mbox{Riem}}
\def\Ric{\mbox{Ric}}
\def\circRic{\widehat{\mbox{Ric}}}
\def\Sch{\mbox{Sch}}
\def\Scal{\mbox{Scal}}
\def\Weyl{\mbox{Weyl}}
\def\Bac{\mbox{Bach}}
\def\fop{\mathcal{L}_{\partial_\Om}}
\def\R{\mathcal{R}}
\def\WT{\mathcal{W}}
\newcommand{\cir}[1]{\widehat{#1}}
\newcommand{\bm}[1]{\mbox{\boldmath $#1$}}
\newcommand{\per}[1]{{#1}^{\perp}}
\newcommand{\perpara}[1]{{#1}^{\perp \parallel}}
\newcommand{\para}[1]{{#1}^{\parallel}}
 	\definecolor{carblue}{rgb}{0.30, 0.50, 1}
\newcommand{\redsize}[1]{\text{\scriptsize $#1$}}
\def\bmu{\bm{u}}
\def\u{u}
\def\U{\mathcal{U}}
\def\Scalsig{\Scal_{\Sigma}}
\def\Riemsig{\Riem_{\Sigma}}
\def\Ricsig{\Ric_{\Sigma}}
\def\Rsig{R^{\Sigma}}
\def\K{K}
\def\H{H}
\def\circK{\sigma}
\def\Rnn{R^{\perp }}
\def\Rnt{R^{\perp \parallel}}
\def\Rtt{R^{\parallel}}
\def\circT{\cir{R^{\perp\parallel}}}
\def\circQ{\cir{\Ric(R^{\parallel})}}
\def\trT{{\Rnt_{\mbox{\tiny tr}}}}
\def\circP{\cir{R^{\perp}}}
\def\trP{{\Rnn_{\mbox{\tiny tr}}}}
\def\Q{\Scal(R^{\parallel})}
\def\trZZ{{\Z_{\mbox{\tiny tr}}}}
\def\trV{{\V_{\mbox{\tiny tr}}}}
\def\circV{\widehat{{T}{}^{\perp\parallel}}}
\def\circZZ{\widehat{{T}{}^{\perp}}}
\def\I{\mathrm{I}}
\def\Ia{\mathrm{I}_{\mathrm{a}}}
\def\II{\mathrm{II}}
\def\wand{k}
\def\snlight{l}
\def\snspace{m}
\def\h{h} 
\def\Om{\Omega}
\def\y{y}
\def\Wi{W}
\def\Wper{W^{\perp}}
\def\Wpar{W^{\parallel}}
\def\Wppar{W^{\perp \parallel}}
\def\Cot{\mbox{Cot}}
\def\f{f}
\def\Wperscr{\mathcal{W}^\perp}
\def\obs{\mathcal{O}}
\def\obspol{\mathcal{P}}
\def\projuscr{{\proju_0}}
\def\gn{g_{(\n -1)}}
\def\at{a_{\n}(\projuscr)}
\def\bt{b_{\n}(\projuscr)}
\def\L{\ell}
\def\nabscr{D}
\def\w{w}
\def\ay{b}
\def\eqscr{\stackrel{\scri}{=}}
\newcounter{mnotecount}[section]
\renewcommand{\themnotecount}{\thesection.\arabic{mnotecount}}
\newcommand{\mnote}[1]%{}
{\protect{\stepcounter{mnotecount}}$^{\mbox{\footnotesize
$%\!\!\!\!\!\!\,
\bullet$\themnotecount}}$ \marginpar{%\color{red}%
\raggedright\tiny\em
$\!\!\!\!\!\!\,\bullet$\themnotecount: #1} }
 \newtheorem{theorem}{Theorem}
\newtheorem{lemma}{Lemma}
\newtheorem{proposition}{Proposition}
\newtheorem{corollary}{Corollary}
\newtheorem{remark}{Remark}
\newtheorem{definition}{Definition}
\title{Classification of $\Lambda \neq 0$-vacuum algebraically special spacetimes with conformally flat $\mathscr I$ from Weyl tensor expansion  }
\author[1]{Marc Mars\footnote{marc@usal.es}}
\author[2]{Carlos Peón-Nieto\footnote{carlos.peon@upm.es}}
\affil[1]{Departamento de Física Fundamental, Universidad de Salamanca, Salamanca, Spain.}
\affil[2]{Departamento de MATIC, Universidad Politécnica de Madrid, Madrid, Spain.}
\begin{document}

% \author{
%  Marc Mars\\
%   \texttt{marc@usal.es}
%   \and
%   Carlos Peón-Nieto\\
%   \texttt{carlos.peon-nieto@matfyz.cuni.cz}
% }

 \maketitle

% \abstract{In this paper}

% \begin{abstract}
% We analyze the asymptotic structure of $\Lambda$-vacuum spacetimes in all dimensions using the conformal initial value problem. We compute the fall-off of the Weyl tensor in terms of hypersurface data and identify the freely prescribable components. We then characterize algebraically special spacetimes with locally conformally flat $\mathscr{I}$, showing they coincide with the Kerr–de Sitter-like class in all dimensions. This yields a unified classification scheme extending earlier results and improves over optical matrix-based approaches.
% \end{abstract}

\begin{abstract}
 We introduce a general algebraic decomposition of Riemann-like and Weyl-like tensors with respect to a non-null vector $\u$. We derive Gauss, Codazzi and Ricci-type identities for the Weyl tensor, that allow to relate the components of the spacetime Weyl tensor with intrinsic quantities of the hypersurfaces orthogonal to $\u$.
 Restricting to the case of $\Lambda$-vacuum spacetimes (with $\Lambda \neq 0$ and any dimension)  admiting a conformal compactification, we then study the behaviour of the Weyl tensor near $\mathscr I$ by means of an asymptotic expansion {\it \`a la} Fefferman-Graham, where the first terms are explicitly computed. We use these tools to characterize four dimensional algebraically special spacetimes with locally conformally flat $\mathscr{I}$, showing they match exactly the so-called {\it Kerr-de Sitter-like class with conformally flat $\scri$}, thus providing a  geometric characterization of this class of spacetimes.
\end{abstract}

\section{Introduction}

The study of the asymptotics of vacuum spacetimes with a cosmological constant ($\Lambda$) has attracted growing interest over the past decades, motivated by a variety of physical and theoretical developments. Of fundamental importance is the observational milestone of gravitational wave detection by the LIGO and Virgo collaborations \cite{ligovirgo}. Within the paradigm of a universe with $\Lambda > 0$, this experimental achievement does not yet fully match a complete theoretical understanding. Indeed, despite significant progress, there remains no consensus on how to define gravitational radiation in the presence of a non-zero cosmological constant. At the theoretical level, the natural place to study gravitation radiation is future (or past) null infinity, represented in terms of a conformal boundary $\scri$.

% As we shall discuss below, many of the proposals aiming to characterize the gravitational radiation rely on the data that must be prescribed for an asymptotic initial value problem.

% Such an asymptotic initial value problem

A second long-standing motivation involving asymptotic analysis in the case $\Lambda < 0$ is the AdS/CFT correspondence conjecture \cite{maldacena}. Its counterpart with $\Lambda > 0$ (e.g. \cite{stormingerdsCFT}), although less popular as a framework for quantum gravity, makes use of a similar mathematical toolbox that has proven useful in a variety of contexts. For example, the asymptotic formal expansions developed by Fefferman and Graham \cite{FeffGrah85,ambientmetric} apply for all non-zero values of $\Lambda$. However, only in the case $\Lambda > 0$, this machinery allows for a well-posed asymptotic initial value problem in all spacetime dimensions \cite{Anderson2005,andersonchrusciel05,hintz23,kaminski21,kichenassamy03}, extending earlier results \cite{friedrich81,friedrich81bis,Fried86initvalue} in four dimensions by  Helmut Frierdrich. As we shall later comment on, the asymptotic initial data of this Cauchy problem play a central role for some of the proposals for describing gravitational radiation.

% singling out the initial data as prescribable terms  in the Fefferman and Graham formal expansions  . We note that such asymptotic initial value problem extends, from different point of view, the results by Helmut Friedrich in four dimensions to arbitrary dimensions.

The asymptotic region of spacetimes is typically analyzed using conformal extensions {\it \`a la} Penrose. The conformal invariance of the Weyl tensor makes this a very relevant object in the contexts discussed above. In the asymptotic initial value problem in four dimensions, the electric part of the Weyl tensor is one piece of the freely specifiable data. This remains true in higher dimensions when the conformal boundary is locally conformally flat \cite{marspeondata21}. The remaining piece of the data at $\scri$ is the conformal class of boundary metrics. In spacetime dimensions greater than $4$, this is partially characterized by its intrinsic Weyl tensor, which as we shall prove in this paper (cf. equations \eqref{eqweylpar4d}-\eqref{Wparngeq5}), appears explicitly in the expansion of the spacetime Weyl tensor.

On the other hand, in the context of gravitational radiation, we highlight the framework developed by Wald and Zoupas \cite{waldzoupas}, based on the covariant phase space formalism \cite{leewald}. They propose a general prescription for defining and computing the flux of ``energy'' across a boundary, which applies in any dimension and for a broad class of gravitational theories. Interestingly, in the case of the Einstein-Hilbert action with $\Lambda = 0$, this yields a gravitational flux formula at $\scri$ in terms of the news tensor. For $\Lambda \neq 0$, using the Einstein-Hilbert action with holographic counterterms, leads to a flux expression written in terms of the electric part of the Weyl tensor \cite{anninos,compere1,compere2,poole,hoquekrtouspeon},  which recall, is part of the asymptotic initial data.
This approach provides a solid mathematical structure that encompases and generalizes other proposals, such as those by Ashtekar and Magnon \cite{ashtekarmagnon} in four dimensions and their extension to higher dimensions by Ashtekar and Das \cite{ashtekardas}.

It is also worth mentioning a more recent criterion for detecting gravitation radiation proposed by  Fernández-Álvarez and Senovilla \cite{fransenoradiation,fransenoasymptstructure}. In this case, this depends on the interplay between the electric Weyl tensor and the Cotton tensor of the boundary metric. We note that the latter also appears in the expansion of certain components of the spacetime Weyl tensor.

% for detecting gravitational radiation, the Weyl tensor again plays a central role.

% The asymptotic region of spacetimes is typically analyzed using conformal extensions {\it \`a la} Penrose. Given the conformal invariance of the Weyl tensor, it is natural that it often plays a central role in  the contexts mentioned above. For instance, in the formulation of the gravitational radiation flux by Wald and Zoupas \cite{}, which depends on the News tensor in the $\Lambda = 0$ case, the flux is expressed directly in terms of the Weyl tensor for $\Lambda \neq 0$ using the Hilbert-Einstein action with Gibbons-Hawking and holographic counterterms. This is well-known to be true in four dimensions (cf. []) or for locally conformally flat boundaries in all dimensions []. Notably, this flux formula matches an earlier proposal by Ashterkar and Das \cite{}.
% Similarly, in the criteria proposed by Fernández-Álvarez and Senovilla for identifying the presence of gravitational radiation, the Weyl tensor plays a central role.

% On the other hand,  Gravitational radiation and data are not independent.

% \bigskip

% The works above mentioned are only few of the many results that strongly support the idea that the Weyl tensor encodes  relevant asymptotic degrees of freedom,

The works mentioned above are just a few among many that have contributed to establish the idea that the Weyl tensor encodes relevant asymptotic degrees of freedom, regardless of the spacetime dimension or the sign of the cosmological constant. In order to make this statement more precise, a careful analysis of the fall-off of the Weyl tensor is central. There exist classical results along these lines by Bondi \cite{bondi1} and Sachs \cite{sachs1,sachs2} (see also Newman and Penrose \cite{newmanpenrose}), analyzing the fall-off of the Weyl tensor for vacuum spacetimes in terms of the affine parameter of a null geodesic field, which became to be known as the {\it peeling} of the Weyl tensor.
These results were subsequently generalized by Penrose to include an arbitary cosmological constant \cite{penrosegravitation}.
A more recent work \cite{peelingsenofran} unifies earlier four-dimensional results for all values of $\Lambda$ within a single, frame-independent, and geometrically meaningful framework.
Additionally, higher dimensional generalizations have been explored, notably in the work by Ortaggio and Pravdová \cite{ortaggioweyl} studying the general asymptotic behaviour of the Weyl tensor in all dimensions (for any value of $\Lambda$), or in the case of Kerr-Schild spacetimes with Minkowski \cite{ortaggiokS} and (anti)-de Sitter \cite{malekpravda11} backgrounds.

% Also, we remark some of the higher dimensional analyses by Ortaggio, Pravdá, \cite{} etc.

% A more recent work wraps up previous four dimensional results for all values of $\Lambda$ into a single frame-free and geometric framework \cite{peelingsenofran}. Higher dimensional stuff...

In this paper we propose a different perspective for a fall-off analysis of the arbitrary dimensional Weyl tensor  for any value of non-zero $\Lambda$.
We aim at expressing the components of the spacetime Weyl tensor in terms of intrinsic quantities to a submanifold $\Sigma$ defined by orthogonality to a non-null vector $\u$. To that aim, the first step is an algebraic decomposition  that we carry out in Section \ref{secalgebraic} for Riemann and Weyl-like tensors, together with their related Ricci and Schouten-like traces. This decomposition consists in  expressing, in a covariant way, these  tensors in terms of its components projected onto $\Sigma$. We choose carefully such components so that the final result is well adapted to Gauss, Codazzi and Ricci type identities.

The second step is given in Section \ref{secgausscod}, where we particularize the results of Section \ref{secalgebraic} to the Weyl and Riemann tensors obtained from a differentiable spacetime metric $g$. The choice of elements involved in the decomposition in Section \ref{secalgebraic} is central to obtain a set of Gauss, Codazzi and Ricci-like identities for the Weyl tensor and Riemann tensor (cf. Proposition \ref{GCR-decomposed}), which imply analgous results for the Schouten tensor (cf. Remark \ref{remarkWeylg}.)
These identities have the additional advantage of being fully covariant and expressed entirely in terms of quantities intrinsic to $\Sigma$, all of which are either directly derivable from the induced metric $\gamma$ or can be eliminated by an appropriate choice of the vector $\u$.

This turns out to be key in Section \ref{secFG}, were we finally carry out the asymptotic expansion of the Weyl tensor of $(\Lambda \neq 0)$-vacuum spacetimes admitting a conformal extension. For this part, we take advantage of the Fefferman-Graham expansion of the metric $\gamma$ near $\scri$. Adapting $\u$ to the gradient of the conformal factor $\Om$ used in the Fefferman-Graham expansion, the identities in Section \ref{secgausscod} provide an expression of the Weyl tensor depending only on the intrinsic metric $\gamma$. Then using the Fefferman-Graham expansion of $\gamma$ we can readily obtain the asymptotic expansion of the Weyl tensor, computing its first terms explicitly.

% This is precisely the object expanded by  Fefferman and Graham. Thus, it is straightforward to obtain an asymptotic expansion of the spacetime Weyl tensor and compute its first terms.

% These identities have the additional advantage of being totally covariant, and moreover being expressed in terms objects intrinsic to $\Sigma$, being all of them either obtainable directly from the induced metric $\gamma$ at $\Sigma$, or that can be gauged away by a suitable choice of vector $\u$.

% First, in Section \ref{secalgebraic} we work out an orthogonal decomposition wrt to a non-null vector $\u$ of generic Riemann-like and Weyl-like tensors. The results in this section are purely algebraic. Then, in Section \ref{secgausscod} we obtain expressions for the Weyl tensor analogous to the Gauss, Codazzi and Ricci identities of the Ricci tensor. This, gives us a decomposition of the components of the spacetime Weyl tensor in terms of quantities intrinsic to the submanifolds $\Sigma$, defined by orthogonality to $\u$. These results are then used in Section \ref{secFG} to obtain an expansion of the components of the Weyl tensor of $\Lambda$-vacuum spacetimes {\it \`a la} Fefferman-Graham. Remarkably, all orders expanded are explictly computable either in terms of intrinsic geometric objects of $\Sigma$, or they depend on the leading order of the electric part of the Weyl tensor.

% totally independent of this intrinsic geometry, matching the degrees of freedom of the Weyl tensor.

The results described so far are general and constitute a mathematical toolbox with many potential applications. We work out a non-trivial application in Section \ref{secalgspec}, which is of particular interest to us. We first write down the algebraically special condition of the Coley {\it et al.} \cite{coley04} classification of Weyl tensors in all dimensions in terms of  the components of the Weyl tensor that we expanded in Section \ref{secFG}. Then, restricting to four dimensions, we insert the expansion in Section \ref{secFG} and evaluate the first and second order terms at $\scri$. The outcome  are two differential equations relating intrinsic objects at $\scri$ (cf. \eqref{WpertypeIIO1} and \eqref{2ndOfull}), directly derivable from the asymptotic initial data. We show that these equations can be integrated whenever $\scri$ is locally conformally flat. The general solution gives a class of asymptotic initial data that
is known to define \cite{marspaetzseno16} an explicit class of metrics known as Kerr-de Sitter-like with conformally flat $\scri$, and which  appeared naturally in a classification of Kerr-de Sitter metrics among $\Lambda$-vacuum spacetimes admiting a Killing vector
% that were earlier obtained in a classification of Kerr-de Sitter and related spacetimes, known as Kerr-de Sitter-like with locally conformally flat $\scri$  class of metrics \cite{marspaetzseno16}
(for a similar classification with arbitrary $\scri$ see \cite{Mars_scri1} and for an extension to all dimensions see  \cite{marspeonKSKdS21}.) As a consequence, we obtain an explicit, local, asymptotic classification of all algebraically special metrics admitting a locally conformally flat $\scri$.

We remark that the connection between Kerr-de Sitter spacetimes and algebraically special metrics in {\it five dimensions} was already noted in \cite{reall15}, where the authors characterized Kerr-de Sitter and related metrics as the set of algebraically special spacetimes with a non-degenerate optical matrix. As pointed out in \cite{reall15}, this classification does not extend to four dimensions, although extensions to higher dimensions are possible, for instance, the one carried out in six dimensions in \cite{kokoskaortaggio}. Interestingly, the equivalence between the classification in \cite{reall15} and the Kerr–de Sitter-like class of metrics in five dimensions was established in \cite{marspeonCKVs}.

The results of the present work show that four dimensional, algebraically special spacetimes admitting a locally conformally flat  $\scri$ coincide with the Kerr-de Sitter-like class of metrics. Remarkably, this characterization can be shown to extend to all dimensions, including four, where the previous classification based on the non-degeneracy of the optical matrix fails to apply (see arguments in the original work \cite{reall15}).
This highlights local conformal flatness of $\scri$ as a robust and geometrically meaningful criterion for identifying Kerr-de Sitter and other related spacetimes. The extension to higher dimensions will be presented in a future paper.

% A more detailed exposition of these results will appear in a forthcoming paper.

\subsection{Setup and notation}

Throughout this paper $(M,g)$ denotes a smooth $\n$-dimensional connected manifold
with a semi-riemannian metric $g$ of arbitrary signature and $(V,\g)$ is a semi-riemannian vector space of dimension $\n$. We always assume  $\n \geq 3$.  The signatures of $g$ and $\g$ are kept arbitrary at the beginning of the paper, but  will be restricted to Lorentzian later on, where we will also take $(V,\g) $ to be the tangent space  of $M$ at a point $p \in M$, namely $(V,\g) = (T_p M , g|_p)$. The associated contravariant metrics are denoted by
$g^{\sharp}$ and $\g^{\sharp}$ respectively.

We use both index-free and abstract index notation at our convenience.
Indices are lowered and raised with $g_{\alpha\beta}$ and its inverse, or
$\g_{\alpha\beta}$ and $\g^{\alpha\beta}$ depending on the context.
As usual, square brackets denote antisymmetrization and round parenthesis symmetrization.
In index-free notation the one-form metrically related to a vector $X$ is denoted $\bm{X}$ and vice versa.
The set of symmetric (0,2) tensors in $V$ is denoted by $S^{(2)}(V)$. The symmetrized tensor product of two-covariant tensors $A$ and $B$ is $A \otimes_s B := \frac{1}{2} (A \otimes B + B \otimes A)$ and
$\pounds_X$ denotes the Lie derivative along a vector field $X$.

\section{Algebraic properties of Riemann-type and Weyl-type tensors}\label{secalgebraic}

A $(0,4)$-tensor $T_{\amu}$ on $V$ is called a {\it Riemann-type tensor} if it has the same algebraic symmetries of a Riemann tensor, namely
\begin{align*}
T_{\amu} = -  T_{\alpha\beta\nu\mu}, \qquad
T_{\amu} = -  T_{\beta\alpha\mu\nu}, \qquad
T_{\amu} =  T_{\mu\nu\alpha\beta}, \qquad
T_{\alpha\beta\mu\nu}  + 
T_{\alpha\mu\nu\beta}  + T_{\alpha\nu\beta\mu}  =0.
\end{align*}
The set of Riemann type tensors defines a vector space which we denote $\R(V)$.
A tensor $T$ is  a {\it Weyl-type tensor} if $T \in \R(V)$ and it is trace-free in the first and third indices (equivalently, trace-free in every pair of indices). The set of Weyl-type tensors is denoted $\WT(V)$.

A standard method to construct tensors $T \in \R(V)$ is by means of the
Kulkarni-Nomizu product. This operation, denoted by $\owedge$ takes a pair
$A,B \in S^{(2)}(V)$  and defines the $(0,4)$-tensor
\begin{align}
  (A \owedge B)_{\amu} := A_{\alpha\mu} B_{\beta\nu} - A_{\alpha\nu} B_{\beta\mu}
  -A_{\beta\mu} B_{\alpha\nu} + A_{\beta\nu} B_{\alpha\mu}. \label{KN}
\end{align}
It is immediate to check that $A \owedge B \in \R(V)$.

We shall need a second operation that constructs tensors $T \in \R(V)$. 
As usual, we call $\Lambda^{(p)}(V)$ the vector space of $p$-forms on $V$ and introduce $\Lambda^{(1,2)} (V)$ as the set of 
$(0,3)$-covariant tensors $F$ in $V$ which are antisymmetric in the last two indices and satisfy the condition
\begin{align*}
F_{\alpha\mu\nu} +  F_{\mu\nu\alpha} + F_{\nu\alpha\mu} =0.
\end{align*}
Given $\bm{u} \in \Lambda^{(1)} (V)$ and
$F \in \Lambda^{(1,2)}(V)$ we define
\begin{align}
  ( \bm{u} \circledast F)_{\amu} := u_{\alpha} F_{\beta\mu\nu}
  -  u_{\beta} F_{\alpha\mu\nu} + u_{\mu} F_{\nu\alpha\beta} - u_{\nu} F_{\mu\alpha\beta}. \label{circledast}
\end{align}
It is straightforward to show that $u \circledast F \in \R(V)$. When $\bm{u}$
has non-zero norm, it can be scaled to become unit $\g^{\sharp}(\bm{u}, \bm{u}) = \epsilon$, with
$\epsilon = \pm 1$.  Let $u$ be the metrically associated vector. The space $\Pi_u$ of vectors orthogonal to $u$ is an $(\n -1)$-dimensional vector space that
inherits a non-degenerate metric from $\g$. Moreover $V = \mbox{span} (u) \oplus
\Pi_u$. The projector $\proju : V \rightarrow \Pi_u$ is given explicitly by
\begin{align*}
  \proju := \mbox{Id} - \epsilon u \otimes \bm{u},
\end{align*}
where $\mbox{Id}$ is the identity tensor on $V$.
Covariant tensors of $V$ which annihilate $u$ are in one-to-one correspondence with covariant tensors on $\Pi_u$. We call such tensors {\it completely orthogonal to $u$} and for them one can raise and lower indices indistinctly with $\g^{\alpha\beta}$ or with $\proju^{\alpha\beta} = \g^{\alpha\beta} - \epsilon
  u^{\alpha} u^{\beta}$. Similarly, traces in any pair of indices can be defined indistinctly with $\g$ or with $\proju$. We shall use $tr_{\proju}$ to denote this trace.

The following lemma gives a canonical decomposition of any tensor $T \in \R(V)$
with respect to the vector $u$.
    \begin{lemma}
  \label{decom1}
    Let $(V,\g)$ have dimension $\n \geq 4$ and
  fix a unit vector $u \in V$
  with square norm $g(u,u)=\epsilon \in \{ -1,1\}$. Let $\T \in \R (V)$. Then, there exists a unique decomposition
\begin{align}
  \T = \Tperpu + \epsilon \bm{\u} \circledast \V + \Z \owedge \hu. \label{decomu}
\end{align}
where
  \begin{align*}
\hu := \bm{u} \otimes \bm{u} - \frac{\epsilon}{\n-3} \proju,
 \end{align*}
and $\Tperpu \in \R (V)$, $\V \in \Lambda^{(1,2)}(V)$, $\Z \in S^{(2)}(V)$ are completely orthogonal to $u$ and have explicit expressions
\begin{align}
  \Z_{\beta\nu} & := \u^{\alpha} \u^{\mu} \T_{\alpha\beta\mu\nu} \label{defZ} \\
  \V_{\beta\mu\nu} & := \u^{\alpha} \proju^{\kappa}_{\mu} \proju^{\sigma}_{\nu} \T_{\alpha\beta\kappa\sigma} \label{defV} \\
  \Tperpu & := \Tperp + \frac{\epsilon}{\n-3}
                   \Z \owedge \proju. \label{defTperp}
                   \qquad \mbox{with} \qquad
\Tperp_{\amu} :=  \proju^{\rho}_{\alpha} \proju^{\sigma}_{\beta} \proju^{\kappa}_{\mu}
                   \proju^{\delta}_{\nu} \T_{\rho\sigma\kappa\delta}.
%\left (
%  %h_{\alpha\mu} \Z_{\beta\nu}
  %- h_{\beta\mu} \Z_{\alpha\nu}
  %+ h_{\beta\nu} \Z_{\alpha\mu}
  %- h_{\alpha\nu} \Z_{\beta\mu} \right )
\end{align}
Moreover $T \in \WT(V)$ if and only if 
  \begin{align*}
    \Tperpu \in \WT(V), \qquad \V{}^{\alpha}{}_{\alpha\mu} =0,
    \qquad \traceh \Z =0.
  \end{align*}
\end{lemma}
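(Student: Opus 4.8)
The plan is to prove existence by direct construction and uniqueness by a counting/injectivity argument, then verify the Weyl-type characterization. First I would establish \emph{existence}: given $\T \in \R(V)$, define $\Z$, $\V$ by the stated formulas \eqref{defZ}, \eqref{defV}, and $\Tperp$ by the full projection in \eqref{defTperp}. One checks immediately from the Riemann symmetries of $\T$ that $\Z \in S^{(2)}(V)$ is symmetric, that $\V$ lies in $\Lambda^{(1,2)}(V)$ — antisymmetry in the last two indices is inherited, and the cyclic identity $\V_{\beta\mu\nu}+\V_{\mu\nu\beta}+\V_{\nu\beta\mu}=0$ follows by contracting the first Bianchi identity of $\T$ with $u^\alpha$ and three projectors — and that $\Tperp \in \R(V)$ (projection preserves all the algebraic symmetries since the projectors are symmetric). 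All three are completely orthogonal to $u$ by construction. Then I would simply \emph{expand} the right-hand side of \eqref{decomu}: write every index of $\T$ as $\proju + \epsilon u\otimes\bm u$, collect the $2^4$ terms according to how many indices are ``along $u$'' (using antisymmetry to see that two indices in the same skew pair cannot both be along $u$, so only the $0$-, $1$-, $2$-$u$ sectors survive), and check that the $0$-$u$ part reproduces $\Tperp + \frac{\epsilon}{\n-3}\Z\owedge\proju$ minus the $\Tperp$-part of $\Z\owedge\hu$, the $1$-$u$ part reproduces $\epsilon\bm u\circledast\V$, and the $2$-$u$ part reproduces the $\bm u\otimes\bm u$-part of $\Z\owedge\hu$. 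Here the role of the constant $-\frac{\epsilon}{\n-3}$ in $\hu$ is precisely to make the $\Z\owedge\proju$ contributions cancel in the purely tangential sector, which is why $\Tperp^{\parallel_u}$ rather than $\Tperp^{\parallel}$ appears; tracking this cancellation is the one place where care is needed.

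For \emph{uniqueness}, suppose $\Tperpu^{(1)} + \epsilon\bm u\circledast\V^{(1)} + \Z^{(1)}\owedge\hu = \Tperpu^{(2)} + \epsilon\bm u\circledast\V^{(2)} + \Z^{(2)}\owedge\hu$ with all pieces completely orthogonal to $u$. Applying $u^\alpha u^\mu(\cdot)_{\alpha\beta\mu\nu}$ to both sides kills the $\owedge$ with any completely orthogonal tensor and the $\bm u\circledast$ term by orthogonality, isolating $\Z^{(1)}=\Z^{(2)}$ (one checks $u^\alpha u^\mu(\Z\owedge\hu)_{\alpha\beta\mu\nu}=\epsilon\Z_{\beta\nu}$ directly); applying $u^\alpha\proju^\kappa_\mu\proju^\sigma_\nu(\cdot)$ isolates $\V^{(1)}=\V^{(2)}$; and then the fully projected part isolates $\Tperpu^{(1)}=\Tperpu^{(2)}$. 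Thus the decomposition is unique, and in particular the $\Z,\V,\Tperp$ read off from any valid decomposition must agree with \eqref{defZ}--\eqref{defTperp}, which also retroactively confirms existence is consistent.

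Finally, for the \emph{Weyl-type characterization}: $\T\in\WT(V)$ iff all traces of $\T$ vanish. Compute $\proju^{\alpha\mu}\T_{\alpha\beta\mu\nu}$ in two ways — once as $g^{\alpha\mu}\T_{\alpha\beta\mu\nu}-\epsilon u^\alpha u^\mu\T_{\alpha\beta\mu\nu} = \mathrm{Ric}(\T)_{\beta\nu}-\epsilon\Z_{\beta\nu}$, and once by tracing the right-hand side of \eqref{decomu}. Tracing $\Tperpu = \Tperp + \frac{\epsilon}{\n-3}\Z\owedge\proju$ gives $\mathrm{Ric}(\Tperp)_{\beta\nu}+\frac{\epsilon}{\n-3}\big((\n-3)\Z_{\beta\nu}+(\traceh\Z)\proju_{\beta\nu}\big)$; the $\bm u\circledast\V$ term contributes a multiple of $\V^\alpha{}_{\alpha\mu}$-type traces plus cross terms that vanish on the completely orthogonal sector; the $\Z\owedge\hu$ term contributes the $\bm u\otimes\bm u$-direction pieces. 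Collecting the purely tangential part of $\mathrm{Ric}(\T)$, it vanishes iff $\mathrm{Ric}(\Tperp)$ is pure trace \emph{and} the relevant $\V$-trace vanishes, the $u$-mixed components of $\mathrm{Ric}(\T)$ vanish iff $\V^\alpha{}_{\alpha\mu}=0$, and the $uu$-component vanishes iff $\traceh\Z=0$; feeding $\traceh\Z=0$ and $\V^\alpha{}_{\alpha\mu}=0$ back shows $\Z\owedge\proju$ drops out so $\Tperpu=\Tperp$ and the remaining condition is exactly $\Tperpu\in\WT(V)$. The main obstacle I anticipate is \textbf{bookkeeping}: keeping the numerical coefficients (especially the $\tfrac{1}{\n-3}$ and the various $\epsilon$'s) straight through the expansion and the trace computations, and correctly enumerating which terms survive under the orthogonality and antisymmetry constraints — the conceptual content is routine multilinear algebra, but the combinatorics of the Kulkarni--Nomizu and $\circledast$ products is error-prone.
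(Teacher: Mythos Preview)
Your overall strategy matches the paper's: define $\Z$, $\V$, $\Tperp$ explicitly, verify the decomposition, recover uniqueness by contraction, and compute traces for the Weyl characterization. The paper's existence argument is slightly slicker---rather than expanding $T$ into $2^4$ pieces, it \emph{defines} $\Tperpu$ as the remainder $T - \epsilon\,\bm{u}\circledast\V - \Z\owedge\hu$ and shows this remainder is orthogonal to $u$ by a single contraction with $u^\alpha$---but both routes work. (Minor slip in your uniqueness step: $u^\alpha u^\mu(\Z\owedge\hu)_{\alpha\beta\mu\nu}=\Z_{\beta\nu}$, not $\epsilon\Z_{\beta\nu}$, since $\hu_{\alpha\mu}u^\alpha u^\mu=\epsilon^2=1$.)

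There is, however, a genuine error in your Weyl-type argument. The claim ``feeding $\traceh\Z=0$ back shows $\Z\owedge\proju$ drops out so $\Tperpu=\Tperp$'' is false: $\traceh\Z=0$ does not force $\Z=0$, so $\Tperpu = \Tperp + \frac{\epsilon}{\n-3}\Z\owedge\proju$ generically differs from $\Tperp$. Your description of the tangential part of $\Ric(T)$ is also off (there is no $\V$-trace contribution there, and the condition is not ``$\Ric(\Tperp)$ pure trace''). The paper avoids this muddle by computing the full $g$-trace directly: using $\tr_{\g}\hu=-\tfrac{2\epsilon}{\n-3}$ and $\hu^\alpha{}_{(\beta}\Z_{\nu)\alpha}=-\tfrac{\epsilon}{\n-3}\Z_{\beta\nu}$ one finds $(\Z\owedge\hu)^\alpha{}_{\beta\alpha\nu}=(\traceh\Z)\,\hu_{\beta\nu}$ and $(\bm{u}\circledast\V)^\alpha{}_{\beta\alpha\nu}=-u_\beta\V^\alpha{}_{\alpha\nu}-u_\nu\V^\alpha{}_{\alpha\beta}$, whence
\[
T^\alpha{}_{\beta\alpha\nu}=(\Tperpu)^\alpha{}_{\beta\alpha\nu}-\epsilon u_\beta\V^\alpha{}_{\alpha\nu}-\epsilon u_\nu\V^\alpha{}_{\alpha\beta}+(\traceh\Z)\,\hu_{\beta\nu},
\]
and the three conditions follow immediately by linear independence of the tangential, mixed, and $uu$ sectors. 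Your route via $\proju^{\alpha\mu}T_{\alpha\beta\mu\nu}$ can be made to work, but you would need to redo the component analysis correctly.
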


\begin{proof} 
  Define $\Z$ and $\V$ by means of \eqref{defZ}-\eqref{defV}. By the symmetries of $T$ these tensors also admit the expressions
  \begin{align*}
    \Z_{\beta\nu}  := \u^{\alpha} \proju^{\rho}_{\beta} \u^{\mu} 
                    \proju^{\kappa}_{\nu} \T_{\alpha\rho\mu\kappa}, \qquad \quad
  \V_{\beta\mu\nu}  := \u^{\alpha} \proju^{\rho}_{\beta} \proju^{\kappa}_{\mu} \proju^{\sigma}_{\nu} \T_{\alpha\rho\kappa\sigma},
\end{align*}
  so they are completely orthogonal to $u$ and satisfy $\Z \in S^{(2)}(V)$, $\V \in \Lambda^{(1,2)}(V)$. Define the tensor $\Tperpu$ by means
  of \eqref{decomu}. By construction we have
  $\Tperpu \in \R(V)$. Contracting \eqref{decomu} with
$u$ in the first index and using 
$\hu_{\alpha\beta} u^{\alpha} = \epsilon u_{\beta}$ yields
\begin{align*}
  \T_{\amu} u^{\alpha}
  = \Tperpu_{\amu} u^{\alpha} + \V_{\beta\mu\nu} + \epsilon u_{\mu} \Z_{\beta\nu}
  - \epsilon u_{\nu} \Z_{\beta\mu}.
\end{align*}
On the other hand
\begin{align*}
  \T_{\amu} u^{\alpha} & =
  \T_{\alpha\beta\kappa\sigma} u^{\alpha} 
  \delta^{\kappa}_{\mu}
  \delta^{\sigma}_{\nu} 
%=
%  \T_{\alpha\gamma\kappa\sigma} u^{\alpha} \left (h^{\gamma}_{\beta} + \epsilon% u^{\gamma} u_{\beta} \right ) 
%\left (  h^{\kappa}_{\mu} + \epsilon u^{\kappa} u_{\mu} \right )
%\left ( h^{\sigma}_{\nu}  + \epsilon u^{\sigma} u_{\nu} \right ) \\
 =
\T_{\alpha\beta\kappa\sigma} u^{\alpha} 
\left (  \proju^{\kappa}_{\mu} + \epsilon u^{\kappa} u_{\mu} \right )
\left ( \proju^{\sigma}_{\nu}  + \epsilon u^{\sigma} u_{\nu} \right ) \\
& =
 \V_{\beta\mu\nu} + \epsilon u_{\mu} \Z_{\beta\nu}
- \epsilon u_{\nu} \Z_{\beta\mu}. 
\end{align*}
%where in the last equality we used the definitions \eqref{defZ}-\eqref{defV}.
Combining both expressions we conclude  that $\Tperpu{}_{\amu}$ is  orthogonal to $u^{\alpha}$ and hence
completely orthogonal to $u$. This proves existence of the decomposition.  Since
\begin{align*}
  \proju^{\mu}_{\alpha} \proju^{\nu}_{\beta} \hu_{\mu\nu} = - \frac{\epsilon}{\n-3} \proju_{\alpha\beta}
\end{align*}
expression \eqref{defTperp} follows directly from \eqref{decomu} by projecting all indices orthogonally with respect to $u$. Uniqueness
of the decomposition is immediate because \eqref{defZ}-\eqref{defTperp} follow
from \eqref{decomu} by direct computation. 
%For item (i) we analyze the condition $\Bi(\T)=0$. Using items (i)
%and (iv) of Lemma \ref{Bianchi} one has
%\begin{align*}
%  \Bi (\T) = \Bi(\Tperp) + \epsilon \bm{u} \wedge \V^{(3)}.
%\end{align*}
%It follows that $\Bi(\T)=0$ is equivalent to $\Bi(\Tperp) =0$  and $\V^{(3)}=0$% (simply contract this expression with $u$ in the first index). The first 
%condition states $\Tperp \in \pR(V)$ and item (i) of the lemma is proved.

%For item (ii) we impose $\Ric(\T) =0$. Using items (i) and
%(iii) of Lemma \ref{Ric} as well as Corollary \ref{corol1}, one gets
%\begin{align*}
%\Ric(\T)_{\beta\nu} = \Ric(\Tperp)_{\beta\nu} - \epsilon \left ( \u_{\beta} V^{%\alpha}{}_{\alpha\nu}
%  + \u_{\nu} V^{\alpha}{}_{\alpha\beta} \right ) + \Z^{\alpha}{}_{\alpha}
%  \hu_{\beta\nu},
%\end{align*}
%so
%\begin{align*}
%  \Ric(\T) = 0 \quad \Longleftrightarrow \quad
%  \Ric(\Tperp)_{\beta\nu}=0,  \quad V^{\alpha}{}_{\alpha\nu}=0, \quad
%    \Z^{\alpha}{}_{\alpha} =0
%\end{align*}
%(simply take the normal and tangential components w.r.t. $\u$ in the previous e%xpression). This proves item (ii). Item (iii) is simply the combination of (i) and
%(ii).
It remains to find under which conditions $T \in \WT(V)$.  The trace of
the tensors $\bm{\u} \circledast \V$ and $\Z \owedge \hu$ are
\begin{align*}
  (\bm{\u} \circledast \V)^{\alpha}{}_{\beta\alpha\nu} &=
  -u_{\beta} \V{}^{\alpha}{}_{\alpha \nu} - u_{\nu} \V{}^{\alpha}{}_{\alpha \beta} \\
  (\Z \owedge \hu)^{\alpha}{}_{\beta\alpha\nu} &=
  (\tr_\proju \Z) \hu_{\beta\nu}
  + (\tr_{\redsize{\g}} \hu) \Z_{\beta\nu}
- 2 \hu{}^\alpha{}_{(\beta} \Z_{\nu)\alpha}
= (\tr_\proju \Z) \hu_{\beta\nu}
%- \frac{2 \epsilon}{\n-3} \Z_{\beta\nu} 
%+ \frac{2 \epsilon}{\n-3} \Z_{\beta\nu}
 \end{align*}
because $\tr_{\redsize{\g}} \hu = - \frac{2\epsilon}{\n -3}$ and $\hu{}^{\alpha}{}_{\beta}
\Z_{\nu\alpha} = - \frac{\epsilon}{\n-3} \Z_{\beta\nu}$.  Thus,
\begin{align*}
\T{}^{\alpha}{}_{\beta\alpha\nu} = 
\Tperpu{}^{\alpha}{}_{\beta\alpha\nu} -
\epsilon u_{\beta} \V{}^{\alpha}{}_{\alpha\nu}
- \epsilon u_{\nu} \V{}^{\alpha}{}_{\alpha\beta}
+ (\tr_\proju \Z) \hu_{\beta\nu}
\end{align*}
and $\T \in \WT(V)$ if and only if $\Tperpu \in \WT(V)$, 
$\V{}^{\alpha}{}_{\alpha\beta} = 0$, $\tr_{\proju} \Z=0$ follows at once.
\end{proof}

Consider a tensor $T \in \R(V)$. There is a canonical method to remove the trace
of $T$ and hence define a tensor $\Weyl(T) \in \WT(V)$. We start by defining
the only independent  trace that a tensor $T \in \R(V)$ can have. Since the trace of the Riemann tensor is the Ricci tensor, we call the corresponding operation
 $\Ric(T)$ and define the symmetric $(0,2)$-tensor
\begin{align*}
\Ric(T){}_{\beta\nu} := T^{\alpha}{}_{\beta\alpha\nu}.
\end{align*}
For a similar reason, the operation of taking trace to this tensor is called $\Scal(T)$
\begin{align*}
\Scal(T) := \tr_{\g} (\Ric(T)).
\end{align*}
Assuming $\n \geq 3$, the operation that extracts the trace to $T$ involves the  symmetric $(0,2)$-tensor defined by
\begin{align}\label{defsch}
\Sch(T) := \frac{1}{\n -2} \left ( \Ric(T)
- \frac{ \Scal (T)}{2 (\n -1)} \g \right )
\end{align}
($\Sch$ stands for Schouten) and is given by
\begin{align}\label{defweyl}
\Weyl(T) := T - \Sch(T) \owedge \g.
\end{align}
 By construction this tensor satisfies $\Weyl(T) \in \WT(V)$. 
Note that $\Weyl(\T)$ is, up to a multiplicative constant, the only linear combination of $\T$ that belongs to $\WT(V)$.

\begin{remark}\label{remarkpuretrace}
 Elements of $T \in \R(V)$ satisfying $\Weyl(T) = 0$ are said to be pure-trace because they are always written $T = A \owedge \g$, for some $A \in  S^{(2)}(V)$, which by definition of $\Sch(T)$ satisfies $A = \Sch(T)$.
\end{remark}

The decomposition in Lemma \ref{decom1} has introduced tensors, such as $\Tperp$ or $T^{\parallel u}$ , that not only have the symmetries of a Riemann-type tensor but, in addition, are
completely orthogonal to $u$. Such tensors can be identified with
elements in $\R(\Pi_u)$, i.e. tensors with the symmetries of a Riemann-type tensor, but living on a space of dimension $\n-1$ instead of $\n$. It is therefore natural to introduce an operation that extracts the trace of the elements in $\R(\Pi_u)$ and leaves a tensor in $\WT(\Pi_u)$. To that aim, we assume $\n \geq 4$
and introduce the following two tensors (note the label $u$ in $\Sch_u$ and
$\Weyl_u$)
\begin{align}
\Sch_u(\Tperp) & := \frac{1}{\n -3} \left ( \Ric(\Tperp) 
- \frac{ \Scal(\Tperp) }{2 (\n -2)} \proju \right ), \label{FirDec} \\
\Weyl_u (\Tperp) & := \Tperp - \Sch_u (\Tperp) \owedge \proju. \label{SecDec}
\end{align}
By construction we have $\Sch_u(\Tperp) \in S^{(2)} (V)$,
$\Weyl_u (\Tperp) \in \WT(V)$ and moreover,
both are completely orthogonal to $u$.

\begin{remark}\label{remarkuniqTperp}
For later use we consider the following uniqueness properties of $\Weyl_u(\Tperp)$. First, $\Weyl_u(\Tperp)$, is the unique, up to a multiplicative constant, linear combination of $\Tperp$ in $\WT(\Pi_u)$.
 On the other hand, the set of elements of $\WT(V)$ completely orthogonal to $u$ coincides with $\WT(\Pi_u)$. Therefore, up to multiplicative constant, \eqref{SecDec}
is the {\it only linear combination} of $T^{\parallel}$ that defines a tensor in $\WT(V)$ completely orthogonal to $u$.  
\end{remark}

% Moreover, up to multiplicative constant, \eqref{SecDec}
% is the {\it only linear combination} of $T^{\parallel}$ that defines a tensor in $\WT(V)$ completely orthogonal to $u$.  This will be used later.

We want to use Lemma \ref{decom1} to give an explicit expression of $\Weyl(T)$ 
in terms of $\Weyl_u (\Tperp)$, as well as objects constructed from 
$\Z$ and $\V$. As a by product we shall also get an expression for 
$\Sch(T)$ in terms of such quantities. Since $\Weyl(T)$ is trace-free one can expect that  the trace-free parts of $\Z$ and $\V$ will play a prominent role. 
We shall use a hat to denote trace-free quantities.
% Normally we will place the hat on the top of the object.
% However, if the expression is very long we can also use the hat as a superindex (e.g. in Definition \ref{Quasi} below).

Specifically we define
\begin{align}
  \trZZ & := \traceh \Z, \quad \quad 
  &&\circZZ := \Z - \frac{\trZZ}{\n-1}  \proju, \label{P1}\\
  \trV_{\mu} & := \V_{\alpha\beta\mu} \proju^{\alpha\beta}, \quad \quad \qquad
     &&          \circV_{\alpha\beta\mu} := \V_{\alpha\beta\mu} +
               \frac{1}{\n-2} \left ( 
               - \proju_{\alpha\beta} \trV_{\mu} + \proju_{\alpha\mu} \trV_{\beta}  
               \right ). \label{T1}
\end{align}
Note that $\circZZ \in S^{(2)}(V)$, $\circV \in \Lambda^{(1,2)} (V)$ and both are completely orthogonal to $u$ and trace-free.

In order to find the relation between $\Weyl(T)$ and 
$\Weyl_u(\Tperp)$ we shall need to apply the decomposition in Lemma 
\ref{decom1} to the tensor 
$\Sch(T) \owedge \g$. To that end, we recall that any tensor $S \in S^{(2)}(V)$ admits a unique
canonical decomposition in tangential and normal components to $u$, according to
\begin{align*}
S = \para{S} + 2 \epsilon \perpara{S} \otimes_s \bm{u} 
+ \per{S} \bm{u} \otimes \bm{u},
\end{align*}
where 
\begin{align*}
\para{S}{}_{\beta\nu} := \proju^{\rho}_{\beta} \proju^{\kappa}_{\nu}
S_{\rho\kappa}, \qquad
\perpara{S}{}_{\beta} := \proju^{\rho}_{\beta} u^{\kappa} S_{\rho\kappa}, \qquad
\per{S} := u^{\rho} u^{\kappa} S_{\rho\kappa}.
\end{align*}
The next lemma specializes the decomposition of
Lemma \ref{decom1} to the Kulkarni-Nomizu product $A \owedge B$. 
\begin{lemma}
\label{decom2}
Let $A, B \in S^{(2)}(V)$. Then the tensor $A \owedge B$ decomposes according
to Lemma \ref{decom1} as
\begin{align*}
A \owedge B = ( A \owedge B)^{\parallel_u}
+ \epsilon \bm{u} \circledast (A \owedge B)^{\perp \parallel} 
+ (A \owedge B)^{\perp} \owedge \hu
\end{align*}
where
\begin{align}
(A \owedge B)^{\perp} & := 
\per{A} \para{B}
+ \per{B} \para{A}  - 2 \perpara{A} \otimes_s \perpara{B}, \label{Rel1}\\
(A \owedge B)^{\perp \parallel}{}_{\beta\mu\nu} & :=
2 \perpara{A}{}_{[\mu} \para{B}{}_{\nu] \beta}
+ 2 \perpara{B}{}_{[\mu} \para{A}{}_{\nu] \beta}, \label{Rel2}\\
( A \owedge B)^{\parallel_u} & :=
\para{A} \owedge \para{B} 
+ \frac{\epsilon}{\n -3} 
\left ( \per{A} \para{B}
+ \per{B} \para{A}  - 2 \perpara{A} \otimes_s \perpara{B}  \right )  \owedge  \proju. \label{Rel3}
 \end{align}
\end{lemma}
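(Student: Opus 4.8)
The plan is to apply the explicit formulas of Lemma \ref{decom1} directly to the tensor $\T = A \owedge B \in \R(V)$, since that lemma guarantees the decomposition exists and is unique; it therefore suffices to evaluate $\Z = (A\owedge B)^{\perp}$, $\V = (A\owedge B)^{\perp\parallel}$ and $(A\owedge B)^{\parallel}$ from \eqref{defZ}--\eqref{defTperp} and then assemble $(A\owedge B)^{\parallel_u}$ via \eqref{defTperp}. First I would substitute into these formulas the definition \eqref{KN} of $\owedge$ together with the canonical splitting $A = \para{A} + 2\epsilon\,\perpara{A}\otimes_s \bm{u} + \per{A}\,\bm{u}\otimes\bm{u}$ and the analogue for $B$. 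The one contraction identity that is used repeatedly is $u^{\alpha} A_{\alpha\beta} = \perpara{A}{}_{\beta} + \epsilon\,\per{A}\,u_{\beta}$ (and likewise for $B$), which follows immediately from complete orthogonality of $\para{A}$ and $\perpara{A}$ and from $u^{\alpha}u_{\alpha} = \epsilon$.

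For $\Z_{\beta\nu} = u^{\alpha}u^{\mu}(A\owedge B)_{\alpha\beta\mu\nu}$ this gives four terms of the form $\per{A}\,B_{\beta\nu}$, $\per{B}\,A_{\beta\nu}$ and $-(u^{\alpha}A_{\alpha\nu})(u^{\mu}B_{\beta\mu})$, $-(u^{\mu}A_{\beta\mu})(u^{\alpha}B_{\alpha\nu})$; expanding each with the identity above, the terms carrying a free index $u_{\beta}$ or $u_{\nu}$ must cancel (as they must, since $\Z$ is completely orthogonal to $u$ by Lemma \ref{decom1}), and what remains is exactly \eqref{Rel1}. For $\V_{\beta\mu\nu} = u^{\alpha}\proju^{\kappa}_{\mu}\proju^{\sigma}_{\nu}(A\owedge B)_{\alpha\beta\kappa\sigma}$ one contracts a single index with $u$, projects the other two (and, by the symmetries, $\beta$ is orthogonal as well), discards the terms proportional to $u_{\mu}$ or $u_{\nu}$ killed by the projectors, and recognises the antisymmetrisations in \eqref{Rel2}; one should note in passing that $\V\in\Lambda^{(1,2)}(V)$ is automatic from Lemma \ref{decom1}. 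Finally, projecting all four indices of $A\owedge B$ orthogonally annihilates every $\per{\,\cdot\,}$ and $\perpara{\,\cdot\,}$ factor, leaving $(A\owedge B)^{\parallel} = \para{A}\owedge\para{B}$, and then $(A\owedge B)^{\parallel_u} = (A\owedge B)^{\parallel} + \tfrac{\epsilon}{\n-3}\,\Z\owedge\proju$ together with the expression \eqref{Rel1} for $\Z$ yields \eqref{Rel3}. Uniqueness in Lemma \ref{decom1} then closes the argument.

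The computation is essentially bookkeeping, so the only real obstacle is controlling the many contractions without sign or coefficient slips; the cleanest safeguard is to carry out each contraction in the form ``$u$-contracted object $=$ orthogonal part $+\ \epsilon(\cdots)u$'' and to check that the free-$u$ terms cancel at each stage, which is forced by the orthogonality properties already established in Lemma \ref{decom1}. The slightly more delicate piece is reorganising the four terms in $\V$ into the two antisymmetrised products of \eqref{Rel2}, where one must use the symmetry of $\para{A}$ and $\para{B}$ to pair them correctly.
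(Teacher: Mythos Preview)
Your proposal is correct and follows essentially the same route as the paper: both use the contraction identity $u^{\alpha}S_{\alpha\beta}=\perpara{S}_{\beta}+\epsilon\,\per{S}\,u_{\beta}$ to read off $(A\owedge B)^{\perp}$ and $(A\owedge B)^{\perp\parallel}$ directly from the Kulkarni--Nomizu definition, then obtain $(A\owedge B)^{\parallel_u}$ from $(A\owedge B)^{\parallel}=\para{A}\owedge\para{B}$ via \eqref{defTperp}. The only cosmetic difference is that the paper first computes the single contraction $u^{\alpha}(A\owedge B)_{\alpha\beta\mu\nu}$ and then extracts both $\V$ (by projecting) and $\Z$ (by a further $u$-contraction) from it, whereas you compute $\Z$ and $\V$ independently.
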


\begin{proof}
  We first note that the full projection of $A \owedge B$ is clearly
  \begin{align*}
    (A \owedge B)^{\parallel} = \para{A} \owedge \para{B}
  \end{align*}
  so \eqref{Rel3} follows from \eqref{defTperp} once we establish \eqref{Rel1}.
  To prove \eqref{Rel1}-\eqref{Rel2} we start by noting that for any symmetric
  tensor $S \in S^{(2)}$ the contraction with $u$ decomposes as
  \begin{align*}
    S_{\beta\nu} u^{\nu} = \perpara{S}_{\beta} + \epsilon \per{S} u_{\beta}.
  \end{align*}
  Inserting  this into the definition \eqref{KN} of the Kulkarni-Nomizu product
    yields
  \begin{align*}
    (A \owedge B)_{\amu} u^{\alpha} =
    2 \left ( \perpara{A}_{[\mu} + \epsilon \per{A} u_{[\mu} \right )
        B_{\nu]\beta}+
      2 \left ( \perpara{B}_{[\mu} + \epsilon \per{B} u_{[\mu} \right )
          A_{\nu]\beta}.
  \end{align*}
  Since $(A \wedge B)^{\perp\parallel}$ is the full projection of this tensor, expression \eqref{Rel2} follows. For \eqref{Rel1}, we simply contract with $u^{\mu}$ and project the remaining indices to get
  \begin{align*}
    (A \owedge B){}_{\alpha\rho\mu\kappa} u^{\alpha} \proju^{\rho}_{\beta} \,
    u^{\mu} \proju^{\kappa}_{\nu}=
    \per{A} \para{B}_{\beta\nu}
    - \perpara{A}{}_{\nu} \perpara{B}{}_{\beta}
  +  \per{B} \para{A}_{\beta\nu}
    - \perpara{B}{}_{\nu} \perpara{A}{}_{\beta}
  \end{align*}
  which is \eqref{Rel1}  (we could also proceed without projecting the remaining indices, the calculation would be somewhat longer, but of course the result is the same).
  \end{proof}
  
\begin{corollary}
  \label{decom3}
Let $A \in S^{(2)}(V)$. Then $A \owedge \g$ decomposes according to
Lemma \ref{decom1} as
\begin{align*}
  ( A \owedge \g) =
  ( A \owedge \g)^{\parallel_u}
  + \epsilon \bm{u} \circledast (A \owedge \g)^{\perp \parallel}
  + (A \owedge \g)^{\perp} \owedge \hu
\end{align*}
where
\begin{align}
  (A \owedge \g)^{\perp} & := \per{A} \proju + \epsilon \para{A}, \nonumber \\
    (A \owedge \g)^{\perp \parallel}{}_{\beta\mu\nu} & :=
  \perpara{A}_{\mu} \proju_{\nu\beta}-
  \perpara{A}_{\nu} \proju_{\mu\beta}, \label{perparaA} \\
  ( A \owedge \g)^{\parallel_u} & :=
  \frac{1}{\n-3} \left (
  (\n-2)  \para{A} + \epsilon \per{A} \proju \right )\owedge \proju. \nonumber
  \end{align}
\end{corollary}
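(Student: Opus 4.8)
The plan is to obtain Corollary~\ref{decom3} as the immediate specialization of Lemma~\ref{decom2} to the case $B=\g$. The only ingredient needed is the tangential--normal splitting of the metric itself: since $\g=\proju+\epsilon\,\bm{u}\otimes\bm{u}$, the definitions of $\para{S}$, $\perpara{S}$ and $\per{S}$ give at once
\begin{align*}
\para{\g}=\proju,\qquad \perpara{\g}=0,\qquad \per{\g}=\epsilon .
\end{align*}
Substituting these three values into the formulas of Lemma~\ref{decom2} produces every claimed expression, so nothing beyond bookkeeping is really involved.

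Concretely, I would first put $B=\g$ in \eqref{Rel1}, which gives $(A\owedge\g)^{\perp}=\per{A}\para{\g}+\per{\g}\para{A}-2\,\perpara{A}\otimes_s\perpara{\g}=\per{A}\proju+\epsilon\,\para{A}$, and then in \eqref{Rel2}, which gives $(A\owedge\g)^{\perp\parallel}{}_{\beta\mu\nu}=2\,\perpara{A}{}_{[\mu}\para{\g}{}_{\nu]\beta}+2\,\perpara{\g}{}_{[\mu}\para{A}{}_{\nu]\beta}=2\,\perpara{A}{}_{[\mu}\proju_{\nu]\beta}$, and expanding the antisymmetrization reproduces \eqref{perparaA}. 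For the parallel part I would substitute into \eqref{Rel3} to get
\begin{align*}
(A\owedge\g)^{\parallel_u}
=\para{A}\owedge\proju+\frac{\epsilon}{\n-3}\bigl(\per{A}\proju+\epsilon\,\para{A}\bigr)\owedge\proju ,
\end{align*}
and then collect the two $\para{A}\owedge\proju$ contributions using $1+\frac{1}{\n-3}=\frac{\n-2}{\n-3}$, which yields the stated $\frac{1}{\n-3}\bigl((\n-2)\,\para{A}+\epsilon\,\per{A}\proju\bigr)\owedge\proju$.

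There is no genuine obstacle here: the argument is a one-line substitution followed by an elementary rearrangement, and the hypothesis $\n\geq4$ inherited from Lemma~\ref{decom1} guarantees $\n-3\neq0$, so all coefficients are well defined. As a consistency check one could instead bypass Lemma~\ref{decom2} and recompute $\Z$, $\V$ and $\Tperpu$ directly from their defining formulas \eqref{defZ}, \eqref{defV}, \eqref{defTperp} applied to $T=A\owedge\g$, using the definition \eqref{KN} of the Kulkarni--Nomizu product; this route is somewhat longer but must of course give the same result.
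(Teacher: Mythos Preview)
Your proposal is correct and follows exactly the same approach as the paper: both obtain the corollary by specializing Lemma~\ref{decom2} to $B=\g$ using the elementary facts $\para{\g}=\proju$, $\perpara{\g}=0$, $\per{\g}=\epsilon$. The paper's proof is in fact more terse than yours, stating only these three identities and leaving the substitution implicit.
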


\begin{proof}
Immediate from Lemma \ref{decom2} and the following simple facts
\begin{align*}
\para{\g} = \proju, \qquad \quad
\perpara{\g} =0, \qquad \quad 
\per{\g} = \epsilon.
% , \qquad \quad
% \traceh \para{\g} = \n -1.
\end{align*}
\end{proof}

We can now use Lemma \ref{decom1} to
obtain explicit expressions for $\Sch(T)$ and
$\Weyl(T)$ in terms of $\Tperp$, $\V$ and $\Z$, for any tensor
$T \in \R(V)$.
\begin{proposition}
\label{DecomSchWeyl-type}
  In the setup of Lemma \ref{decom1} and with the notation above, assume $\n \geq 4$. Then, the
  following identities
  hold
  \begin{align}
    \Sch(T) = & \frac{1}{\n-2} \left ( \epsilon \cir{T^{\perp}} + \cir{\Ric(T^{\parallel})}
    \right ) + \frac{\Scal(T^{\parallel})}{2 (\n-1)(\n-2)} \proju
                  - \frac{2 \epsilon}{\n-2} \bm{u} \otimes_s  \trV
 + \frac{1}{\n-1} \left ( \trZZ - \frac{\epsilon \Scal(T^{\parallel})}{2 (\n-2)} \right )
    \bm{u} \otimes \bm{u}
    \label{SchoutenDecom1} \\
    \Weyl(T) = & \Weyl_u(T^{\parallel} ) + \epsilon \bm{u}
    \circledast \circV
    + \frac{1}{\n -2} \left ( (\n-3) \circZZ   - \epsilon \cir{\Ric(T^{\parallel})}
    \right ) \owedge \hu. \label{WeylDecom1}
%                              \right )_{\amu} 
%
%
%    + \Z \owedge \hu. \label{decomu\\
%    &W_{\alpha\beta\mu\nu}
%                              + 2 \epsilon \left ( \u_{[\alpha} \circT_{\beta]\mu\nu} + \u_{[\mu} \circT_{\nu]\alpha\beta} \right )
                                  \end{align}
\end{proposition}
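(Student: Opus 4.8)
The plan is to establish the two identities in turn, in both cases taking the canonical decomposition \eqref{decomu} of $T$ provided by Lemma \ref{decom1} as the point of departure. For \eqref{SchoutenDecom1} I would first compute $\Ric(T)$ and $\Scal(T)$ from \eqref{decomu} and then substitute into the definition \eqref{defsch}. The traces of $\epsilon\,\bm{u}\circledast\V$ and of $\Z\owedge\hu$ are exactly the ones already obtained in the proof of Lemma \ref{decom1}; the only additional ingredient is the trace of $\Tperpu=\Tperp+\tfrac{\epsilon}{\n-3}\,\Z\owedge\proju$ (cf.\ \eqref{defTperp}), and since $\Z$ and $\proju$ are completely orthogonal to $u$ the trace of the Kulkarni--Nomizu product gives $\Ric(\Z\owedge\proju)=\trZZ\,\proju+(\n-3)\,\Z$ (using $\traceh\proju=\n-1$ and $\Z_{\beta}{}^{\rho}\proju_{\rho\nu}=\Z_{\beta\nu}$). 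Collecting terms, the $\proju$-contributions arising from the $\tfrac{\epsilon}{\n-3}\,\Z\owedge\proju$ piece cancel against the $-\tfrac{\epsilon}{\n-3}\,\proju$ contained in $\hu$, leaving $\Ric(T)=\Ric(\Tperp)+\epsilon\,\Z-2\epsilon\,\bm{u}\otimes_s\trV+\trZZ\,\bm{u}\otimes\bm{u}$ and hence $\Scal(T)=\Scal(\Tperp)+2\epsilon\,\trZZ$. Plugging these into \eqref{defsch}, writing $\g=\proju+\epsilon\,\bm{u}\otimes\bm{u}$, and isolating trace parts via $\Z=\circZZ+\tfrac{\trZZ}{\n-1}\proju$ and $\Ric(\Tperp)=\cir{\Ric(\Tperp)}+\tfrac{\Scal(\Tperp)}{\n-1}\proju$, the terms regroup into \eqref{SchoutenDecom1}.

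For \eqref{WeylDecom1} I would use that $\Weyl(T)\in\WT(V)$, so that Lemma \ref{decom1} applies to $\Weyl(T)$ itself and produces $\Weyl(T)=\Weyl(T)^{\parallel_u}+\epsilon\,\bm{u}\circledast\Weyl(T)^{\perp\parallel}+\Weyl(T)^{\perp}\owedge\hu$, the three pieces being given by \eqref{defZ}--\eqref{defTperp} applied to $\Weyl(T)$. From $\Weyl(T)=T-\Sch(T)\owedge\g$ (definition \eqref{defweyl}) and the definitions \eqref{defZ}--\eqref{defV} one immediately gets $\Weyl(T)^{\perp}=\Z-(\Sch(T)\owedge\g)^{\perp}$ and $\Weyl(T)^{\perp\parallel}=\V-(\Sch(T)\owedge\g)^{\perp\parallel}$. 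Computing the second terms with Corollary \ref{decom3}, feeding in the tangential, mixed and normal components of $\Sch(T)$ as read off from the already-proved \eqref{SchoutenDecom1}, and again substituting $\Z=\circZZ+\tfrac{\trZZ}{\n-1}\proju$ and $\Ric(\Tperp)=\cir{\Ric(\Tperp)}+\tfrac{\Scal(\Tperp)}{\n-1}\proju$, the first collapses to $\tfrac{1}{\n-2}\big((\n-3)\,\circZZ-\epsilon\,\cir{\Ric(\Tperp)}\big)$ and the second to $\circV$, which matches \eqref{T1} directly.

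It then remains to identify $\Weyl(T)^{\parallel_u}$ with $\Weyl_u(\Tperp)$. Taking the full orthogonal projection of $\Weyl(T)=T-\Sch(T)\owedge\g$ (the full projection of $\Sch(T)\owedge\g$ equals $\para{\Sch(T)}\owedge\proju$) together with \eqref{defTperp} — and noting that the $\parallel_u$-part of $\Sch(T)\owedge\g$ in Corollary \ref{decom3} is itself of the form $(\,\cdot\,)\owedge\proju$ — shows $\Weyl(T)^{\parallel_u}=\Tperp+C\owedge\proju$ for some $C\in S^{(2)}(V)$ completely orthogonal to $u$, whereas $\Weyl_u(\Tperp)=\Tperp-\Sch_u(\Tperp)\owedge\proju$ by \eqref{SecDec}. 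Hence $\Weyl(T)^{\parallel_u}-\Weyl_u(\Tperp)$ is of the form $(\,\cdot\,)\owedge\proju$ and simultaneously belongs to $\WT(\Pi_u)$, being the difference of two elements of $\WT(\Pi_u)$. By the argument of Remark \ref{remarkpuretrace} transported to the $(\n-1)$-dimensional space $\Pi_u$ — a Riemann-type tensor on $\Pi_u$ that is at the same time pure-trace and trace-free must vanish — this difference is zero, which gives \eqref{WeylDecom1}.

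The conceptual content is light once Lemma \ref{decom1}, Corollary \ref{decom3} and the pure-trace remark are available. The main obstacle, and the place where sign or normalization errors are most likely, is the bookkeeping in the trace computations and in Corollary \ref{decom3}: keeping careful track of every $\proju$ and $\bm{u}\otimes\bm{u}$ term, handling the normalizations in \eqref{P1}--\eqref{T1}, and checking that all the cancellations — in particular the one induced by the $-\tfrac{\epsilon}{\n-3}\,\proju$ piece of $\hu$ — occur exactly as claimed.
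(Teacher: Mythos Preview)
Your proof is correct. The main difference from the paper's argument lies in how you obtain \eqref{SchoutenDecom1}: you compute $\Ric(T)$ and $\Scal(T)$ directly from the decomposition \eqref{decomu} of $T$ and then substitute into the definition \eqref{defsch}, whereas the paper decomposes the identity $\Weyl(T)=T-\Sch(T)\owedge\g$ via Lemma \ref{decom1} and Corollary \ref{decom3}, imposes the trace-free conditions on the three pieces $\Weyl(T)^{\parallel_u}$, $\Weyl(T)^{\perp\parallel}$, $\Weyl(T)^{\perp}$, and solves the resulting linear system (equations \eqref{Cond1} and \eqref{Cond3}) for the components $\Sch(T)^{\perp}$, $\traceh(\Sch(T)^{\parallel})$, $\Sch(T)^{\parallel}$. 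Your route is more elementary and keeps the two identities logically separate; the paper's route extracts both simultaneously from the same set of equations, which is slightly more economical but less transparent. For \eqref{WeylDecom1} the two arguments essentially coincide: both identify $\Weyl(T)^{\parallel_u}$ with $\Weyl_u(\Tperp)$ by the uniqueness expressed in Remark \ref{remarkuniqTperp} (your phrasing via Remark \ref{remarkpuretrace} transported to $\Pi_u$ is equivalent), and both compute $\Weyl(T)^{\perp}$ and $\Weyl(T)^{\perp\parallel}$ by subtracting the pieces of $\Sch(T)\owedge\g$ given by Corollary \ref{decom3} from $\Z$ and $\V$.
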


\begin{proof}
  We apply Lemma \ref{decom1} both to the left-hand side and to the right-hand side of
  $\Weyl(T) = T - \Sch(T) \owedge \g$ and use Corollary \ref{decom3} in the
  right-hand side. This yields
  \begin{align}
\Weyl(T)  = &    \Weyl(T)^{\parallel_u} +
\epsilon \bm{u} \circledast \Weyl(T)^{\perp \parallel}
+ \Weyl(T)^{\perp}
\owedge \hu \label{firstline} \\
=  & T^{\parallel_u}  - \frac{1}{\n-3} \left (
  (\n-2)  \para{\Sch(T)} + \epsilon \per{\Sch(T)} \proju \right )\owedge \proju 
+ \epsilon \bm{u} \circledast \left ( T^{\perp \parallel} - \left
( \Sch(T) \owedge \g \right )^{\perp \parallel} \right ) \nonumber
\\ & + \left ( T^{\perp}  -  \Sch(T)^{\perp} \proju - \epsilon
\Sch(T)^{\parallel} \right ) \owedge \hu.  \nonumber 
  \end{align}
  By uniqueness of the decomposition, it follows
  \begin{align}
    \Weyl(T)^{\parallel_u} & = T^{\parallel_u}  - \frac{1}{\n-3} \left (
    (\n-2)  \para{\Sch(T)} + \epsilon \per{\Sch(T)} \proju \right )\owedge \proju \nonumber \\
    & = T^{\parallel}
    - \frac{1}{\n -3} \left ( - \epsilon \T^{\perp}
    + (\n-2)  \para{\Sch(T)} + \epsilon \per{\Sch(T)} \proju \right )\owedge \proju,   \label{Weyl1} \\
    \Weyl(T)^{\perp \parallel} &= T^{\perp \parallel} - \left
( \Sch(T) \owedge \g \right )^{\perp \parallel},  \label{Weyl2} \\
    \Weyl(T)^{\perp} & = T^{\perp}  -  \Sch(T)^{\perp} \proju - \epsilon
\Sch(T)^{\parallel}. \label{Weyl3}
      \end{align}
 Also by Lemma \ref{decom1} we know that
  $\Weyl(T)^{\parallel_u}$, $\Weyl(T)^{\perp \parallel}$ and
  $\Weyl(T)^{\perp}$ must all be trace-free. For \eqref{Weyl2}, and given that
  $( \Sch(T) \owedge \g)^{\perp \parallel}$ has vanishing trace-free part (see \eqref{perparaA}), this means that
\begin{align}
  \Weyl(T)^{\perp \parallel} &= \cir{T^{\perp \parallel}}, \label{Weylperpar} \\
  \left ( \Sch(T) \owedge \g \right )^{\perp \parallel}_{\mbox{\tiny tr}} & \stackrel{\eqref{perparaA}}{=}
(2 - \n) \Sch(T)^{\perp \parallel} \stackrel{\eqref{Weyl2}}{=} \trV, \qquad
\Longleftrightarrow \qquad
\Sch(T)^{\perp \parallel} = - \frac{\trV}{\n-2}. \label{Schperpar}
\end{align}
The condition that the trace in \eqref{Weyl3} vanishes is
 \begin{align}
   \trZZ - (\n-1) \Sch(T)^{\perp}  -
   \epsilon \traceh ( \Sch(T)^{\parallel} ) =0. \label{Cond1}
 \end{align}
 Concerning \eqref{Weyl1}, by Remark \ref{remarkuniqTperp}  \eqref{SecDec} is the only traceless linear combination of $T^{\parallel}$ completely orthogonal to $u$. Thus, equality \eqref{Weyl1} is equivalent to
% Concerning \eqref{Weyl1}, we have already mentioned there is
%  only one way of extracting the trace to a tensor $T^{\parallel}$, namely
%  \eqref{SecDec}. Thus, equality \eqref{Weyl1} is equivalent to
 \begin{align}
   \Weyl(T)^{\parallel_u} & = \Weyl_u(T^{\parallel}), \label{weylproject}\\
   \Sch_u(T^{\parallel}) & = \frac{1}{\n -3} \left (
   - \epsilon \T^{\perp}
    + (\n-2)  \para{\Sch(T)} + \epsilon \per{\Sch(T)} \proju \right ).\nonumber
 \end{align}
 Inserting the definition \eqref{FirDec} of $\Sch_u(T^{\parallel})$ in the second equality gives
 \begin{align}
 \Ric(\Tperp) 
 - \frac{ \Scal(\Tperp) }{2 (\n -2)} \proju
 = - \epsilon \T^{\perp}
 + (\n-2)  \para{\Sch(T)} + \epsilon \per{\Sch(T)} \proju. \label{perSch}
  \end{align}
 We can decompose $\Ric(T^{\parallel})$  and $\Z$ in terms of its trace and trace-free part as
 \begin{align}
   \Ric(T^{\parallel}) =
   \cir{\Ric(T^{\parallel})} +
   \frac{\Scal (T^{\parallel})}{\n-1} \proju, \qquad \quad
   \Z = \cir{\Z} + \frac{\trZZ}{\n-1} \proju
\label{trace-tracefree}
 \end{align}
 so equation  \eqref{perSch} takes the form
 \begin{align}\label{eqdecsch}
   \cir{\Ric(T^{\parallel})}
   + \frac{(\n -3) \Scal(T^{\parallel})}{2 (\n-1) (\n-2)} \proju
   = - \epsilon \cir{\T^{\perp}} - \frac{\epsilon \trZZ}{\n-1} \proju
   + (\n-2)  \para{\Sch(T)} + \epsilon \per{\Sch(T)} \proju.
 \end{align}
Taking the trace of this equation gives
 \begin{align}
  (\n-2) \traceh ( \Sch(T)^{\parallel})
   + \epsilon (\n-1) \Sch(T)^{\perp}
   - {\epsilon \trZZ} -
   \frac{(\n-3) \Scal (T^{\parallel})}{2  (\n-2)} =0. \label{Cond3}
 \end{align}
 Solving \eqref{Cond1} and  \eqref{Cond3} for $\Sch(T)^{\perp}$ and
 $\traceh ( \Sch(T)^{\parallel})$ yields
 \begin{align}
\Sch(T)^{\perp} = \frac{1}{\n-1} \left ( \trZZ -  
\frac{\epsilon \Scal (\Tperp)}{2 (\n-2)} \right ), \qquad
\traceh ( \Sch(T)^{\parallel}) =  \frac{\Scal (\Tperp)}{2 (\n-2)},
\label{traces}
\end{align}
which inserted back into \eqref{eqdecsch} provides an explicit expression for $\para{\Sch(T)}$
 \begin{align}
   \Sch(T)^{\parallel} = \frac{1}{\n-2} \left ( \epsilon \cir{\Z} + \cir{\Ric(T^{\parallel})}  \right )  + \frac{\Scal (\Tperp)}{2 (\n-1)(\n-2)}\proju.
\label{parallelSch}
 \end{align}
The first claim in the Lemma follows from the decomposition
\begin{align*}
\Sch(T) = \Sch(T)^{\parallel} + 2 \epsilon \bm{u} \otimes_s \Sch(T)^{\perp \parallel}
+ \Sch(T)^{\perp} \bm{u} \otimes \bm{u}
\end{align*}
together with 
\eqref{parallelSch}, \eqref{Schperpar} and \eqref{traces}. For the second part of the lemma observe that, since $\Weyl(T)^{\perp}$ is a traceless tensor, we can remove the trace in the right hand side of \eqref{Weyl3} so that it yields
% The second claim in the lemma is a consequence of \eqref{firstline} together with 
% \eqref{Weylperpar} and \eqref{Weyl3} which takes the form
\begin{align}\label{Weyl4}
\Weyl(T)^{\perp} & = \cir{T^{\perp}}  - \epsilon
\cir{(\Sch(T)^{\parallel})} \stackrel{\eqref{parallelSch}}{=}  \frac{1}{\n-2} \left ( ( \n -3 ) \cir{T^{\perp}} -
\epsilon \cir{\Ric(T^{\parallel})} \right ).
\end{align}
Now \eqref{WeylDecom1} follows by substituting the terms in  \eqref{firstline} according to \eqref{weylproject}, \eqref{Weylperpar} and \eqref{Weyl4}.
\end{proof}

When we deal with tensor fields on $M$ we write $T \in \R(M)$ when
$T|_p  \in \R(T_p M)$ for all $p \in M$. Similar definitions are made for the other types of tensors introduced above.

\section{Decomposition of the Gauss, Codazzi and Ricci identities}\label{secgausscod}

It is well-known that the Riemann tensor  of $(M,g)$  can be expressed in terms of the geometry of the leaves of a
foliation $\{ \Sigma_{\lambda}\} $ of hypersurfaces with non-degenerate first
fundamental form.  Our aim in this section is to derive corresponding expressions for the Weyl tensor  in terms of the geometry of the leaves.

Although our main interest is in foliations by embedded hypersurfaces, there is no increase in complexity in considering leaves
that the are merely injectively immersed, so we work in this setting. By the Fr\"obenius theorem, the foliation can be defined as the integral manifolds of an $(\n-1)$-dimensional distribution $\mathfrak{P}$ defined as the kernel of a smooth nowhere vanishing one-form field $\bmu$. Since by assumption the first fundamental form of each leaf is a metric, $\u$ is nowhere null and we may fix $\bmu$ to be unit $g^{\sharp}(\bmu,\bmu) = \epsilon$,
$\epsilon \in \{ -1,1\}$, which we assume from now on.
%The corresponding vector field is $\u$. When we use abstract
%index notation, we will use  $\u^{\alpha}$ 
%and $\u_{\alpha}$ correspondingly.
The projector orthogonal to $\u^{\alpha}$ is $\proju^{\alpha}_{\beta} = \delta^{\alpha}_{\beta} - \epsilon \u^{\alpha} \u_{\beta}$.  In dimension $\n \geq 4$ we also introduce, as before, the symmetric $(0,2)$-tensor field
\begin{align*}
\hu := \bm{u} \otimes \bm{u} - \frac{\epsilon}{\n-3} \proju.
\end{align*}

We write $\nabla$ for the Levi-Civita derivative of $g$ and
$\Riem_g$ for the corresponding Riemann $(0,4)$-tensor. % (the sign convention follows e.g. \cite{}). 
The Ricci tensor  $\Ric_g$ is the trace of $\Riem_g$ (in the first and third indices) and the scalar curvature $\Scal_g$ is the trace of $\Ric_g$. In index notation we write, respectively, $R_{\alpha\beta\mu\nu}$, $R_{\beta\nu}$ and $R$ to denote them, and we also adopt the sign convention
\begin{equation}
      R^\alpha{}_{\beta\mu\nu} X^\beta = \nabla_\mu \nabla_\nu X^\alpha - \nabla_\nu \nabla_\mu X^\alpha.
     \end{equation}
The Schouten tensor of $g$ is
\begin{align}
  \Sch_g  = \frac{1}{\n-2} \left (
  \Ric_g - \frac{\Scal_g}{2 (\n-1)} g \right ) = \frac{1}{\n-2} \circRic_g + \frac{\Scal_g}{2 \n (\n-1)} g.
  \label{def:L}
\end{align}
In index notation we write $L_{\alpha\beta}$ for the Schouten tensor. Note that the
Ricci tensor can be expressed in terms of the Schouten tensor
as
\begin{align}
  \Ric_g = (\n-2) \Sch_g + \tr_g (\Sch_g) g.
  \label{RicSch}
\end{align}
The Weyl tensor of $g$ is
\begin{align}
  \Weyl_g = \Riem_g - \Sch_g \owedge g. \label{Weyl}
\end{align}
%\begin{align}
%    - g_{\alpha\mu} L_{\beta\nu}
%  + g_{\alpha\nu} L_{\beta\mu} 
%  - L_{\alpha\mu} g_{\beta\nu} 
%  + L_{\alpha\nu} g_{\beta\mu}.
%  \label{Weyl}
% \end{align}
which we write  as $C_{\anu}$ in index notation.

As we shall recall later, the expressions of the curvature tensor $\Riem_g$ in terms on the geometry
of the leaves arise via the Gauss, Codazzi and Ricci identities which involve, respectively, the totally tangential, the one normal-three tangential and the two normal-two tangential components of the Riemann tensor. Using the same notation as in the previous section, we therefore introduce the following tensors
\begin{align}
\Rtt_{\alpha\beta\mu\nu}
    &:= \proju_{\alpha}^{\gamma}
      \proju_{\beta}^\delta \proju_{\mu}^\rho \proju_{\nu}^{\sigma}
  R_{\gamma\delta\rho\sigma}, \nonumber \\
    \Rnt_{\beta\mu\nu} & :=\u^{\alpha}  \proju_{\mu}^\rho \proju_{\nu}^{\sigma} R_{\alpha\delta\rho\sigma}, \label{tensorsDecom} \\
  \Rnn_{\beta\nu} & := \u^{\alpha}  \u^{\mu} R_{\alpha\beta\mu\nu}, \nonumber
\end{align}
which are orthogonal to $\u^{\delta}$ in all indices.  All the notation introduced in the previous section applies verbatim to these tensor fields.  Directly from Proposition \ref{DecomSchWeyl-type} we get an explicit decomposition for the Schouten and Weyl tensors of $(M,g)$.
\begin{lemma}
Assume that $(M,g)$ has dimension $\n \geq 4$. Then the 
Schouten tensor  and the Weyl tensor of $g$ admit the following  decomposition
  \begin{align}
        \Sch_g = & \frac{1}{\n-2} \left ( \epsilon \circP + \circQ
    \right ) + \frac{\Q}{2 (\n-1)(\n-2)} \proju
                  - \frac{2 \epsilon}{\n-2} \bm{u} \otimes_s  \trT
 + \frac{1}{\n-1} \left ( \trP - \frac{\epsilon \Q}{2 (\n-2)} \right )
    \bm{u} \otimes \bm{u}
    \label{SchoutenDecom} \\
    %L_{\beta\nu} = & \frac{1}{\n-2} \left ( \epsilon \circP_{\beta\nu} + \circQ_{\beta\nu}
    %\right ) + \frac{\Q}{2 (\n-1)(\n-2)} \proju_{\beta\nu}
    %- \frac{\epsilon}{\n-2} \left ( \u_{\beta} \trT_{\nu}
    %+ \u_{\nu} \trT_{\beta} \right ) \nonumber \\
    %& + \frac{1}{\n-1} \left ( \trP - \frac{\epsilon \Q}{2 (\n-2)} \right )
    %\u_{\beta} \u_{\nu},
    %\label{SchoutenDecom2} \\
\Weyl_g & = \Weyl_u (R^{\parallel}) + \epsilon \bm{u} 
\circledast \circT 
+ \frac{1}{\n -2} \left ( (\n-3) \circP   - \epsilon \circQ  )
                              \right ) \owedge \hu. \label{WeylDecom}
%C_{\alpha\beta\mu\nu} = & W_{\alpha\beta\mu\nu}
%                              + 2 \epsilon \left ( \u_{[\alpha} \circT_{\beta]\mu\nu} + \u_{[\mu} \circT_{\nu]\alpha\beta} \right )
%
%                              \nonumber \\
%& + \frac{2}{\n-2} \left ( (\n-3) \u_{\alpha} \u_{[\mu} - \epsilon \proju_{\alp%ha[\mu}
%    \right ) \left (  \circP_{\nu] \beta}  - \frac{\epsilon}{\n-3} \circQ_{\nu]% \beta} \right ) \nonumber \\
%& - \frac{2}{\n-2} \left ( (n-3) \u_{\beta} \u_{[\mu} - \epsilon \proju_{\beta[%\mu}
%    \right ) \left ( \circP_{\nu] \alpha}  - \frac{\epsilon}{\n-3} \circQ_{\nu]% \alpha} \right )      
  \end{align}
   \end{lemma}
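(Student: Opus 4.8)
The plan is to recognize this Lemma as a direct specialization of Proposition \ref{DecomSchWeyl-type}. The abstract vector space $(V,\g)$ is replaced pointwise by $(T_pM, g|_p)$, the abstract unit vector $u$ is replaced by the (metrically associated) unit vector field $\u$, and the abstract Riemann-type tensor $\T\in\R(V)$ is replaced by the actual Riemann tensor $\Riem_g\in\R(M)$ of $g$. One checks that the abstract trace operations in Section \ref{secalgebraic} match the genuine curvature traces: $\Ric(\Riem_g)=\Ric_g$, $\Scal(\Riem_g)=\Scal_g$, $\Sch(\Riem_g)=\Sch_g$ and $\Weyl(\Riem_g)=\Weyl_g$, the last being precisely \eqref{Weyl}.

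Next I would match the notation. The three projected pieces $\Rtt,\Rnt,\Rnn$ defined in \eqref{tensorsDecom} are exactly the tensors $\Tperp$, $\V$ and $\Z$ of Lemma \ref{decom1} applied to $T=\Riem_g$ (compare \eqref{defZ}--\eqref{defTperp} with \eqref{tensorsDecom}, noting the symmetries of $\Riem_g$ give the same alternative expressions used in the proof of Lemma \ref{decom1}). Hence the hatted quantities $\circP=\circ{\Rnn}$, $\circT=\circ{\Rnt}$ from \eqref{P1}--\eqref{T1} correspond to $\circZZ$, $\circV$; the objects $\circQ=\circ{\Ric(\Rtt)}$ and $\Q=\Scal(\Rtt)$, together with $\trP=\traceh\Rnn$ and $\trT$ (the $\proju$-trace of $\Rnt$), correspond respectively to $\cir{\Ric(T^{\parallel})}$, $\Scal(T^{\parallel})$, $\trZZ$ and $\trV$; and $\Weyl_u(R^{\parallel})$ is $\Weyl_u(\Tperp)$ from \eqref{SecDec}. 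With these identifications, \eqref{SchoutenDecom} is literally \eqref{SchoutenDecom1} and \eqref{WeylDecom} is literally \eqref{WeylDecom1}.

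The only genuine point to verify is that Proposition \ref{DecomSchWeyl-type} may be applied pointwise: at each $p\in M$ the tensor $\Riem_g|_p$ lies in $\R(T_pM)$, the vector $\u|_p$ is unit with $\g^\sharp(\bm{u},\bm{u})=\epsilon$ by the normalization fixed in Section \ref{secgausscod}, and $\n\ge4$ is assumed; so all hypotheses hold. Since the decomposition of Lemma \ref{decom1} and the formulas of Proposition \ref{DecomSchWeyl-type} are algebraic and depend smoothly on $(T,u)$, the resulting identities hold as identities between smooth tensor fields on $M$. I do not anticipate any real obstacle here; the ``hard part'' is purely bookkeeping: being careful that each symbol in \eqref{SchoutenDecom}--\eqref{WeylDecom} is the image under these substitutions of the corresponding symbol in \eqref{SchoutenDecom1}--\eqref{WeylDecom1}, and in particular that the $\Sch$ and $\Weyl$ operations applied to $\Riem_g$ reproduce the standard Schouten and Weyl tensors of $g$ as defined in \eqref{def:L} and \eqref{Weyl}.

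\begin{proof}
Fix $p\in M$. Apply Proposition \ref{DecomSchWeyl-type} with $(V,\g)=(T_pM,g|_p)$, with $u=\u|_p$ (which is unit of norm $\epsilon$ by the normalization fixed above), and with $T=\Riem_g|_p\in\R(T_pM)$; the hypothesis $\n\ge4$ holds by assumption. By construction the trace operations of Section \ref{secalgebraic} applied to $\Riem_g$ give $\Ric(\Riem_g)=\Ric_g$, $\Scal(\Riem_g)=\Scal_g$, hence $\Sch(\Riem_g)=\Sch_g$ by \eqref{defsch} and \eqref{def:L}, and $\Weyl(\Riem_g)=\Riem_g-\Sch_g\owedge g=\Weyl_g$ by \eqref{defweyl} and \eqref{Weyl}. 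Comparing the definitions \eqref{defZ}--\eqref{defTperp} of $\Z,\V,\Tperp$ (using the symmetries of $\Riem_g$, which yield the alternative expressions employed in the proof of Lemma \ref{decom1}) with \eqref{tensorsDecom}, we identify $\Z=\Rnn$, $\V=\Rnt$, $\Tperp=\Rtt$. Consequently the hatted and traced quantities of \eqref{P1}--\eqref{T1} become $\circZZ=\circP$, $\circV=\circT$, $\trZZ=\trP$, $\trV=\trT$, while $\cir{\Ric(T^{\parallel})}=\circQ$, $\Scal(T^{\parallel})=\Q$, and $\Weyl_u(T^{\parallel})=\Weyl_u(R^{\parallel})$ as in \eqref{SecDec}. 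Substituting all of these into \eqref{SchoutenDecom1} and \eqref{WeylDecom1} gives \eqref{SchoutenDecom} and \eqref{WeylDecom} at $p$. Since $p\in M$ was arbitrary and every operation involved is algebraic and smooth in $(\Riem_g,\u)$, the identities hold as identities of smooth tensor fields on $M$.
\end{proof}
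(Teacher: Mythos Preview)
Your proposal is correct and follows exactly the paper's approach: the paper states the lemma as an immediate consequence of Proposition \ref{DecomSchWeyl-type} applied with $T=\Riem_g$, and your proof carries out precisely this specialization together with the notational identifications between the abstract quantities $\Z,\V,\Tperp,\circZZ,\circV,\trZZ,\trV,\cir{\Ric(T^{\parallel})},\Scal(T^{\parallel})$ and their geometric counterparts $\Rnn,\Rnt,\Rtt,\circP,\circT,\trP,\trT,\circQ,\Q$.
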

\begin{remark}
  This lemma does not require $\bmu$ to be integrable. It only uses that
  $\bmu$ is unit with square norm $\epsilon$.
  %$g^{\sharp}(u,u)= \epsilon$.
\end{remark}

\begin{remark}\label{remarkWeylg}
  We quote the following identities for later use. They are all straightforward consequences of the lemma (and actually all except \eqref{r6} have been obtained
  in the course of the proof of Proposition \ref{DecomSchWeyl-type}).
   \begin{align}
    \traceh \Sch_g^{\parallel}  & \stackrel{{\eqref{traces}}}{=} \frac{\Q}{2 (\n-2)}, \label{r1}\\
    \Sch_g^{\perp}  & \stackrel{{\eqref{traces}}}{=} \frac{1}{\n-1} \left ( \trP - \frac{\epsilon \Q}{2 (\n-2)} \right ), \label{r2}\\
    \cir{\Sch_g^{\parallel}}  & \stackrel{{\eqref{parallelSch}}}{=}
    \frac{1}{\n-2} \left ( \epsilon \circP + \circQ \right ),\label{r3}\\
     \Sch_g^{\perp \parallel} & \stackrel{{\eqref{Schperpar}}}{=} -\frac{1}{\n-2} \trT,  \label{r4}\\
\Weyl_g^{\, \perp} & \stackrel{{\eqref{Weyl4}}}{=} \frac{1}{\n-2}
\left ( (\n-3) \circP - \epsilon \circQ \right )  = \circP - \epsilon \cir{\Sch_g^{\parallel}},\label{r5}\\
     \Weyl_g^{\, \parallel}
         & = \Weyl_u (R^{\parallel}) 
+ \frac{1}{\n -2} \left ( \frac{\circQ}{\n-3} - \epsilon \circP  \right ) 
\owedge \proju \label{r6}\\
\Weyl_g^{\, \perp \parallel} & \stackrel{{\eqref{Weylperpar}}}{=}  \circT.\label{r7}
%        \proju^{\rho}_{\alpha} \proju^{\sigma}_{\beta} \proju^{\kappa}_{\mu}
%    \proju^{\delta}_{\nu} C_{\rho\sigma\kappa\delta}& = W_{\alpha\beta\mu\nu}
%    + \frac{2\proju_{\alpha[\mu}}{\n-2} \left ( 
%      \frac{1}{\n-3} \circQ_{\nu] \beta} - \epsilon \circP_{\nu]\beta} \right )
%    - \frac{2\proju_{\beta[\mu}}{\n-2} \left ( 
%      \frac{1}{\n-3} \circQ_{\nu] \alpha} - \epsilon \circP_{\nu]\alpha} \right ),
%      \\
%    \u^{\alpha} C_{\alpha\beta\kappa\delta} \proju^{\kappa}_{\mu} \proju^{\delt%a}_{\nu}& = \circT_{\beta\mu\nu},\\
%    \u^{\alpha} \u^{\mu} C_{\alpha\beta\mu\nu}& 
%= \frac{1}{\n-2}
%                                                \left ( (\n-3) \circP_{\beta\nu%} - \epsilon \circQ_{\beta\nu} \right ) \\
%    & = \circP_{\beta\nu} - \epsilon \circL_{\beta\nu}.
      \end{align}
  \end{remark}
  We now impose the condition that $\bmu$ is integrable, i.e. $\bmu \wedge d \bmu =0$. Define the acceleration $a = \nabla_{\u} \u$  and the second fundamental form  $\K_{\alpha\beta} := \proju_{\alpha}^{\mu} \proju_{\beta}^{\nu} \nabla_{\mu}  \u_{\nu}$, which is  symmetric  as a consequence of integrability.  The following decomposition holds
  \begin{align*}
    \nabla_{\alpha} \u_{\beta} = \epsilon \u_{\alpha} a_{\beta} + K_{\alpha\beta}.
  \end{align*}
  Recall that for tensor fields $T_{\alpha_1 \cdots \alpha_p}{}^{\beta_1 \cdots \beta_q}$ totally
  orthogonal to $u$ and vector fields $X$ tangential to the leaves (i.e. also
 $u_{\alpha} X^{\alpha}=0$)  the operation  
  \begin{align*}
    D_X T_{\alpha_1 \cdots \alpha_p}{}^{\beta_1 \cdots \beta_q} : =  \proju^{\mu_1}_{\alpha_1} \cdots
      \proju^{\mu_p}_{\alpha_p} \proju^{\beta_1}_{\nu_1} \cdots \proju^{\beta_q}_{\nu_q} \nabla_X T_{\mu_1 \cdots \mu_p}{}^{\nu_1 \cdots \nu_q}
  \end{align*}
  defines a covariant derivative on each leaf that is torsion-free and satisfies $D_X \proju_{\alpha\beta} =0$, $D_X \proju^{\alpha\beta}=0$. Tensors $T_{\alpha_1 \cdots \alpha_p}{}^{\beta_1 \cdots \beta_q}$ of this form restricted to a leaf $\Sigma_{\lambda_0}$
  are in one-to-one correspondence   with
    tensor fields $T_{i_1 \cdots i_p}{}^{j_1 \cdots j_q}$ in $\Sigma_{\lambda_0}$. We use $\proju$ to denote the induced metric (it is associated to $\proju_{\alpha\beta}$ via this map) and $D$
    the corresponding Levi-Civita covariant derivative. The notation is not ambiguous because the tensor $  D_{\hat{X}} T_{i_1 \cdots i_p}{}^{j_! \cdots j_q}$ is associated to $D_X T_{\alpha_1 \cdots \alpha_p}{}^{\beta_1 \cdots \beta_q}$ provided  $X$ is the push-forward of $\hat{X}$. As usual, we let the context specify what is the precise meaning and, in fact we shall not distinguish between $\hat{X}$ and $X$. The tensor field in $M$ totally orthogonal to $u$ corresponding to the Riemann tensor of {$(\Sigma_{\lambda},\proju)$} is
    denoted by $\Rsig_{\alpha\beta\mu\nu}$. In index-free notation we call
  $\Riemsig, \Ricsig, \Scalsig$ respectively  the  tensors in $(M,g)$ associated to
  the Riemann tensor, Ricci tensor and scalar curvature of the metric $\proju$ in the leaves. We also let
  $\Sch_{\Sigma}$ be the tensor in $(M,g)$ associated to the Schouten tensor of
  the leaves, namely
        \begin{align*}
          \Sch_{\Sigma} =
          \frac{1}{\n-3} \left ( \Ricsig
          - \frac{\Scalsig}{2 (\n-2)} \proju \right ) =
          \frac{1}{\n-3} \cir{\Ricsig}
          + \frac{\Scalsig}{2 (\n-1)(\n-2)} \proju
          \end{align*}
          Similarly  $\Weyl_{\Sigma}$ is the tensor in $(M,g)$ associated to the Weyl tensor of $\proju$ in the leaves.

    With these operations introduced, the Gauss, Codazzi and
  Ricci identities are
  \begin{align*}
    \Rtt_{\alpha\beta\mu\nu} & = \Rsig_{\alpha\beta\mu\nu} - \epsilon
    \K_{\alpha\mu} \K_{\beta\nu}+ \epsilon \K_{\alpha\nu} \K_{\beta\mu}, \hfill && \mbox{(Gauss)}
    \\
    \Rnt_{\beta\mu\nu} & = D_{\nu} \K_{\mu\beta}  - D_{\mu} \K_{\nu\beta}, \hfill &&\mbox{(Codazzi)} \\
      \Rnn_{\beta\nu} & = D_{\beta} a_{\nu} - \epsilon a_{\beta}
      a_{\nu} + \K_{\beta\gamma} \K_{\nu}{}^{\gamma} - \pounds_u
      \K_{\beta\nu}. \hfill && \mbox{(Ricci)}
  \end{align*}
%   \carlos{Sí, Ricci equation vale: https://arxiv.org/pdf/1405.6319}
  Note that the third identity imposes $D_{\alpha} a_{\beta} = D_{\beta} a_{\alpha}$, which means
  that $a_i$ on each leaf is closed, hence locally exact.
%   This will be used later.
Let us now introduce the trace and trace-free part of
  $\K_{\alpha\beta}$, namely the expansion $\H := \traceh \K$ and
  the shear tensor $\circK_{\beta\nu} = \K_{\beta\nu} - \frac{\H}{\n-1} \proju_{\beta\nu}$. The following result decomposes the Gauss, Codazzi and Ricci identities in terms of the trace and trace-free parts.

  \begin{proposition}
    \label{GCR-decomposed}
    Assume that $(M,g)$ is foliated by injectively immersed hypersurfaces with non-degenerate induced first fundamental form. Using the notation introduced before, the Gauss, Codazzi and Ricci identities are equivalent to
    \begin{align}
      \mbox{Gauss} \quad & \Longleftrightarrow \quad \left \{
                     \begin{array}{l}
  \Q  = \Scalsig + \epsilon \traceh \circK^{(2)}
                                  - \frac{\epsilon(\n-2)}{\n-1} H^2 \\
                 \circQ  = \cir{\Ricsig}
                 + \epsilon \left ( \circK^{(2)} - \frac{\traceh \circK^{(2)}}{\n-1}  \proju \right ) - \frac{\epsilon (\n-3)}{\n-1} \H \circK \\
\Weyl_u(\Rtt) = \Weyl_{\Sigma}- \frac{\epsilon}{2} 
\circK \owedge \circK - \frac{\epsilon}{\n-3} \circK^{(2)} \owedge \proju
                       + \frac{ \epsilon \traceh \circK^{(2)}}{2(\n-2)(\n-3)} \proju \owedge \proju                                \end{array} \right . \label{Gauss}\\
      \nonumber \\
      \mbox{Codazzi} \quad &  \Longleftrightarrow \quad \left \{
                             \begin{array}{l}
                                       \circT_{\beta\mu\nu}  =  D_{\nu} \circK_{\mu\beta}  - D_{\mu} \circK_{\nu\beta} + \frac{1}{\n-2} \left ( \proju_{\beta\mu} D_{\delta} \sigma^{\delta}{}_{\nu}
        - \proju_{\beta\nu} D_{\delta} \sigma^{\delta}{}_{\mu} \right )\\
                               \trT_{\nu}  = \frac{\n-2}{\n-1} D_{\nu} H - D_{\delta} \sigma^{\delta}{}_{\nu}                                             \end{array}
      \right .  \label{Codazzi} \\
      \nonumber\\
      \mbox{Ricci} \quad & \Longleftrightarrow \quad \left \{
                             \begin{array}{l}
                                      \trP = D_{\mu} a^{\mu} - \epsilon a_{\mu} a^\mu - \circK{}_{\mu\nu}
                                      \circK{}^{\mu\nu} - \frac{H^2}{\n-1} - \u(H) \\
                                               \circP_{\beta\nu} =
                                               D_{\beta} a_{\nu} - \epsilon a_{\beta}
                                               a_{\nu} + \frac{1}{\n-1}
                                               \left ( - D_{\mu} a^{\mu}
                                               + \epsilon a_{\mu} a^{\mu}
                                               +  \circK{}_{\rho\sigma} \circK{}^{\rho\sigma} \right ) \proju_{\beta\nu} + \circK_{\beta\gamma} \circK{}_{\nu}{}^{\gamma}
                                                   - \pounds_\u \circK_{\beta\nu}
                                             \end{array}  \right . \label{Ricci} 
    \end{align}
    where $\circK^{(2)}$ is the symmetric $(0,2)$-tensor field
      \begin{align}
        \circK^{(2)}_{\beta\nu} = \circK_{\beta\rho} \circK^{\rho}{}_{\nu}. \label{defsigmasq}
      \end{align}
                     \end{proposition}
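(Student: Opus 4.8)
The plan is to prove Proposition \ref{GCR-decomposed} by taking the three classical identities (Gauss, Codazzi, Ricci) as stated just above the proposition and, one at a time, extracting from each equation its trace and its trace-free part with respect to the induced metric $\proju$. The key point is that the left-hand sides of the classical identities are exactly the tensors $\Rtt$, $\Rnt$, $\Rnn$ appearing in \eqref{tensorsDecom}, so the various traces and trace-free pieces ($\Q=\Scal(\Rtt)$, $\circQ=\cir{\Ric(\Rtt)}$, $\Weyl_u(\Rtt)$, $\trT$, $\circT$, $\trP$, $\circP$) are precisely the objects that enter the final statement. Thus the proposition is just the statement that each classical identity is \emph{equivalent} to the collection of its irreducible components, together with explicit evaluation of those components.

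\textbf{Codazzi.} This is the simplest case and I would do it first. From $\Rnt_{\beta\mu\nu}=D_\nu \K_{\mu\beta}-D_\mu\K_{\nu\beta}$, substitute $\K=\circK+\frac{\H}{\n-1}\proju$. The $\H$-terms give $\frac{1}{\n-1}(\proju_{\mu\beta}D_\nu\H-\proju_{\nu\beta}D_\mu\H)$, so $\Rnt\in\Lambda^{(1,2)}(V)$ has trace (over the first two indices, cf.\ \eqref{T1}) $\trT_\nu=\V^{\alpha}{}_{\alpha\nu}$ evaluated on $D_\nu\circK_{\mu\beta}-D_\mu\circK_{\nu\beta}+\frac{1}{\n-1}(\cdots)$; using $\proju^{\mu\beta}D_\nu\circK_{\mu\beta}=D_\nu(\traceh\circK)=0$ and $\proju^{\mu\beta}\proju_{\mu\beta}=\n-1$ one gets $\trT_\nu=\frac{\n-2}{\n-1}D_\nu\H-D_\delta\sigma^\delta{}_\nu$, which is the second line of \eqref{Codazzi}. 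Then $\circT=\Rnt-\frac{1}{\n-2}(-\proju_{\beta\mu}\trT_\nu+\proju_{\beta\nu}\trT_\mu)$ by \eqref{T1}; substituting the computed $\trT$ and simplifying (the $D\H$ pieces cancel against those in $\Rnt$) yields the first line of \eqref{Codazzi}. Equivalence is clear because $\Rnt=\circT+$(the trace part rebuilt from $\trT$) is the unique irreducible decomposition in $\Lambda^{(1,2)}(V)$.

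\textbf{Ricci.} From $\Rnn_{\beta\nu}=D_\beta a_\nu-\epsilon a_\beta a_\nu+\K_{\beta\gamma}\K_\nu{}^\gamma-\pounds_u\K_{\beta\nu}$, take the $\proju$-trace to get $\trP$. Here I would use $\traceh(D_\beta a_\nu)=D_\mu a^\mu$, $\traceh(\K_{\beta\gamma}\K_\nu{}^\gamma)=\K_{\mu\nu}\K^{\mu\nu}=\circK_{\mu\nu}\circK^{\mu\nu}+\frac{\H^2}{\n-1}$, and $\traceh(\pounds_u\K_{\beta\nu})=\pounds_u\H+\frac{2\H^2}{\n-1}$ (the extra $\frac{\H^2}{\n-1}$ coming from $\pounds_u$ not commuting with $\proju$ — this is the one place a small care about $\pounds_u\proju_{\beta\nu}$ matters). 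Collecting gives $\trP=D_\mu a^\mu-\epsilon a_\mu a^\mu-\circK_{\mu\nu}\circK^{\mu\nu}-\frac{\H^2}{\n-1}-u(\H)$. The trace-free part $\circP=\Rnn-\frac{\trP}{\n-1}\proju$ is then obtained by subtracting; decomposing $\K_{\beta\gamma}\K_\nu{}^\gamma$ and $\pounds_u\K_{\beta\nu}$ into their $\circK$, $\H\circK$ and $\H^2\proju$ pieces and combining, one arrives at the second line of \eqref{Ricci}. Again equivalence holds because $(\trP,\circP)$ reconstruct $\Rnn$.

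\textbf{Gauss.} This is the case I expect to be the main obstacle, because it involves applying the full machinery of Proposition \ref{DecomSchWeyl-type} / Remark \ref{remarkWeylg} to the purely tangential tensor $\Rtt=\Rsig-\epsilon\,\K\owedge\K$ (using $\K_{\alpha\mu}\K_{\beta\nu}-\K_{\alpha\nu}\K_{\beta\mu}=\tfrac12(\K\owedge\K)_{\amu}$). One must take $\Scal$, $\cir{\Ric}$ and $\Weyl_u$ of this, which requires the contraction identities for $\K\owedge\K$: $\Ric(\K\owedge\K)_{\beta\nu}=\H\K_{\beta\nu}-\K_{\beta\gamma}\K_\nu{}^\gamma$ and $\Scal(\K\owedge\K)=\H^2-\K_{\mu\nu}\K^{\mu\nu}$, and then rewriting everything in terms of $\circK$, $\circK^{(2)}$, $\traceh\circK^{(2)}$ and $\H$ (so $\K\owedge\K=\circK\owedge\circK+\frac{2\H}{\n-1}\circK\owedge\proju+\frac{\H^2}{(\n-1)^2}\proju\owedge\proju$). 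For the third (Weyl) line one uses that $\Weyl_u$ is linear and annihilates pure-trace pieces $A\owedge\proju$ (Remark \ref{remarkpuretrace}/\ref{remarkuniqTperp}), so $\Weyl_u(\Rtt)=\Weyl_\Sigma-\epsilon\Weyl_u(\K\owedge\K)$, and $\Weyl_u(\K\owedge\K)$ is computed by subtracting its Schouten-Kulkarni-Nomizu correction built from $\Ric(\K\owedge\K)$ and $\Scal(\K\owedge\K)$; the algebra must be organized so that the $\H\circK$ cross-terms and the $\proju\owedge\proju$ terms land with the stated coefficients $-\frac{\epsilon}{\n-3}\circK^{(2)}\owedge\proju$ and $+\frac{\epsilon\,\traceh\circK^{(2)}}{2(\n-2)(\n-3)}\proju\owedge\proju$. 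The bookkeeping of these $n$-dependent coefficients is the delicate part; I would carry it out by first writing $\Rtt$'s $\Ric$ and $\Scal$ (giving the first two lines of \eqref{Gauss} directly, since $\Q,\circQ$ are by definition these traces), and only then assemble $\Weyl_u(\Rtt)=\Rtt-\Sch_u(\Rtt)\owedge\proju$ using \eqref{FirDec}–\eqref{SecDec}, substituting the already-found $\cir{\Ric(\Rtt)}=\circQ$ and $\Scal(\Rtt)=\Q$. Equivalence in all three cases is the trivial observation that a tensor in $\R(\Pi_u)$ (resp.\ $S^{(2)}(\Pi_u)$, $\Lambda^{(1,2)}(\Pi_u)$) is determined by, and determines, its irreducible components.
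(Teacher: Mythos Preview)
Your strategy is exactly the paper's: take each of the three classical identities and split it into trace and trace-free parts with respect to $\proju$, after inserting $K=\circK+\frac{H}{\n-1}\proju$. The organization (Codazzi first, then Ricci, then Gauss with the $\Weyl_u$ machinery) also mirrors the paper's proof.

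Two computational slips to fix. First, in the Ricci step your formula $\traceh(\pounds_u K_{\beta\nu})=u(H)+\frac{2H^2}{\n-1}$ is wrong: since $\pounds_u\proju^{\beta\nu}=-2K^{\beta\nu}$, one has $\proju^{\beta\nu}\pounds_u K_{\beta\nu}=u(H)+2K_{\mu\nu}K^{\mu\nu}=u(H)+2\circK_{\mu\nu}\circK^{\mu\nu}+\frac{2H^2}{\n-1}$; the missing $2|\circK|^2$ is precisely what flips the sign of the $\circK_{\mu\nu}\circK^{\mu\nu}$ term in $\trP$ to the correct $-\circK_{\mu\nu}\circK^{\mu\nu}$. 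The paper avoids this trap by first rewriting $K^{(2)}-\pounds_u K$ entirely in terms of $\circK$ and then using $\proju^{\beta\nu}\pounds_u\circK_{\beta\nu}=2\circK_{\mu\nu}\circK^{\mu\nu}$. Second, in the Gauss step $\Ric(K\owedge K)=2(HK-K^{(2)})$ and $\Scal(K\owedge K)=2(H^2-|K|^2)$, not the expressions without the factor~$2$; this matters when you combine with the $\tfrac{\epsilon}{2}$ prefactor. With these two corrections your sketch goes through verbatim.
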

  \begin{proof}
  It is immediate that the Codazzi identity can be written as
  \begin{align*}
    & \Rnt_{\beta\mu\nu} =  D_{\nu} \circK_{\mu\beta}  - D_{\mu} \circK_{\nu\beta}
      + \frac{1}{\n-1} \left ( D_\nu H \proju_{\mu\beta} - D_{\mu} H \proju_{\nu \beta} \right )\quad \quad
      \Longleftrightarrow \\
    & \left \{ \begin{array}{ll}
        \circT_{\beta\mu\nu} =  D_{\nu} \circK_{\mu\beta}  - D_{\mu} \circK_{\nu\beta} + \frac{1}{\n-2} \left ( \proju_{\beta\mu} D_{\delta} \sigma^{\delta}{}_{\nu}
        - \proju_{\beta\nu} D_{\delta} \sigma^{\delta}{}_{\mu} \right )\\
               \trT_{\nu} = \frac{\n-2}{\n-1} D_{\nu} H - D_{\delta} \sigma^{\delta}{}_{\nu}
             \end{array}
      \right . 
  \end{align*}
  where the equivalence just follows by taking trace and trace-free parts. This establishes \eqref{Codazzi}.
Concerning the Ricci identity, using
  \begin{align*}
    \pounds_{\u} \u_{\alpha} & = \u^{\mu} \nabla_{\mu} u_{\alpha} - \u_{\alpha}
    \nabla_{\mu} \u^{\alpha} = a_{\mu}, \\
    \pounds_{\u} \proju_{\beta\nu} & = \pounds_{\u} \left ( g_{\beta\nu} {-} \epsilon \u_{\beta} \u_{\nu} \right ) =     \nabla_{\beta} \u_\nu
    +\nabla_{\nu} \u_{\beta} {-} \epsilon \left ( a_{\beta} \u_{\nu}
                                + \u_{\beta} a_{\nu} \right ) = 2 \K_{\beta\nu},
  \end{align*}
it becomes
  \begin{align*}    
         &  \Rnn_{\beta\nu} = D_{\beta} a_{\nu} - \epsilon a_{\beta}
    a_{\nu} + \circK_{\beta\gamma} \circK{}_{\nu}{}^{\gamma}
    - \pounds_\u \circK_{\beta\nu}
    - \frac{1}{\n-1} \left ( \frac{H^2}{\n-1} + \u(H) \right )  \proju_{\beta\nu}
    \quad \quad \Longleftrightarrow  \\
  &      \left \{ \begin{array}{l}
                                      \trP = D_{\mu} a^{\mu} - \epsilon a_{\mu} a^\mu - \circK{}_{\mu\nu}
                                      \circK{}^{\mu\nu} - \frac{H^2}{\n-1} - \u(H)
                                               \quad \quad \quad \mbox{(Raychaudhuri)} \\
                                               \circP_{\beta\nu} =
                                               D_{\beta} a_{\nu} - \epsilon a_{\beta}
                                               a_{\nu} + \frac{1}{\n-1}
                                               \left ( - D_{\mu} a^{\mu}
                                               + \epsilon a_{\mu} a^{\mu}
                                               +  \circK{}_{\rho\sigma} \circK{}^{\rho\sigma} \right ) \proju_{\beta\nu} + \circK_{\beta\gamma} \circK{}_{\nu}{}^{\gamma}
                                                   - \pounds_\u \circK_{\beta\nu}.
                                             \end{array}  \right .
  \end{align*}
  where the equivalence uses $\proju^{\beta\nu} \pounds_{\u} \circK_{\beta\nu} = -
  (\pounds_{\u} \proju^{\beta\nu} ) \circK_{\beta\nu} = 2 K^{\beta\nu} \circK_{\beta\nu}
  = 2 \circK^{\beta\nu} \circK_{\beta\nu}$.  This proves \eqref{Ricci}.
  
  We finally deal with the Gauss identity. In index-free notation it takes the form
  \begin{align*}
    \Rtt = \Riemsig - \frac{\epsilon}{2}  K \owedge K,
  \end{align*}
which upon inserting  $K = \circK + \frac{H}{\n-1} \proju$ becomes
      \begin{align}
        \Rtt = \Riemsig - \frac{\epsilon}{2} \circK \owedge \circK
-  \frac{\epsilon H}{(\n-1)} \circK \owedge \proju
        -  \frac{\epsilon H^2}{2(\n-1)^2} \proju \owedge \proju. \label{Gauss2}
      \end{align}
      Taking into account the following expressions for the traces (recall definition \eqref{defsigmasq})
      \begin{equation}
             \Ric(\circK \owedge \circK) = - 2 \circK^{(2)},\qquad
             \Ric(\circK \owedge \proju) = (\n-3) \circK ,\qquad
             \Ric(\proju \owedge \proju) =  2(\n -2) \proju,
      \end{equation}
%       Using definition \eqref{defsigmasq} 
      the  $\proju$-trace of \eqref{Gauss2} reads
  \begin{align}
    & \Ric(\Rtt)  = \Ricsig  + \epsilon \circK^{(2)} - \frac{\epsilon (\n-3)}{\n-1} \H \circK
            - \frac{\epsilon (\n-2)}{(\n-1)^2} \H^2 \proju \quad \quad \Longleftrightarrow \nonumber \\
    & \left \{ \begin{array}{l}
                 \Q = \Scalsig + \epsilon \traceh \circK^{(2)}
                                  - \frac{\epsilon(\n-2)}{\n-1} H^2 \\
                 \circQ = \cir{\Ricsig}
                 + \epsilon \left ( \circK^{(2)} - \frac{\traceh \circK^{(2)}}{\n-1}  \proju \right ) - \frac{\epsilon (\n-3)}{\n-1} \H \circK.
                                 \label{trace-Gauss}
                 \end{array} \right .
      \end{align}
        It only remains to extract the trace-free part of the Gauss identity. 
          Directly from the definition \eqref{SecDec} of the map $\Weyl_u$  we get
          \begin{align*}
            \Weyl_u(\Rtt) = \Weyl_{\Sigma}
            - \frac{\epsilon}{2} \left ( K \owedge K 
            - \Sch_u( K \owedge K) \owedge \proju \right ).
          \end{align*}
          We only need to compute
          \begin{align*}
            \circK \owedge \circK
            - \Sch_u ( \circK \owedge \circK) \owedge \proju & =
\circK \owedge \circK + \frac{2}{\n-3} \circK^{(2)} \owedge \proju
- \frac{\traceh \circK^{(2)}}{(\n-2){(\n-3)}} \proju \owedge \proju,
            \end{align*}
            as the pure trace terms (i.e. the last two terms in \eqref{Gauss2}) do not contribute {(cf. Remark \ref{remarkpuretrace})}. We conclude
            \begin{align}
              \Weyl_u(\Rtt) &= \Weyl_{\Sigma}- \frac{\epsilon}{2} 
\circK \owedge \circK - \frac{\epsilon}{\n-3} \circK^{(2)} \owedge \proju
+ \frac{ \epsilon \traceh \circK^{(2)}}{2(\n-2)(\n-3)} \proju \owedge \proju.            \label{ExprW}
% \\
%                        \W_{\alpha\beta\mu\nu} & = C^{\Sigma}_{\alpha\beta\mu\nu} - \epsilon
%          \left ( 
%          \circK_{\alpha\mu} \circK_{\beta\nu} - \circK_{\alpha\nu}
%          \circK_{\beta\mu}  \right )       +  \frac{\epsilon \circK_{\rho\sigma} \circK^{\rho\sigma}}{(\n-2)(\n-3)} \left ( \proju_{\alpha\mu} \proju_{\beta\nu} - \proju_{\alpha\nu} \proju_{\beta\mu} \right ) \nonumber \\
%&          + \frac{\epsilon}{\n-3} \left (
%          - \proju_{\alpha\mu} \circK_{\beta\sigma} \circK_{\nu}{}^{\sigma}
%          + \proju_{\alpha\nu} \circK_{\beta\sigma} \circK_{\mu}{}^{\sigma}
%          - \proju_{\beta\nu} \circK_{\alpha\sigma} \circK_{\mu}{}^{\sigma}
%           + \proju_{\beta\mu} \circK_{\alpha\sigma} \circK_{\nu}{}^{\sigma}
%          \right )
        \end{align}
        The  Gauss identity is therefore equivalent to its trace parts
        \eqref{trace-Gauss}  and trace-free part \eqref{ExprW}. This establishes \eqref{Gauss}.
\end{proof}

% \car{For later use, it will be useful to derive an expression of $\Weyl^{\, \perp}$ for }

% \newpage

        \section{Einstein spaces and Fefferman-Graham expansion}\label{secFG}

For a manifold $M$ we denote by $\F(M)$ the set of smooth functions
$f : M \rightarrow \mathbb{R}$ and $\F^\star(M)$ the subset of nowhere vanishing functions. We shall write $\Hess_g$ and $\Delta_g$ respectively for the Hessian and Laplacian of the Levi-Civita connection and $\grad_g f$ for the gradient of $f$. Recall that, in index notation,  $L_{\alpha\beta}$ denotes the
Schouten tensor $\Sch_g$.

In order to analyze the asymptotic properties of our spacetime  we recall the notion of a \emph{conformal compactification} or a \emph{conformal extension}, as originally introduced by Penrose \cite{penroseasymptotics}. Consider an $\n$ dimensional smooth semi-Riemannian (of any signature) manifold $(\tilde M,\tilde g)$ satisfying the Einstein equation (i.e. an Einstein manifold) in the following form
\begin{align}
  \Ric_{\tilde{g}} = (\n -1 ) \lambda \tilde{g}, \qquad \lambda \in \mathbb{R},
  \label{Einst}
\end{align}
or, in terms of the Schouten tensor
\begin{align}
  \Sch_{\tilde{g}} = \frac{\lambda}{2} \tilde{g}, \qquad \lambda \in \mathbb{R}.
  \label{Einstsch}
\end{align}
A smooth manifold $(M,g)$ of the same dimension $\n$ is a conformal extension of $(\tilde M,\tilde g)$ if there exists an embedding $\iota: \tilde M \hookrightarrow M$ such that, under the identification defined by $\iota$, $g = \Om^2 \tilde g$, for some conformal factor $\Om \in \mathcal{F}(M) \cap \mathcal{F}^\star(\tilde M)$, and $M = \tilde M \cup \{ \Om = 0 \}$. The hypersurface $\scri := \{ \Om = 0\}$ represents the asymptotic region of $\tilde g$ and its called \emph{conformal infinity} or \emph{null infinity}. It will be useful sometimes to specify the conformal factor as part of the conformal extension $(M,g;\Om)$.

There are multiple ways of conformally extending a manifold $(\tilde M, \tilde g)$. Namely, given a conformal extension $(M,g;\Om)$, then $(M,\omega^2 g;\omega \Om)$ is also a conformal extension of $(\tilde M, \tilde g)$ for all $\omega \in \mathcal{F}^\star(M)$. In this way, the boundary manifold $\scri$ naturally inherits a conformal class of metrics $[\gamma_0] = \{\omega_0^2 \gamma_0 ~ \slash ~ \omega_0 \in \mathcal{F}^\star(M) \}$ or \emph{conformal structure}. An important result (see e.g. \cite{marspeondata21}, \cite{Grahamlee91} for proof) is the following
\begin{lemma}
 Let $(\tilde M,\tilde g)$ be an Einstein manifold admitting a conformal extension $(M,g;\Om)$. Then for each representative $\gamma_0' \in [\gamma_0]$ of the conformal structure at $\scri$, there exists a conformal extension $(M,g';\Om')$ such that $\grad_{g'} \Om'$ is a geodesic vector in a neighbourhood $U$ of $\{\Om = 0 \}$ and the first fundamental form of induced by $g'$ at $\scri$ coincides with $\gamma'_0$. Moverover, this extension is unique in $U$.
\end{lemma}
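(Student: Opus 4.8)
The plan is to prove existence and uniqueness of the \emph{geodesic (Gauss/normal) conformal gauge} adapted to a chosen boundary representative, which is the standard first step in setting up the Fefferman–Graham expansion. I would work entirely in a neighbourhood $U$ of $\scri=\{\Om=0\}$ and look for the new conformal factor in the form $\Om' = \omega\,\Om$ with $\omega\in\mathcal F^\star(M)$ to be determined. Under $g'=\omega^2 g$ one has $\grad_{g'}\Om' = \omega^{-2}\grad_{g}\Om'$, and a direct computation (using the conformal transformation of the Levi-Civita connection, $\nabla'_X Y = \nabla_X Y + X(\log\omega)Y + Y(\log\omega)X - g(X,Y)\grad_g\log\omega$) reduces the geodesic condition $\nabla'_{\grad_{g'}\Om'}\grad_{g'}\Om' = 0$, together with the normalisation that fixes the length of $\grad_{g'}\Om'$ near $\scri$, to a first-order PDE for $\omega$ along the integral curves of $\grad_g\Om$. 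Concretely, the requirement that $|\grad_{g'}\Om'|^2_{g'}$ be the constant dictated by the Einstein equation (this constant is $\lambda$ by \eqref{Einstsch}, since differentiating the Einstein condition shows $|\grad_{\tilde g}\Om|^2$ is controlled; in the conformally rescaled picture this becomes an algebraic-looking eikonal-type relation) turns into a transport ODE for $\omega$ that can be solved with the initial datum $\omega|_{\scri}=\omega_0$, where $\omega_0$ is the positive function with $\omega_0^2\gamma_0=\gamma_0'$.

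The key steps, in order, are: \textbf{(1)} translate the two conditions (geodesic gradient, prescribed boundary metric) into equations for $\omega$; \textbf{(2)} observe that the condition on the boundary metric is purely the Dirichlet condition $\omega|_{\scri}=\omega_0$, which is immediate since $g'|_{\scri}=\omega^2 g|_{\scri}$ pulls back to $\omega_0^2\gamma_0=\gamma_0'$; \textbf{(3)} show that imposing $|\grad_{g'}\Om'|^2_{g'}=\text{const}$ on a whole neighbourhood is, given the Einstein equation for $\tilde g$, equivalent to imposing it merely on $\scri$ (this is the classical fact that $\Om^{-1}(|\grad_g\Om|^2_g - \text{something})$ is smooth up to the boundary, forcing the norm to be locally constant once it is constant on $\scri$); \textbf{(4)} solve the resulting first-order transport equation for $\log\omega$ along the $\grad_g\Om$-flow with datum $\omega_0$, obtaining existence in a possibly smaller neighbourhood $U$; \textbf{(5)} for uniqueness, note that any two such extensions $(M,g';\Om')$, $(M,g'';\Om'')$ with the same boundary representative must have $\Om''=\psi\Om'$ with $\psi|_{\scri}=1$, and the geodesic condition forces $\psi$ to satisfy a homogeneous transport equation with zero boundary data, hence $\psi\equiv 1$ on $U$. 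I would invoke the cited references \cite{marspeondata21,Grahamlee91} for the detailed identities rather than reproduce them.

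The main obstacle is step \textbf{(3)}: showing that the norm of $\grad_{g'}\Om'$ is automatically (locally) constant rather than something one must impose independently. This is where the Einstein equation \eqref{Einst} enters essentially. The clean way is to compute $\nabla_\alpha\!\left(|\grad_g\Om|^2_g\right)$ in the physical metric, use $\Hess_{\tilde g}\Om$ together with the trace of the conformally-transformed Einstein equation (the well-known relation $\Hess_{\tilde g}\Om = -\Om\,\Sch_{\tilde g} + (\text{a function})\,\tilde g$ type identity coming from $\Sch_g$ vs $\Sch_{\tilde g}$ under $g=\Om^2\tilde g$), and conclude that $\nabla_\alpha|\grad_g\Om|^2_g$ is proportional to $\nabla_\alpha\Om$ modulo terms that vanish to the right order at $\scri$; restricting to $\scri$ and propagating along the geodesic flow then gives local constancy. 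Once this is in place, the remaining steps are ODE theory (existence/uniqueness for first-order linear transport equations) and are routine; the neighbourhood $U$ shrinks only to guarantee that $\grad_g\Om$ is non-vanishing and its flow is well-defined, which holds near $\scri$ because $d\Om\neq 0$ there.
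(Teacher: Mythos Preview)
The paper does not prove this lemma; it simply quotes the result and refers to \cite{marspeondata21} and \cite{Grahamlee91} for the argument. Your outline is essentially the standard proof found in those references, so there is nothing in the paper to compare against.

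On the substance, your strategy is sound but the logic of step \textbf{(3)} is stated backwards. For a gradient field $X=\grad_{g'}\Om'$ the geodesic condition is \emph{equivalent} to constancy of $|X|^2_{g'}$, since $(\nabla'_X X)_\alpha = X^\beta\nabla'_\beta\nabla'_\alpha\Om' = X^\beta\nabla'_\alpha\nabla'_\beta\Om' = \tfrac12\nabla'_\alpha|X|^2_{g'}$; this is elementary and needs no Einstein equation. Hence the single scalar equation to solve is $|\grad_{g'}\Om'|^2_{g'}=-\lambda$, which expanded reads $|\grad_g\Om|^2_g + 2\Om\,\omega^{-1}g(\grad_g\Om,\grad_g\omega) + \Om^2\omega^{-2}|\grad_g\omega|^2_g = -\lambda$. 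This looks characteristic at $\scri$ because the transport term carries an overall factor of $\Om$. The Einstein equation enters precisely \emph{here}: by \eqref{deflambda} one has $|\grad_g\Om|^2_g+\lambda = 2s\Om$, so dividing through by $\Om$ yields a genuine non-characteristic first-order PDE for $\omega$ with Cauchy data $\omega|_{\scri}=\omega_0$. In other words, the norm is not ``automatically constant'' as you phrase it --- you \emph{choose} $\omega$ to make it so, and the Einstein condition is what makes that choice well-posed up to the boundary. With this clarification your steps \textbf{(4)}--\textbf{(5)} go through as stated.
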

\begin{remark}
 We are interested in studying a neighbourhood $U$ of conformal infinity, so we shall often assume that $M=U$ without explicit mention.
\end{remark}

For any conformal extension $(M,g;\Om)$, it is a matter of direct computation to see (e.g. \cite{friedrich02})  that the relation between the Schouten tensors of $\tilde g$ and $g$ is given by
\begin{align}
 \Sch_{\tilde g} - \Sch_g = \frac{1}{\Om} \Hess_g \Om - \frac{1}{2 \Om^2} |\grad_g \Om|^2_g g
\end{align}
which using \eqref{Einstsch}, after rearranging terms, yields
\begin{align}
  \Sch_g = -\frac{1}{\Om} \Hess_g \Om + \frac{1}{2 \Om^2} (|\grad_g \Om|^2_g + \lambda  )g.\label{QEeq0}
\end{align}
% \begin{align}
% \left ( \Hess_g \Omega + \Omega \, \Sch_g \right )^{\widehat{}} = 0.
%     \label{QEeq}
%   \end{align}
\noindent In terms of the scalar
\begin{align*}
  s := \frac{1}{\n} \left ( \Delta_g \Omega + \Omega \tr_g \Sch_g \right )
\end{align*}
the trace of \eqref{QEeq0} gives
 \begin{align}
      |\nabla \Omega|^2_g - 2 s \Omega = - \lambda, \label{deflambda}
    \end{align}
and equation \eqref{QEeq0} can  be rewritten as
\begin{align}
  \Hess_g \Omega + \Omega \, \Sch_g - s g = 0.
  \label{QEeq2}
\end{align}
Whenever $\grad_g \Om$ is a geodesic vector wrt $g$, it follows
\begin{equation}\label{normgrad}
 0 = g^{\alpha \beta} \nabla_\alpha \Om \nabla_\beta \nabla_\mu \Om= g^{\alpha \beta} \nabla_\alpha \Om \nabla_\mu \nabla_\beta \Om= \frac{1}{2} \nabla_\mu (g^{\alpha \beta} \nabla_\alpha \Om  \nabla_\beta \Om) \quad \Longleftrightarrow \quad g^{\alpha\beta} \nabla_\alpha \Om \nabla_\beta \Om = - \lambda,
\end{equation}
where the constant value of $|\grad_g \Om|$ is fixed by \eqref{deflambda} at $\Om = 0$. Note that this also implies that $s = 0$ everywhere, and equation \eqref{QEeq0} for $g$ therefore becomes
\begin{align}
  \Hess_g \Omega + \Omega \, \Sch_g = 0.
  \label{QEeqPoin}
\end{align}
This motivates the following definition:
\begin{definition}
 A triple $(M,g;\Om)$ is a quasi-Einstein manifold if it satisfies \eqref{QEeq2}. If  $\grad_g \Om$ is geodesic with $|\grad_g \Om|_g^2 = -\lambda \neq 0$ then $g$ sastisfies \eqref{QEeqPoin} and  $(M,g;\Om)$ is called a Poincaré quasi-Einstein manifold.
\end{definition}

Note that if $(M,g;\Om)$ is Poincaré quasi-Einstein, the metric decomposes as
\begin{equation}\label{eqpoincareform}
 g = - \frac{d \Om^2}{\lambda} + \proju,
\end{equation}
where $\gamma$ is the projector orthogonal to the unit vector field $u = \grad_g \Om / |\grad_g \Om|_g$. $\gamma$ can be equivalently viewed as a 1-parameter family of $(\n-1)$-metrics defined on the $\{ \Om = const. \}$ leaves. We shall consider next Gaussian coordinates $\{\Omega, x^i \}$ adapted to this foliation. In these coordinates \eqref{QEeqPoin} essentially becomes an equation for $\gamma$, which one can solve order by order in $\Om$, obtaining the so-called Fefferman-Graham expansion \cite{FeffGrah85},\cite{ambientmetric}. The result is  an asymptotic formal series expansion for $\proju$ of the form
\begin{equation}\label{eqexFG}
 \proju \sim \left\lbrace\begin{array}{lr}
              \sum_{k = 0}^{\infty}g_{(2k)} \Om^{2k} + \sum_{l = 1}^{\infty} \obspol_{(l)}(\Om^{\n-1} \log \Om)^l,\quad &\mbox{for $\n$ odd},  \\
              \\
               \sum_{k = 0}^{ k = (\n-2)/2}g_{(2k)} \Om^{2k} + \sum_{k = \n -1}^{\infty}g_{(k)} \Om^{k} ´ ,\quad &\mbox{for $\n$ even}.
             \end{array}
\right.
\end{equation}
The coefficients $g_{(k)}$ depend on $\{ x^i\}$ coordinates and $\obspol_{(l)}$ are smooth family of tensors even in $\Om$, thus they in turn can be Taylor expanded as
\begin{equation}
\obspol_{(l)} \sim \sum_{m = 0}^\infty \obs_{(l,m)} \Om^{2m},
\end{equation}
with coefficients $\obs_{(k,l)}$ also depending on $\{ x^i\}$. The expansions are generated from the derivatives in $\Om$ of \eqref{QEeq2}, which yields a recursive relation of the coefficients when evaluated at $\{\Om = 0 \}$. As it is clear from \eqref{eqexFG}, the results are radically different depending on the parity of $\n$. In rough terms, the idea is as follows:
% The coefficients $g_{(k)}$ are given by the derivatives $\pounds_u^k\proju\mid_{\Om = 0}$ (which  only depend on $\{ x^i\}$ coordinates) and $\obs_{(k)}$ are even polynomials in $\Om$ with coefficients also depending on $\{ x^i\}$.
% The expansions are generated by a recursive relation
%  obtained from \eqref{QEeq2}, which as it is clear from \eqref{eqexFG}, yields radically different results depending on the parity of $\n$. In rough terms, the idea is as follows:
 \begin{enumerate}
  \item[a)] $\n$ even: Under the assumption of a power series expansion\footnote{This is actually a consequence of smoothness of $g$ at $\scri$.}, the $k-2$ derivative in $\Om$ of the tangential projection of \eqref{QEeq2} (with $s = 0$ because of \eqref{deflambda} together with \eqref{normgrad}) at $\Om = 0$ yields a recursive relation of the form
  \begin{equation}\label{eqrecurFG}
 (\n-1-k) g_{(k)} = \mathcal{F}_{k}(g_{(0)},\cdots,g_{(k-2)}),
\end{equation}
where $\{ \mathcal{F}_{k}(g_{(0)},\cdots,g_{(k-2)})\}$ are functions depending on previous coefficients up to $k-2$. Such relation implies that the prescription of $g_{(0)}$ (which equals the boundary metric $\gamma_0 = \gamma\mid_{\Om = 0}$) generates only non-zero even order terms $g_{(k)}$ with $k < \n - 1$. For $k = \n -1$ the LHS of \eqref{eqrecurFG} vanishes, but so it does $\mathcal{F}_{\n-1}$ identically. This means that, first,  the assumption of a power series expansion is consistent at this order, and second, that the coefficient $g_{(\n-1)}$ is undetermined. After the prescription of $g_{(\n-1)}$, the smooth expansion can be continued consistently to infinity with even and odd order power terms.
  \item[b)] $\n$ odd: The assumption of a power series expansion yields an expression analogous to \eqref{eqrecurFG} also as the $k-2$ derivative in $\Om$ of the tangential projection of \eqref{QEeq2}, where we consider first $k < \n-1$. Thus, similar to the $\n$ even case, the prescription of $g_{(0)}$ generates univocally only non-zero even order terms $g_{(k)}$ with $k < \n-1$. For $k = \n-1$ the LHS of \eqref{eqrecurFG} obviously vanishes, while now $\mathcal{F}_{\n-1}$ is no longer identically zero. This term is known as {\it obstruction tensor} $\obs$, because unless it vanishes, a smooth expansion is obstructed from this order onwards. By the recursive relations of $g_{(k)}$, $\obs$ is a tensor completely determined by $\gamma_0$, so its vanishing restricts the allowed geometries at $\scri$.
  However, by relaxing the smoothness condition, one can keep expanding without imposing $\obs = 0$ if and only if one introduces logarithmic terms in the expansion. Equation \eqref{eqrecurFG} is then modified to  also include dependence on the  coefficients $\obs_{(l,m)}$. In the $\n-1$ order, the logarithmic coefficient appearing is tailored to cancel the obstruction tensor, i.e.
  \begin{equation}
    \obs :=\obs_{(1,0)}.
  \end{equation}
The coefficient $g_{(\n-1)}$ is again not determined because of the multiplicative factor $(\n-1-k)$. Thus, $g_{(\n-1)}$ can be freely prescribed up to some constraints (see below).
  Once the order $k = \n-1$ has been fixed (with prescription of $g_{(\n-1)}$), the resulting expressions keep generating intertwined logarithmic and even power terms to infinite order.

%   This means that \eqref{eqrecurFG} is not consistent at this order, or in other words, the hypothesis of a power series expansion cannot hold beyond this order if one is to satisfy \eqref{QEeq2}. If one wishes to impose the tangential projection of \eqref{QEeq2} to hold at $\n-1$ order (and beyond) for $\Om = 0$, one is forced to introduce logarithmic terms. Equation \eqref{eqrecurFG} is then modified to  also include dependence on the logarithmic coefficients $\obs_{(k)}$. In the $\n-1$ order, the power coefficient $g_{(\n-1)}$ still
%   not appearing because of a multiplicative factor $(\n-1-k)$, thus $g_{(\n-1)}$ can be prescribed with some constraints (see below). The logarithmic coefficient appearing here is $\obs :=\obs_{(1)}$, the so-called {\it obstruction tensor}, turns out to be covariantly determined by $g_{(0)}$. Moreover, its vanishing determines the presence or absence of the rest of logarithmic terms.
%   Once the order $k = \n-1$ has been fixed (with prescription of $g_{(\n-1)}$), the resulting expressions keep generating intertwined logarithmic and even power terms to infinite order.
 \end{enumerate}

In summary, the degrees of freedom of the expansion are encoded in the coefficients $(g_{(0)},g_{(\n-1)})$, where $g_{(0)}$ is prescribed by a choice of boundary metric $g_{(0)} = \projuscr=\proju\mid_{\Om = 0}$ and $g_{(\n-1)}$ is also a prescribable symmetric tensor\footnote{This was previously noted by Starobinsky in four spacetime dimensions, positive cosmological constant and Lorentzian signature \cite{starobinsky}.}.
Additionally, the normal-normal and normal-tangent components of \eqref{QEeq2} impose constraints on the trace and divergence of $g_{(\n-1)}$
\begin{equation}\label{eqcontrdiv}
 \tr_\projuscr \gn = \at,\quad\quad\mathrm{div}_\projuscr \gn = \bt,
 \end{equation}
where both $\at$ and $\bt$ are expression involving only $\projuscr$ and its tangential derivatives,  both identically zero if $\n$ is even.

%
% \car{We now need an existence and uniqueness theorem at leat in 4D}

A remarkable fact from the formal series analysis is that it actually points out what is the asymptotic data that one can freely prescribe at $\scri$ whenever an asymptotic initial value problem is well-posed. Namely, if $\gamma_0$ is Riemannian and $\lambda >0$, one can pose an initial value problem with data at $\scri$, which are given by $(g_{(0)},g_{(\n-1)})$. Such asymptotic initial value problem happens to be well-posed in the sense that it there is a one-to-one correspondence between the data and Poincaré quasi-Einstein manifolds, as well as continuous  dependence on the asymptotic data (wrt suitable Sobolev norms). The theorems proving these results differ significantly depending on the case. Therefore, rather than providing details on the various approaches, we give a list of references for the interested reader.
For the $\n$ even case one can look at \cite{Anderson2005}, \cite{andersonchrusciel05},\cite{kaminski21}, which reformulate and extend the celebrated results by Friedrich \cite{Fried86initvalue} in four dimensions. The case of arbitrary $\hat{n} \geq 4$  was solved recently by Hintz in \cite{hintz23} (see also \cite{rodniaksi18} for related existence results in the context of the so-called ``ambient space" of Fefferman and Graham.) We also note that, prior to this general result, the $\n$ odd case had been only known under analyticity assumptions \cite{kichenassamy03}.

Consequently, we can give the following characterization result that will be useful in the sequel.

\begin{theorem}\label{theoexisuniq}
  Let $(\Sigma,\gamma_0)$ be an $(\n-1)$-Riemannian manifold equipped with a symmetric covariant 2-tensor $D$ with trace and divergence satisfying the constraints \eqref{eqcontrdiv} and let $\lambda$ be a positive constant. Then, there exists a Poincaré quasi-Einstein manifold $(\Sigma\times [0,\ell),g,\Om)$, for some $\ell \in \mathbb{R}$, with the precribed value of $\lambda$ and $\gamma$ Riemannian,  which admits an expansion \eqref{eqexFG} with seed data $(g_{(0)} = \gamma_{0},g_{(\n-1)} = D)$. Additionally, in the manifold $\tilde M := \Sigma\times (0,\ell)$, the metric $\tilde g := \Om^{-2} g$ satisfies the Einstein equation \eqref{Einst} with the prescribed value of $\lambda$.
\end{theorem}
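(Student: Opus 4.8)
The plan is to recognize this statement as a synthesis of the known well-posedness of the asymptotic initial value problem for Einstein metrics with $\lambda>0$, supplemented by the (short) dictionary between that problem and the Poincar\'e quasi-Einstein formulation \eqref{QEeqPoin}. First I would produce the \emph{formal} solution. Given $(\Sigma,\gamma_0)$ Riemannian and a symmetric $(0,2)$-tensor $D$ obeying \eqref{eqcontrdiv}, the Fefferman--Graham recursion \eqref{eqrecurFG} --- together with its logarithmic modification when $\n$ is odd, in which case the obstruction $\obs=\obs_{(1,0)}$ is carried by the $\Om^{\n-1}\log\Om$ term --- determines all coefficients $g_{(k)}$ and $\obs_{(l,m)}$ of a series \eqref{eqexFG} for a one-parameter family $\gamma$ of metrics on $\Sigma$ with $g_{(0)}=\gamma_0$ and $g_{(\n-1)}=D$. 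The constraints \eqref{eqcontrdiv} are precisely what renders the normal-normal and normal-tangential components of \eqref{QEeq2} (with $s=0$ by \eqref{deflambda}--\eqref{normgrad}) consistent at $\scri$, so the seed $(\gamma_0,D)$ is admissible; setting $g:=-\lambda^{-1}\,d\Om^2+\gamma$ as in \eqref{eqpoincareform} gives a formal Poincar\'e quasi-Einstein solution with the prescribed seed.

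Next I would upgrade this formal expansion to an actual solution on $\Sigma\times[0,\ell)$, $\ell>0$, by invoking the relevant existence theorems. For $\n$ even the series \eqref{eqexFG} is a genuine even power series, and existence (together with uniqueness near $\scri$ and continuous dependence on $(\gamma_0,D)$) is established in \cite{Anderson2005,andersonchrusciel05,kaminski21}, reformulating and extending Friedrich's four-dimensional result \cite{Fried86initvalue}; for $\n$ odd, where the polyhomogeneous terms $(\Om^{\n-1}\log\Om)^l$ appear, existence holds under analyticity by \cite{kichenassamy03} and, for all $\n\geq 4$ without analyticity, by Hintz \cite{hintz23}. Equivalently these results furnish, in a neighbourhood of $\scri$, a solution $\tilde g$ of \eqref{Einst} with $\lambda>0$, conformal infinity $(\Sigma,[\gamma_0])$ and Fefferman--Graham datum $D$; passing to the conformal gauge adapted to the representative $\gamma_0$ via the Lemma above (see \cite{Grahamlee91}) yields a conformal extension $(M,g;\Om)$ with $M=\Sigma\times[0,\ell)$ and $g=\Om^2\tilde g$ of the form \eqref{eqpoincareform}.

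It then remains to collect elementary facts. By construction $\grad_g\Om$ is geodesic, so \eqref{normgrad} gives $|\grad_g\Om|^2_g=-\lambda\neq 0$ and $s\equiv 0$; since $\gamma_0=g_{(0)}$ is positive definite and $\gamma\to\gamma_0$ as $\Om\to 0^+$ (the expansion \eqref{eqexFG} is continuous up to $\scri$, the logarithmic terms vanishing there), after shrinking $\ell$ the family $\gamma$ stays Riemannian throughout $\Sigma\times[0,\ell)$, hence $g$ is Lorentzian. Then \eqref{QEeqPoin} holds and $(M,g;\Om)$ is Poincar\'e quasi-Einstein, while reversing the computation \eqref{QEeq0}--\eqref{QEeqPoin} --- valid wherever $\Om\neq 0$, i.e. on $\tilde M=\Sigma\times(0,\ell)$ --- gives $\Sch_{\tilde g}=\frac{\lambda}{2}\tilde g$, which is \eqref{Einst}. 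This completes the argument.

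The genuinely hard input sits entirely in the references quoted in the second step: the well-posedness of the asymptotic Cauchy problem, resting on the conformal field equations for $\n$ even and on a Fuchsian/edge-type analysis (or analyticity) for $\n$ odd; none of this is reproved here. The only nontrivial bookkeeping on our side is to confirm that the freely prescribable data singled out in those works coincide with our pair $(g_{(0)},g_{(\n-1)})$ and that their constraint set is exactly \eqref{eqcontrdiv}, which is immediate from the recursion described above.
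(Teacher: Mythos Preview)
Your proposal is correct and matches the paper's treatment: the paper does not give a self-contained proof of this theorem but states it as a consolidation of the existence results in \cite{Anderson2005,andersonchrusciel05,kaminski21,Fried86initvalue,kichenassamy03,hintz23}, together with the Fefferman--Graham gauge lemma, exactly as you outline. Your write-up in fact spells out more of the dictionary between those results and the Poincar\'e quasi-Einstein formulation than the paper does.
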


% \begin{remark}
%  Note that the expansion \eqref{eqexFG} tells us that the differentiability of $g$ at $\{ \Om = 0\}$ in the $\n$ odd cases may be limited. This is in principle independent on the degree of differentiability of the function $\Om$ at $\partial M$, which is related to the smoothness of $M$.
% \end{remark}

\subsection{Fall-off of the Weyl tensor components}

We now use the results of Section \ref{secalgebraic} to show how the components of the Weyl tensor decay at $\scri$ for Poincaré quasi-Einstein manifolds. The splitting of the components of the Weyl tensor, as defined in Section \ref{secalgebraic}, will be referred to the unit (geodesic) vector $u := |\lambda|^{-1/2} \grad_g \Om$. As a consequence of $u$ being geodesic, the acceleration vector along its integral curves vanishes
 $a = \nabla_u u = 0.$ Additionally, the second fundamental form satisfies
\begin{equation*}
 \Hess_g \Omega = |\lambda|^{1/2} \K,
\end{equation*}
so that \eqref{QEeqPoin} is
\begin{equation}\label{quasiein0}
  |\lambda|^{1/2} K + \Om \Sch_g = 0
 \end{equation}
 and its traceless part yields
\begin{equation}\label{quasieintf}
 \circK = -\Om |\lambda|^{-1/2} \cir{\Sch_g}.
\end{equation}
Equations \eqref{quasiein0} and \eqref{quasieintf} entail the fall-off of the following objects
\begin{equation}\label{foffcircK}
  \circK = O(\Om),\qquad \K = O(\Om),\qquad H = O(\Om).
\end{equation}
Finally, we will be using Gaussian coordinates $\{\Om, x^i \}$, in which we have
\begin{equation}\label{eqKder}
 u = \epsilon |\lambda|^{1/2} \partial_\Om \qquad \K =\frac{1}{2} \mathcal{L}_u \gamma = \frac{\epsilon}{2} |\lambda|^{1/2}  \partial_\Om \gamma.
\end{equation}

We start by analyzing the fall-off of the components $\Weyl_g^{\, \perp}$ of the Weyl tensor. For that it is useful to derive a general formula using the decomposition in Section \ref{secalgebraic}. A similar expression can be found  in \cite{marspeondata21}, but it is worth rederiving it using the language of Section \ref{secalgebraic}.

\begin{lemma}\label{lemmaformulaweylperp}Let $(M,g;\Om)$ be a Poincaré quasi-Einstein manifold.  Then
\begin{equation}\label{weylperpK}
 \Weyl_g^{\, \perp}  =  K^{(2)}-\pounds_u K +\frac{\epsilon}{\Om}|\lambda|^{1/2} K
\end{equation}
\end{lemma}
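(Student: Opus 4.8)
The plan is to compute $\Weyl_g^{\,\perp}$ straight from the definition of the Weyl tensor, using the algebraic decomposition of Section \ref{secalgebraic} only to handle the Kulkarni--Nomizu piece, and then to substitute the Ricci identity. First I would take the defining relation $\Weyl_g = \Riem_g - \Sch_g \owedge g$ from \eqref{Weyl} and contract it with $u^\alpha u^\mu$ in the first and third slots: by \eqref{defZ} the left-hand side is $\Weyl_g^{\,\perp}$, by \eqref{tensorsDecom} the first term on the right is $\Rnn$, and the task is reduced to evaluating $(\Sch_g \owedge g)^{\perp}$.

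For that piece I would use the quasi-Einstein equation in the form \eqref{quasiein0}, which gives $\Sch_g = -\,\Om^{-1} |\lambda|^{1/2} K$. Since $u$ is geodesic, $\Hess_g \Om = |\lambda|^{1/2} K$ is completely orthogonal to $u$, and hence so is $\Sch_g$; therefore $\per{\Sch_g} = 0$ and $\para{\Sch_g} = \Sch_g$. Corollary \ref{decom3} with $A = \Sch_g$ then yields $(\Sch_g \owedge g)^{\perp} = \per{\Sch_g}\,\proju + \epsilon\,\para{\Sch_g} = -\,\epsilon\,\Om^{-1} |\lambda|^{1/2} K$, so at this stage $\Weyl_g^{\,\perp} = \Rnn + \epsilon\,\Om^{-1} |\lambda|^{1/2} K$.

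To finish I would insert the Ricci identity $\Rnn_{\beta\nu} = D_\beta a_\nu - \epsilon a_\beta a_\nu + K_{\beta\gamma} K_{\nu}{}^{\gamma} - \pounds_u K_{\beta\nu}$ stated just before Proposition \ref{GCR-decomposed}; with $a = 0$ its first two terms vanish and it collapses to $\Rnn = K^{(2)} - \pounds_u K$, which substituted above gives exactly \eqref{weylperpK}. The argument is routine linear algebra together with one substitution, and I expect no real obstacle; the only point that needs care is consistently tracking which tensors are completely orthogonal to $u$ (so that $\per{\Sch_g}$ vanishes and $\para{\Sch_g} = \Sch_g$), which is precisely where geodesy of $u$ is used. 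Alternatively one could start from the packaged identity \eqref{r5}, $\Weyl_g^{\,\perp} = \circP - \epsilon\,\cir{\Sch_g^{\parallel}}$, and combine it with the trace/trace-free split \eqref{Ricci} of the Ricci identity, but that is equivalent and forces one to reassemble trace and trace-free parts, so the direct route is cleaner.
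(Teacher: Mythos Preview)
Your proof is correct and is in fact cleaner than the paper's. The paper takes precisely the alternative route you mention at the end: it starts from \eqref{r5}, $\Weyl_g^{\,\perp} = \circP - \epsilon\,\cir{\Sch_g^{\parallel}}$, substitutes the trace-free Ricci identity from \eqref{Ricci} for $\circP$ and \eqref{quasieintf} for $\cir{\Sch_g^{\parallel}}$, and obtains an intermediate formula \eqref{wperp1} written entirely in terms of $\circK$. It then has to convert $\circK$ back to $K$, which produces an extra term proportional to $\gamma$; showing that this term vanishes requires invoking the normal--normal component of \eqref{quasiein0} (to get $\trP = \epsilon\,\traceh \Sch_g^{\parallel}$) together with the Raychaudhuri equation. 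Your direct contraction of \eqref{Weyl} bypasses the whole trace/trace-free split and the subsequent reassembly, because Corollary \ref{decom3} already handles the full $(\Sch_g \owedge g)^{\perp}$ without decomposing $\Sch_g$ into trace and trace-free parts, and the undecomposed Ricci identity with $a=0$ gives $\Rnn = K^{(2)} - \pounds_u K$ in one line. The paper's route has the minor advantage of exhibiting \eqref{wperp1} as a stepping stone, but for the purpose of proving the lemma your argument is shorter and uses fewer auxiliary identities.
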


\begin{proof}
 Since $K(u,\cdot) = K(\cdot,u) = 0$, from \eqref{quasiein0} we have $\Sch_g = \Sch_g^\parallel$. Therefore, \eqref{quasieintf} yields
\begin{equation}\label{quasiein1}
  |\lambda|^{1/2} \circK + \Om \cir{\Sch_g^\parallel} = 0 .
 \end{equation}
 On the other hand, by \eqref{r5} we have
 \begin{equation}
\Weyl_g^{\, \perp}  = \circP - \epsilon \cir{\Sch_g^{\parallel}}.
 \end{equation}
 Using the second identity  in \eqref{Ricci} (with $a^\nu = \u^\mu \nabla_\mu \u^\nu = 0$) for $\circP$ and \eqref{quasiein1} for  $\cir{\Sch_g^\parallel}$ we obtain
 \begin{equation}\label{wperp1}
  \Weyl_g^{\, \perp}  = \frac{|\circK|^2}{\n-1} \gamma + \circK^{(2)} - \pounds_\u \circK +  \frac{\epsilon}{\Om}|\lambda|^{1/2} \circK,
 \end{equation}
 where $|\circK|^2 =  \circK^{\mu\nu}\circK_{\mu\nu}$.

 The last part of the lemma simply involves expressing $\circK$ as $\circK = K -\frac{\H}{\n-1}\gamma$ in order to show that \eqref{wperp1} can be written as  \eqref{weylperpK}. After simple algebra, \eqref{wperp1} reads
 \begin{align}
  \Weyl_g^{\, \perp}  & = K^{(2)}-\pounds_u K +\frac{\epsilon}{\Om}|\lambda|^{1/2} K + \frac{1}{\n-1}\left( -\frac{\epsilon}{\Om}|\lambda|^{1/2} \H + \u(\H) +|K|^2 \right) \gamma
%   \\ & =   \left(  K^{(2)}-\pounds_u K +\frac{\epsilon}{\Om}|\lambda|^{1/2} K \right)^{\widehat{}}.
 \end{align}
 The last term is simplified by first observing the vanishing of the normal-normal component of \eqref{quasiein0} which, taking into account \eqref{r2}, yields
 \begin{equation}
  \Om \Sch_g^{\, \perp} = \frac{\Om}{\n-1} \left ( \trP - \frac{\epsilon \Q}{2 (\n-2)} \right ) = 0 \quad \Longrightarrow \quad  \trP = \frac{\epsilon \Q}{2 (\n-2)} \stackrel{\eqref{r1}}{=}  \epsilon \traceh \Sch_g^{\parallel}.
 \end{equation}
Then, the $\gamma$-trace of \eqref{quasiein0} is
 \begin{equation}
  |\lambda|^{1/2} \H + \Om  \traceh \Sch_g^{\parallel}  =  |\lambda|^{1/2} \H + \Om \epsilon   \trP = |\lambda|^{1/2} \H - \Om \epsilon (|K|^2 + \u(\H)) = 0
 \end{equation}
where for the second equality we have made use of the first identity in \eqref{Ricci} with $|\circK|^2 = |K|^2 -\frac{\H^2}{\n-1}$. Hence
\begin{equation}
 \Weyl_g^{\, \perp}  =  K^{(2)}-\pounds_u K +\frac{\epsilon}{\Om}|\lambda|^{1/2} K .
\end{equation}
\end{proof}

We now use Lemma \ref{lemmaformulaweylperp} to analyze the leading and subleading fall-off terms of $\Weyl_g^{\, \perp}$ that will be useful in Section \ref{secalgspec}. In order to do that all at once it is useful to write the lowest orders of the Fefferman-Graham expansion of $\gamma$ in the following form, unified for all $\n \geq 4$,
\begin{equation}\label{eqexpalln}
 \gamma = \gamma_0 +\Om^2 g_{(2)} + \Omega^3 g_{(3)} \delta_{\n,4} + \Omega^4 g_{(4)} +  \obs \delta_{\n,5}  (\Om^4 \log \Om ) + O(\Om^5),
\end{equation}
whose terms are explained as follows. The lowest order $\gamma_0$ and $g_{(2)}$ appear in all dimensions. Moreover, it always holds that (\cite{ambientmetric} equation (3.6))
\begin{equation}\label{eqg2}
 g_{(2)} = \frac{1}{\lambda} \Sch_{\Sigma_0} = -\frac{\epsilon}{|\lambda|}\Sch_{\Sigma_0}.
\end{equation}
The term $g_{(3)}$ is absent unless $\n = 4$. For all dimensions $\n \geq 4 $ with $\n \neq 5$, $g_{(4)}$ is explicitly computable in terms of $\gamma_0$ and it reads (\cite{ambientmetric} equation (3.18))
 \begin{align}\label{expresiong4}
 & g_{(4)}  =
 \frac{1}{4\lambda^2}\left(-\frac{\Bac_\Sigma}{\n-5} +\Sch_\Sigma^{(2)}  \right),
\end{align}
% \begin{equation}\label{expresiong4}
%  g_{(4)} = \frac{1}{4\lambda^2} \left(   \Bac_{\Sigma_0} +  \Sch^{(2)}_{\Sigma_0} \right),
% \end{equation}
where $\Bac_{\Sigma_0}$ is the Bach tensor of $\gamma_0$
\begin{equation}
 (\Bac_\Sigma)_{\alpha\beta} := D^\mu (\Cot_\Sigma)_{\alpha\beta\mu} + (\Weyl_\Sigma)_{\alpha\mu\beta\nu}(\Sch_\Sigma)^{\mu\nu}.
\end{equation}
% We note that the expressions in \cite{ambientmetric} are obtained for the expansion of the so-called ambient metrics, but generalize as in expression \eqref{expresiong4} for Poincaré-Einstein metrics as follows. The asymptotic expansions (wrt to a variable $\rho$ in \cite{ambientmetric}) of Lorentzian ambient metrics have a direct correspondence with the expansions (wrt to the variable $\Om  = \sqrt{2 \rho}$ in \cite{ambientmetric}) of Lorentzian Poincaré-Einstein metrics with $\lambda = 1$, so equations in \cite{ambientmetric} are valid via the correspondence $g''_{ij} \rightarrow 8 g_{(4)}$ of elements in \cite{ambientmetric} and in this paper. It can be easily generalized to arbitrary positive $\lambda$ (as in \eqref{expresiong4}) by a constant rescaling of the conformal factor. The extension of \eqref{expresiong4} to $\lambda < 0$ and Lorentzian signature follows from \cite{anderson}, which shows that the even-order coefficients $g_{2k}$ determined solely by $\gamma_0$ transform between the $\lambda > 0$ (denoted $g_{(2k)}^{dS}$) and $\lambda < 0$ (denoted $g_{(2k)}^{AdS}$) cases according to $g_{(2k)}^{dS} = (-1)^k g_{(2k)}^{AdS}$.}
We note that the expressions in \cite{ambientmetric} are derived for the expansion of the so-called ambient metrics, but they generalize to Poincaré-Einstein metrics with any $\lambda \neq 0$ as in \eqref{expresiong4}. Namely, the asymptotic expansions of Lorentzian ambient metrics in \cite{ambientmetric}, in terms of a variable $\rho$, correspond directly to expansions in the variable $\Omega = \sqrt{2\rho}$ of Lorentzian Poincaré-Einstein metrics with $\lambda = 1$. Via the identification $g''_{ij} \rightarrow 8 g_{(4)}$, the formulas in \cite{ambientmetric} remain valid in the latter setting. This correspondence extends to arbitrary positive $\lambda$ via a constant rescaling of the conformal factor, which yields expression \eqref{expresiong4}. The extension to $\lambda < 0$ and Lorentzian signature follows from \cite{anderson}, which shows that the even-order coefficients $g_{(2k)}$, determined entirely by $\gamma_0$, transform between the $\lambda > 0$ (denoted $g_{(2k)}^{dS}$) and $\lambda < 0$ (denoted $g_{(2k)}^{AdS}$) cases via $
g_{(2k)}^{dS} = (-1)^k g_{(2k)}^{AdS}$.

For $\n = 5$ \eqref{expresiong4} no longer holds because $g_{(4)}$ generaly containts a freely specifiable trace-less and transverse component. Additionally, the logarithmic term $\obs \delta_{\n,5}  (\Om^4 \log \Om ) $ only appears for $\n= 5$. Finally, we can always ensure that the tail terms begin at least at order $O(\Omega^5)$, because it either starts with a power term $\Om^5 g_{(5)}$ (if $\n = 4$ or $\n=6$), another logarithmic term $\Om^6\log \Om$ if ($\n = 5$) or higher order terms if $\n >6$.

Thus, combining \eqref{eqKder} and \eqref{eqexpalln} we have
\begin{align}
 K & = \frac{1}{2} \epsilon |\lambda|^{1/2} \partial_\Om \gamma  =\frac{1}{2} \epsilon |\lambda|^{1/2} \left( 2 \Om g_{(2)} + 3 \Omega^2 g_{(3)}\delta_{\n,4} + 4 \Omega^3 g_{(4)} + \obs \delta_{\n,5}  \Om^3(4 \log \Om + 1 )  \right) + O(\Om^4), \label{eqKalln}
\end{align}
so we are now ready to show:
\begin{corollary}\label{lemmanoconflat}
 Let $g$ be a Poincaré quasi-Einstein metric of dimension $\n \geq 4$. Then
 \begin{equation}
  \Weyl_g^{\, \perp} = -\frac{3 |\lambda|}{2}\Om g_{(3)}\delta_{\n,4} + \Om^2 \left( |\lambda|^{-1} \Sch_{\Sigma_0}^{(2) } - 4 |\lambda|g_{(4)} \right) - \obs \delta_{\n,5}  \Om^2(4 \log \Om + 3 ) + O(\Om^3).
 \end{equation}
\end{corollary}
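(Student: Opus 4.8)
The plan is to substitute the Fefferman--Graham expansion \eqref{eqKalln} directly into the formula \eqref{weylperpK} from Lemma~\ref{lemmaformulaweylperp} and collect terms order by order in $\Om$. Concretely, I would write
\begin{align*}
 \Weyl_g^{\, \perp} = K^{(2)} - \pounds_u K + \frac{\epsilon}{\Om} |\lambda|^{1/2} K,
\end{align*}
and track each of the three pieces up to the order $O(\Om^3)$ requested. The key observation that makes this manageable is the fall-off \eqref{foffcircK}: since $K = O(\Om)$, the quadratic term $K^{(2)} = K_{\beta\rho} K^{\rho}{}_{\nu}$ is $O(\Om^2)$, so at the orders we care about it only contributes through the leading $\Om g_{(2)}$ term, giving a contribution proportional to $\Om^2\, (g_{(2)})^{(2)}$. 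Using $g_{(2)} = -\tfrac{\epsilon}{|\lambda|}\Sch_{\Sigma_0}$ from \eqref{eqg2}, this produces precisely the $|\lambda|^{-1}\Sch_{\Sigma_0}^{(2)}$ summand (after the factor $\tfrac14\epsilon^2|\lambda| = \tfrac14|\lambda|$ from squaring \eqref{eqKalln} — so care with constants is needed here).

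Next I would handle the term $\frac{\epsilon}{\Om}|\lambda|^{1/2} K$. Here the division by $\Om$ shifts everything down by one order, so the $\Om g_{(2)}$ piece of $K$ becomes an $O(1)$ term; but it must cancel. Indeed $\frac{\epsilon}{\Om}|\lambda|^{1/2}\cdot \tfrac12\epsilon|\lambda|^{1/2}\cdot 2\Om g_{(2)} = |\lambda| g_{(2)} = -\epsilon\Sch_{\Sigma_0}$, and this is exactly cancelled by the $-\pounds_u K$ term evaluated on the $\Om g_{(2)}$ piece: $\pounds_u(\Om g_{(2)}) = \epsilon|\lambda|^{1/2}\partial_\Om(\Om g_{(2)}) = \epsilon|\lambda|^{1/2} g_{(2)}$, and multiplying by the $\tfrac12\epsilon|\lambda|^{1/2}$ in $K$'s definition gives... — in fact the cleanest route is to note that $K^{(2)} - \pounds_u K + \frac{\epsilon}{\Om}|\lambda|^{1/2} K$ is, by the lemma, equal to $\Weyl_g^\perp$, which is a component of the Weyl tensor and hence manifestly regular at $\scri$; so all the would-be negative and zeroth order terms cancel automatically, and I only need to extract the surviving $O(\Om)$, $O(\Om^2)$ and $O(\Om^2\log\Om)$ coefficients. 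For the $O(\Om)$ term, only $\n=4$ contributes (via $g_{(3)}$): from $\frac{\epsilon}{\Om}|\lambda|^{1/2}K$ one gets $\tfrac32\epsilon^2|\lambda|\,\Om g_{(3)} = \tfrac32|\lambda|\Om g_{(3)}$, and from $-\pounds_u K = -\epsilon|\lambda|^{1/2}\partial_\Om K$ one gets $-\tfrac32\cdot 2\cdot\tfrac12|\lambda|\Om g_{(3)}\cdot$(appropriate sign)$= -3|\lambda|\Om g_{(3)}$ plus the above, totalling $-\tfrac{3|\lambda|}{2}\Om g_{(3)}$.

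For the $O(\Om^2)$ and $O(\Om^2\log\Om)$ contributions I would repeat the bookkeeping on the $\Om^3 g_{(4)}$ and $\obs\,\Om^3(4\log\Om+1)$ terms of \eqref{eqKalln}: the $\frac{\epsilon}{\Om}|\lambda|^{1/2}K$ piece gives $2|\lambda| g_{(4)}\Om^2$ and $\tfrac12|\lambda|\obs\,\Om^2(4\log\Om+1)$, while $-\pounds_u K = -\epsilon|\lambda|^{1/2}\partial_\Om K$ acting on these (remembering $\partial_\Om(\Om^3\log\Om) = 3\Om^2\log\Om + \Om^2$) yields $-6|\lambda| g_{(4)}\Om^2$ and the corresponding $\obs$ terms; adding the $K^{(2)}$ contribution $\tfrac14|\lambda|(g_{(2)})^{(2)}\cdot 4 \Om^2\cdot\epsilon^2$ — wait, more carefully $K^{(2)} = (\tfrac12\epsilon|\lambda|^{1/2})^2(2\Om g_{(2)})^{(2)} + O(\Om^3) = |\lambda|\Om^2 (g_{(2)})^{(2)} = |\lambda|^{-1}\Om^2\Sch_{\Sigma_0}^{(2)}$ — and combining everything gives $\Om^2(|\lambda|^{-1}\Sch_{\Sigma_0}^{(2)} - 4|\lambda| g_{(4)}) - \obs\,\delta_{\n,5}\Om^2(4\log\Om+3)$, which is the claimed result. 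I expect the main obstacle to be nothing conceptual but rather keeping the numerous factors of $\epsilon$, $|\lambda|^{1/2}$, and the numerical coefficients $2,3,4$ straight through the differentiation of the logarithmic term, and making sure the lower-order cancellations are used correctly so that the surviving coefficients come out with exactly the stated signs.
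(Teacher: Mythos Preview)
Your approach is correct and is essentially the same as the paper's: compute each of $K^{(2)}$, $\pounds_u K$, and $\tfrac{\epsilon}{\Om}|\lambda|^{1/2}K$ from the expansion \eqref{eqKalln} and then combine via Lemma~\ref{lemmaformulaweylperp}. The paper simply displays the three intermediate expansions explicitly before substituting, whereas you do the bookkeeping inline; your shortcut of invoking regularity of $\Weyl_g^{\,\perp}$ at $\scri$ to dismiss the $O(1)$ cancellation is a valid alternative to the paper's explicit check that the $-\epsilon\Sch_{\Sigma_0}$ terms from $\pounds_u K$ and $\tfrac{\epsilon}{\Om}|\lambda|^{1/2}K$ cancel.
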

\begin{proof}
 On the one hand, from \eqref{eqg2} and \eqref{eqKalln},
 \begin{align}
  \K^{(2)} & = |\lambda|^{-1} \Sch_{\Sigma_0}^{(2)} \Om^2 + O(\Om^3),\\
  \mathcal{L}_u \K & = \frac{|\lambda|}{2}
  \left( \frac{-2\epsilon}{|\lambda| }\Sch_{\Sigma_0} + 6 \Om g_{(3)}\delta_{\n,4} + 12 \Om^2 g_{(4)} + \obs \delta_{\n,5}\Om^2 (12 \log \Om + 7) \right) + O(\Om^3),\\
  \frac{\epsilon}{\Om}|\lambda|^{1/2} K & = \frac{|\lambda|}{2}
   \left(\frac{-2\epsilon}{|\lambda| }\Sch_{\Sigma_0} + 3 \Omega g_{(3)}\delta_{\n,4} + 4 \Omega^2 g_{(4)} + \obs \delta_{\n,5}  \Om^2(4 \log \Om + 1 )  \right) + O(\Om^3),
 \end{align}
 Then the Corollary follows by substituting the above expresions in Lemma \ref{lemmaformulaweylperp}.
%   \begin{equation}
%   \Weyl_g^{\, \perp} = \Om g_{(3)} + \Om^2 \left( |\lambda|^{-1} \Sch_{\Sigma_0}^{(2) } - 4 |\lambda|g_{(4)} \right) + O(\Om^3) = O(\Om).
%  \end{equation}
 \end{proof}
\begin{remark}\label{remark45alln}
 This corollary entails some  differences in the leading fall-off term of $\Weyl_g^{\, \perp}$ depending on the dimension: if $\n = 4$, $g_{(3)}$ is non-zero generically so $\Weyl_g^{\, \perp} = O(\Om)$; if $\n = 5$, $g_{(3)} = 0$, but the lowest order term is $\Weyl_g^{\, \perp}  = O(\Om^2 \log \Om) = o(\Om)$; in all dimensions $\n>5$, in general $\Weyl_g^{\, \perp}  = O(\Om^2)$.

 Also note that in all dimensions $\n \neq 5$, we can substitute $g_{(4)}$ according to expression  \eqref{expresiong4}, so that
  \begin{equation}
  \Weyl_g^{\, \perp} = -\frac{3 |\lambda|}{2}\Om g_{(3)}\delta_{\n,4} + \Om^2  \frac{1}{|\lambda|(\n-5)} \Bac_\Sigma + O(\Om^3)\qquad \mbox{($\n \neq 5$)}
 \end{equation}
\end{remark}

% \begin{proof}
% From the properties of the Fefferman-Graham expansion
% \begin{equation}
%  \proju = g_{(0)} + \Om^2 g_{(2)} + \Om^3 g_{(3)} + O(\Om^3),
% \end{equation}
% where recall that the term $g_{(3)}$ is identically zero unless $\n = 4$ and that the $O(\Om^3)$ appears because for $\n = 5$ the lowest order logarithmic term is $\Om^{4} \log \Om = O(\Om^3)$. Thus
% \begin{equation}
%   \K = \frac{1}{2} |\lambda|^{1/2} \epsilon(2 \Om g_{(2)} + 3 \Om^2 g_{(3)} ) + O(\Om^2).
% \end{equation}
% and
% \begin{align}
% %  K^{(2)} & = O(\Om^2), \\
%  -\pounds_u K +\frac{\epsilon}{\Om}|\lambda|^{1/2} K & = - \frac{3}{2}|\lambda| \Om g_{(3)} + O(\Om).
% \end{align}
% Now from Lemma \ref{lemmaformulaweylperp}
% \begin{align}
%  \Weyl_g{}^{\perp} - \K^{(2)}= - \frac{3}{2}|\lambda| \Om g_{(3)} + O(\Om).
% \end{align}
% and the result follows because $\K^{(2)} = O(\Om^2)$ by  \eqref{foffcircK}.
% \end{proof}

The above analysis of $\Weyl_g^\perp$ is particularly interesting in the $\n = 4$ case. Let us denote the leading order term
\begin{equation}\label{ewg3wperp}
 \Wperscr := \Om^{-1}\Weyl_g^\perp\mid_{\scri} =-\frac{3 |\lambda|}{2} g_{(3)}
\end{equation}
This relates a coordinate and conformal gauge dependent quantity $g_{(3)}$ with a tensor $\Om^{-1}\Wperscr$. This can be used, in the four dimensional case, to provide a coordinate and conformal gauge-free definition of the asymptotic data\footnote{Note that such a gauge-free definition follows also from classical results by H. Friedrich \cite{friedrich81},\cite{friedrich81bis},\cite{Fried86initvalue}.} of $\lambda >0$-vacuum metrics, thus providing a geometric characterization of such metrics in a neighbourhood of $\scri$.
 This is a remarkable advantage of the $\n = 4$ case, while for $\n \geq 5$ one generally needs to carry out a Fefferman-Graham expansion to determine $g_{(\n-1)}$.

 Since $\Wperscr$ will be relevant for a later analysis, is convenient to write
 \begin{equation}\label{eqexpWper}
  \Weyl_g^{\, \perp} = \Om \Wperscr \delta_{\n,4} + \Om^2 \left( |\lambda|^{-1} \Sch_{\Sigma_0}^{(2) } - 4 |\lambda|g_{(4)} \right) - \obs \delta_{\n,5}  \Om^2(4 \log \Om + 3 ) + O(\Om^3)
 \end{equation}
and
\begin{equation}
 \K = - |\lambda|^{-1/2} \Om \Sch_{\Sigma_0} - \epsilon|\lambda|^{-1/2} \Omega^2 \Wperscr \delta_{\n,4} + 2\epsilon |\lambda|^{1/2} \Omega^3 g_{(4)} + \frac{\epsilon}{2}|\lambda|^{1/2}\obs \delta_{\n,5}  \Om^3(4 \log \Om + 1 )  + O(\Om^4).\label{eqKalln2}
\end{equation}

We now continue the asymptotic analysis of $\Weyl_g$  with the $ \Weyl_g{}^{\,\perp \parallel}$ components. For that we first note
\begin{equation}
  \Weyl_g{}^{\,\perp \parallel} = \Riem_g{}^{\,\perp \parallel} - \left(\Sch_g \owedge g\right){}^{\,\perp \parallel}.
\end{equation}
For quasi-Einstein Poincaré metrics, it follows from \eqref{quasiein0} and \eqref{perparaA} that $ \left(\Sch_g \owedge g\right){}^{\,\perp \parallel}$ vanishes. Thus, applying the Codazzi equation and \eqref{eqKalln2} gives
\begin{align}
  (\Weyl_g^{\,\perp \parallel})_{\alpha\mu \nu} & = (\Riem_g^{\,\perp \parallel})_{\alpha\mu\nu} = D_\nu \K_{\mu\alpha} - D_\mu \K_{\nu \alpha} = -\Om|\lambda|^{-1/2} (\Cot_{\Sigma_0})_{\alpha \mu \nu} - 2 \epsilon \Om^2 |\lambda|^{-1/2} \delta_{\n,4} D_{[\nu}\Wperscr_{\mu]\alpha} + o(\Om^2),
\end{align}
where
\begin{equation}
 (\Cot_{\Sigma_0})_{\alpha \mu \nu} : = 2 D_{[\nu}(\Sch_{\Sigma_0})_{\mu]\alpha}
\end{equation}
is the Cotton tensor of $\gamma_0$. Note that the tail terms are a Landau little-o because of the presence of logarithmic terms in five dimensions, but otherwise they are $O(\Om^3)$, so for $\n \neq 5$ we may write
\begin{align}\label{eqexpWperpar}
  (\Weyl_g^{\,\perp \parallel})_{\alpha\mu \nu} = -\Om|\lambda|^{-1/2} (\Cot_{\Sigma_0})_{\alpha \mu \nu} - 2 \epsilon\Om^2 |\lambda|^{-1/2} \delta_{\n,4} D_{[\nu}\Wperscr_{\mu]\alpha} + O(\Om^3)\qquad(\n\neq 5.)
\end{align}

\bigskip

We finish our analysis of the fall-off of the Weyl tensor with the $\Weyl_g^{\parallel}$ components. Combining \eqref{r5}, \eqref{r6} and the third equation in identity \eqref{Gauss}, we can write $\Weyl_g^{\, \parallel}$ as
\begin{align}
 \Weyl_g^{\, \parallel}  & = \Weyl_\u(\Rtt) - \frac{\epsilon}{\n-3} \Weyl_g^{\, \perp} \owedge \gamma\nonumber \\
 & = \Weyl_{\Sigma}- \frac{\epsilon}{2}
\circK \owedge \circK - \frac{\epsilon}{\n-3} \circK^{(2)} \owedge \proju
                       + \frac{ \epsilon \traceh \circK^{(2)}}{2(\n-2)(\n-3)} \proju \owedge \proju - \frac{\epsilon}{\n-3} \Weyl_g^{\, \perp} \owedge \gamma. \label{aux1}
\end{align}
In four dimensions this implies that $\Weyl_g^{\,\parallel}$ is  determined by $\Weyl_g^{\, \perp}$ and $\gamma$. Indeed $\Weyl_\Sigma $ is identically zero (because the dimensions of $\Sigma$ is $3$) and so it is
\begin{equation}
 0 = \frac{\epsilon}{2}
\circK \owedge \circK - {\epsilon} \circK^{(2)} \owedge \proju
                       + \frac{ \epsilon \traceh \circK^{(2)}}{4} \proju \owedge \proju,
\end{equation}
because it is a traceless tensor with the symmetries of the Riemann tensor.
Thus
\begin{align}\label{eqweylpar4d}
 \Weyl_g^{\, \parallel}  & =  - \epsilon \Weyl_g^{\, \perp} \owedge \gamma \qquad (\n = 4.)
\end{align}
This equation and the above analysis of $\Weyl_g^{\,\perp}$ and $\gamma$ determine the leading and subleading terms of $\Weyl_g^{\,\parallel}$.
In higher dimensions the leading terms are straightforward to obtain, while the subleading
terms require a more detailed analysis. Since the latter are not needed for the purposes of this paper, we limit our analysis to the leading-order behavior of $\Weyl_g^{\, \parallel}$ for $\n \geq 5$.
% terms require a longer analysis. As we shall not require the latters, we only perform an explicit leading order analysis of $\Weyl_g^{\, \parallel}$ for $\n\geq 5$
Taking into account \eqref{foffcircK} it follows from \eqref{aux1} that
\begin{align}
 \Weyl_g^{\, \parallel} = \Weyl_{\Sigma}- \frac{\epsilon}{\n-3} \Weyl_g^{\, \perp} \owedge \gamma + O(\Om^2). \label{aux3}
\end{align}
A difference now appears between $\n=5$ and all $n>5$ cases. By Remark \ref{remark45alln} it follows
\begin{align}\label{Wparngeq5}
 \Weyl_g^{\, \parallel}   = \Weyl_{\Sigma} + o(\Om)\quad (\n = 5), \qquad \Weyl_g^{\, \parallel}   = \Weyl_{\Sigma} + O(\Om^2)\quad (\n > 5).
\end{align}

\section{Asymptotic structure of algebraically special metrics}\label{secalgspec}

In this section we work out an application of the results obtained so far. We begin with an introduction of the general algebraic classification of the Weyl tensor. Then, in subsection \ref{sec4d} we study how an algebraic condition on the Weyl tensor constraints the asymptotic data of certain $\lambda$-vacuum metrics in four dimensions, leading, in the locally conformally flat $\scri$ case, to a known classification of spactimes related to Kerr-de Sitter \cite{marspaetzseno16}.

% Namely, we study how an algebraic condition on the Weyl tensor constraints the asymptotic data of certain $\lambda$-vacuum metrics in four dimensions (subsection \ref{sec4d}) and five and higher dimensions (subsection \ref{sec5D}). The remainder of  the present subsection is devoted to the

The algebraic classification of the Weyl tensor by Coley {\it et al.} \cite{coley04} (see also  \cite{milson}) extends the classical Petrov classification to arbitrary $\n$ dimensions. This  holds for all Weyl-like tensors (denoted $\WT(V)$ in Section \ref{secalgebraic}) of Lorentzian vector spaces. In this section, therefore, we restrict to spacetimes of Lorentzian signature. Additionally, since we will carry out most of our calculations using indices, it is convenient to simplify our notation as follows
\begin{align}\label{eqweylindex}
 \Wi_{\alpha\beta\mu\nu} := (\Weyl_g)_{\alpha\beta\mu\nu},\quad\Wpar_{\alpha\beta\mu\nu} := (\Weyl^{\, \parallel}_g)_{\alpha\beta\mu\nu},\quad\Wppar_{\alpha\mu\nu} := (\Weyl_g^{\,\perp\parallel})_{\alpha\mu\nu},\quad\Wper_{\alpha\beta} := (\Weyl_g^{\,\perp})_{\alpha\beta}.
\end{align}
The above mentioned classification distinguishes a number of cases, called algebraic types, based on the alignment properties of the so-called principal Weyl Align Null Direction (WAND), which are defined as follows.
Let $\wand$ be a null vector field and consider a completion to a semi-null frame $\{\wand,\snlight,\{\snspace_i\}_{i=1}^{\n-2} \}$:
\begin{align}\label{snframe}
 \wand_\alpha \wand^\alpha = \snlight_\alpha \snlight^\alpha = \wand_\alpha (\snspace_i)^\alpha = \snlight_\alpha (\snspace_i)^\alpha= 0,\qquad (\snspace_i)_\alpha (\snspace_j)^\alpha = \delta_{ij},\qquad \wand_\alpha \snlight^\alpha = -1.
\end{align}
Let us denote $\h$ to the projector onto $\mathrm{span}\{\{\snspace_i\}_{i=1}^{\n-2}\}$. Then $\wand$ is said to be a WAND if there exists a semi-null completion of $\wand$ such that
\begin{equation}\label{eqtypeI}
 \Wi_{\alpha\mu\beta\nu}\wand^\alpha\h^\mu{}_{\mu'}\wand^\beta\h^\nu{}_{\nu'} = 0.
\end{equation}
A metric admitting a WAND is said to be algebraically special and of algebraic type $\I$. A subcase of type $\I$ is the so-called type $\Ia$, which happens if in addition
\begin{equation}\label{eqtypeII}
 \Wi_{\alpha\mu\beta\nu}\wand^\alpha\snlight^\mu\wand^\beta\h^\nu{}_{\nu'} = 0.
\end{equation}
Furthermore, the WAND is said to be multiple (and this corresponds to algebraic type $\II$) if, together with the previous conditions, it holds
\begin{equation}\label{eqtypeIIreal}
  \Wi_{\alpha\mu\beta\nu}\wand^\alpha\h^\mu{}_{\mu'}\h^\beta{}_{\beta'}\h^\nu{}_{\nu'} = 0.
\end{equation}
It is worth to remark that from the traceless property of the Weyl tensor, as well as its symmetries, it is easy to see that in four dimensions, equations \eqref{eqtypeII} and \eqref{eqtypeIIreal} are equivalent, so there is no distinction between type $\Ia$ and type $\II$. This is no longer true in higher dimensions.

The types introduced above admit further particularizations, namely, the types $\mathrm{III}$, $\mathrm{N}$ and $\mathrm{O}$, with their corresponding subtypes. For the purposes of this section, we only need to consider algebraic type $\II$.  We refer the interested reader to the original classification \cite{coley04}, \cite{milson}.

 We have reviewed the above criteria in terms of semi-null frames for later convenience. We remark, however, that there exists an equivalent set of frame-free independent criteria for the algebraic classification of the Weyl tensor \cite{ortaggiobeldeb}. As such, \eqref{eqtypeI}, \eqref{eqtypeII} and \eqref{eqtypeIIreal} only depend on $\wand$ and they must hold for any semi-null completion\footnote{A direct test is to observe that any two semi-null frames sharing $\wand$ must differ by a null rotation of $\wand$ and null rotations preserve equations \eqref{eqtypeI}, \eqref{eqtypeII} and \eqref{eqtypeIIreal}.} of a WAND $\wand$. This will be used next to construct an appropriate semi-null frame.

\bigskip

Let us now consider an $\n\geq 4$ dimensional
smooth Einstein manifold $(\tilde M, \tilde g)$, where $\tilde g$ is smooth, Lorentzian of algebraic type at least $\II$. By the results in \cite{durkeereall}, we know that since $\tilde g$ admits a multiple WAND, it must also admit a geodesic multiple WAND $\tilde k$. Let also $(M,g;\Om)$ be a Poincaré quasi-Einstein manifold conformally extending $(\tilde M, \tilde g)$, and define on $(M \backslash \partial M, g)$ the field of one forms $k_\alpha = \tilde k_\alpha$, with associated vector field $k^\alpha := g^{\alpha \beta}k_\beta$. It is clear that $k$ is a multiple WAND of $g$ in $M \backslash \partial M$ and a simple calculation (see e.g. \cite{marspeonKSKdS21}) shows $k$ is also a geodesic field affinely parametrized.
We next use the geodesic property of $k$ to study its extendability to $ \scri = \partial M$ for $\lambda >0$, for which we rely on the Cauchy problem (cf. Theorem \ref{theoexisuniq}). Unfortunately, this strategy cannot be used for $\lambda<0$ cases. Besides, since in this case $\scri$ is timelike, $\wand$ may approach it tangentially, so extendability of $\wand$ need not be guaranteed from a continuity argument. Thus, for $\lambda <0$ we shall later on assume extendability of $\wand$ as part of our hypotheses.
% In order to do that, recall that a convex neighbourhood $\mathcal U$ of a point $p$, is an open neighbourhood which is also a normal neighbourhood for any point $q \in \mathcal U$. An elementary result of semi-riemannian geometry is that every point $p \in M$ admits a convex neighbourhood
%  (cf. \cite{oneil}).

\begin{lemma}\label{lemmaextendk} Let $k$ be a null geodesic vector field of $\lambda>0$ vacuum spacetime admitting a smooth $\scri$. Assume that there exists $U$ a neighbourhood of $\scri$ such that $k$ is nowhere vanishing on $U \backslash \scri$. Then $k$ smoothly extends to $\scri$ as a non-identically zero vector.
%  Let $U$ be a neighbourhood of $\scri$ such that $k$ is nowhere vanishing on $U \backslash \partial U$. Then $k$ smoothly extends to $\scri$ as a non-identically zero vector.
\end{lemma}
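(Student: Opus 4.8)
The plan is to work in a conformal gauge adapted to the Fefferman–Graham expansion (so that $u = |\lambda|^{1/2}\grad_g\Om$ is geodesic and $g$ takes the Poincaré form \eqref{eqpoincareform}), and to extract an ODE along the integral curves of $k$ whose coefficients are controlled up to $\scri$. Write $k^\alpha = f u^\alpha + k^\alpha_\parallel$ with $k_\parallel$ tangent to the $\{\Om = \text{const}\}$ leaves and $f = \epsilon|\lambda|^{-1/2} u_\alpha k^\alpha$. Since $k$ is null and $u$ is unit, $f^2 = -\epsilon\, \gamma(k_\parallel,k_\parallel)$, so $|f|$ and $|k_\parallel|_\gamma$ are comparable; it therefore suffices to show that $f$ together with $k_\parallel$ extends smoothly to $\scri$ and does not vanish identically there. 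The key input is that $k$ is affinely parametrized geodesic, which gives $k^\beta\nabla_\beta k^\alpha = 0$; contracting with $u$ and with the leaf projector $\gamma$, and using $a = \nabla_u u = 0$ together with $\nabla_\alpha u_\beta = K_{\alpha\beta}$ and the fall-off $K = O(\Om)$ from \eqref{foffcircK}, I would obtain a first-order transport system for $(f, k_\parallel)$ along $k$ of the schematic form
\begin{align*}
 k(f) = \epsilon\, K(k_\parallel,k_\parallel) = O(\Om)\cdot(\text{quadratic in } k),\qquad
 \nabla^{(\gamma)}_{k}k_\parallel = -\,\epsilon f\, K(k_\parallel,\cdot) + \text{(lower order)} = O(\Om)\cdot(\text{quadratic in }k),
\end{align*}
where $\nabla^{(\gamma)}$ denotes the induced leaf connection.

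Next I would change the parameter. The affine parameter $\tau$ of $k$ in $(M,g)$ is not the natural variable near $\scri$; instead, since $u$ is geodesic and $g = -\lambda^{-1}d\Om^2 + \gamma$, along any causal curve $\Om$ is monotone near $\scri$, and one computes $\dfrac{d\Om}{d\tau} = \grad_g\Om(k) = |\lambda|^{1/2}\epsilon f$. Hence, provided $f$ stays bounded away from $0$, one can reparametrize by $\Om$ and the transport system becomes a regular (non-singular) ODE system in $\Om$ on $[0,\ell)$ with coefficients that are smooth up to $\Om = 0$ (here I use Theorem \ref{theoexisuniq} and the Fefferman–Graham expansion \eqref{eqexpalln}, which guarantee $\gamma$, hence $K$ and the leaf connection, are smooth functions of $\Om$ and $x^i$ up to $\scri$ — in odd $\n$ up to the first log term, which is still $C^{\n-2}$ and in particular $C^1$, enough for an ODE existence/uniqueness argument). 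By standard ODE theory, solutions extend smoothly to the closed interval; thus $(f,k_\parallel)$, and therefore $k$, extends smoothly to $\scri$.

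It remains to rule out $k|_{\scri} \equiv 0$. Suppose $f \to 0$ along some integral curve reaching $\scri$; then also $k_\parallel \to 0$ there by the null relation. But the transport equations show $|k(f)| \leq C\,\Om\,|k|^2$ and similarly for $k_\parallel$, so $|k|^2$ satisfies a differential inequality $\big|\tfrac{d}{d\tau}|k|^2\big| \leq C\Om|k|^2$ (norms taken in an auxiliary fixed Riemannian metric on a neighbourhood of a point of $\scri$); integrating, $|k|^2$ cannot decay to $0$ in finite affine parameter unless it is identically $0$ on that curve — and $k$ is nowhere zero on $U\setminus\scri$ by hypothesis. A cleaner route, which I would actually use: rescale $\hat k := f^{-1} k$ where $f$ is replaced by (a smooth extension of) $\epsilon|\lambda|^{-1/2}u(\Om^{\text{-like}})$... more simply, note that the \emph{unit-rescaled} field $k/|k|_{\hat g}$ (with $\hat g$ any fixed background metric near a boundary point) satisfies, by the above, a \emph{regular} linear ODE in $\Om$ with bounded coefficients, so it has a well-defined nonzero limit at $\scri$; multiplying back by $|k|_{\hat g}$, which is continuous and positive on $U\setminus\scri$ and extends continuously, shows $k$ extends and is not identically zero on $\scri$ (it could vanish at isolated points, which is all the statement claims — ``non-identically zero'').

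The main obstacle I anticipate is the reparametrization step: a priori $f$ could vanish at $\scri$, so one cannot assume from the outset that $\Om$ is a good parameter all the way to the boundary. The fix is to run the argument with the direction field $k/|k|$ (which always satisfies a regular ODE) rather than $k$ itself, deduce extendability of the direction, and only then recover extendability of $k$ from the separately-controlled conformal weight $|k|$; the fall-off $K = O(\Om)$ from \eqref{foffcircK}, which is exactly what forces the transport coefficients to stay bounded (indeed to vanish) at $\scri$, is the crucial geometric fact making this work, and it is special to $\lambda > 0$ with a spacelike $\scri$ — which is why the lemma is stated only in that case.
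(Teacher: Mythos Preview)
Your approach is analytically motivated and genuinely different from the paper's, which is a global causal/topological argument. The paper restricts to a precompact domain of the (spacelike) future boundary, takes $M=D^-(\scri^+)$ and a Cauchy surface $\Sigma$, and uses that every null geodesic from $\mathring\Sigma$ with initial tangent $k$ must hit $\scri^+$ (global hyperbolicity). This defines a map $\phi:\mathring\Sigma\to\mathring\scri^+$; the paper shows its image $\mathcal U$ is open (Brouwer invariance of domain) and closed (compactness of $\overline\Sigma$), hence all of $\mathring\scri^+$, so the smooth geodesic flow of $g$ carries $k$ smoothly onto $\scri$. Non-vanishing on $\scri$ is automatic, since the value of $k$ there is the tangent to a non-trivially parametrized geodesic.

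Your ODE idea has the right local kernel --- the geodesic equation for $k$ in the \emph{conformally extended} metric $g$ is a smooth ODE, so along any single integral curve that reaches $\scri$ the tangent extends smoothly and nonzero --- but there are two genuine gaps that you do not close. First, you never establish that integral curves actually reach $\{\Om=0\}$. You need $f\neq 0$ to reparametrize by $\Om$, acknowledge this obstacle, and propose to pass to the direction field $k/|k|_{\hat g}$; but that only controls the \emph{direction} along a curve that is already known to reach $\scri$, it does not prevent the curve from stalling at some $\Om_0>0$ in infinite affine parameter. The missing input is precisely the causal fact the paper uses: for $\lambda>0$, $\scri$ is spacelike and is a Cauchy surface for a collar neighbourhood, so inextendible null curves must meet it. Second, you argue only along individual integral curves; you never show that the pointwise limits assemble into a \emph{smooth vector field} on $\scri$. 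For that you need the flow to cover a full neighbourhood of $\scri$ and to depend smoothly on the starting point --- exactly what the paper's open/closed argument (and the smoothness of the geodesic flow of $g$ on $TM$) provides. Your differential inequality $|\tfrac{d}{d\tau}|k|^2|\le C\Om|k|^2$ is fine for ruling out vanishing at a finite parameter value, but it does not address either of these two points.
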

\begin{proof}

% Let us restrict to some compact connected component of the future (spacelike) boundary, which for simplicity we identify with the total component $\scri^+$.
The argument is local so  we may consider a precompact domain (i.e. a non-empty, open and connected subset with compact closure) of the future (spacelike) boundary. We further assume that the boundary of the domain is smooth and non-empty. For simplicity we still denote this compact manifold as $\scri^+$.
Its interior will be denoted $\mathring \scri^+$. We shall restrict to the manifolds $M = D^-(\scri^+)$ and $\mathring M = D^-(\mathring \scri^+)$, where $D^-$ denotes past domain of dependence. Let $\Sigma$ be a Cauchy hypersurface in $M$ and $\mathring \Sigma$ its interior, which is a Cauchy hypersurface of $\mathring M$.
 By construction $\partial \Sigma =  \partial \scri^+$
 and we may also assume that $\Sigma$ and $\scri^+$ only intersect at this boundary, i.e.  $\mathring \Sigma \cap \mathring \scri^+  =\emptyset$.
% , where the latter boundary is understood wrt the topology of $M$.
Also by construction of $M$, geodesics with initial velocity $\wand$ starting at points $p \in M \backslash \scri^+$ must intersect $\scri^+$ at exactly one point (see e.g. \cite{hakwingellis}). Let $\U$ be defined as the subset of points of $\mathring \scri^+$ obtained by intersection with geodesics starting at $\mathring \Sigma$. The corresponding map $\phi$ that sends points in $\mathring{\Sigma}$ to points in ${\cal U}$ is clearly continuous and, by global hyperbolicity it is also a bijection. By construction $\wand$ extends smoothly to ${\cal U}$. We want to show that $\U$ is open and closed in $\mathring \scri^+$ so that $\U = \mathring \scri^+$.

To show that it is open, we first note that $\phi$ is injective. The Theorem of Invariance of Domain   \cite{brouwer} (see also \cite{cao}) ensures that injections from open domains $V$ of $\mathbb{R}^n$ into $\mathbb{R}^n$ are homeomorphisms, namely, $f(V)$ is open. Thus, suitably restricting to coordinate patches, it follows that $\U$ is open in $\mathring \scri^+$.

Let us now $\mathring \partial \U$ denote the boundary of $\U$ in the topology of $\mathring \scri^+$. If $\mathring \partial \U = \emptyset$, closeness follows trivially. Otherwise, consider a point $r \in \mathring \partial \U$,  which observe,  $r \in \mathring \scri^+$.  Let $\{r_i \} \subset \U$ be a Cauchy sequence converging to $r$ and let $\{p_i \} \subset \mathring \Sigma$ the sequence of starting points of the corresponding geodesics.
% This sequence necessarily converges in $\Sigma$,
By compactness of $\Sigma$, there is a subsequence (still denoted  as $\{ p_i\}$) converging,
so let $p$ be the limit point of $\{ p_i\}$. If $p \in \partial \Sigma$,
 since $\partial \Sigma = \partial \scri^+$, it must be that $r \in \partial \scri^+$.  More precisely $p = r$ because they both lie in the same spacelike surface, and they are, respectively, limit points of sequences $\{p_i\}$ and $\{r_i\}$, whose elements are joined by a null geodesic.
 Therefore $p \in \partial \Sigma$ is in contradiction with the fact that $r \in \mathring \scri^+$.
Hence, it can only be that $p \in \mathring \Sigma$,
 then $r \in \U$ so $\mathring\partial U \subset \U$. Thus, $\U$ is closed in $\mathring\scri^+$.

% PREGUNTA: Para demostrar esta última frase dijimos que $\partial \Sigma = \partial \scri^+$... Esto es lo que me resulta extraño y quizás no estoy entendiendo bien. Las superficies de Cauchy son cerradas y el dominio de dependencia $D^-(\scri^+)$ tiene topología $\scri^+ \times \mathbb{R}$, así que me extraña que dos superficies de Cauchy ( $\Sigma$ y $\scri^+$) puedan tener borde coincidente y por tanto intersección no vacía... La pregunta es: ¿De dónde se sigue que $r \in \partial \scri^+$?

% Si lo que digo arriba tiene sentido, entonces para demostrar que si $p \in \partial \Sigma$ entonces $r \in \partial \scri^ +$, debemos utilizar que $\Sigma$ es superficie de Cauchy de $M$

\end{proof}

% Suppose that at $\partial M$ is less specialg
We now return to arbitrary sign of $\lambda \neq 0$. We select a multiple WAND $k$ on a neighbourhood of $\scri$, which we assume it extends transversally to $\scri$ if $\lambda<0$. The latter assumption is not necessary if $\lambda >0$ because of Lemma \ref{lemmaextendk}.  As before, let  $u := |\lambda|^{-1/2} \grad_g \Om$ be the unit non-null vector of squared norm $\epsilon$ and let also $\y$ be an orthogonal complement to $\u$ in $\mathrm{span}\{\wand,\u \}$, which has squared norm $-\epsilon$. Define the nowhere zero function $s : = \epsilon u_\alpha \wand^\alpha$ . Then we can decompose $\wand$ and define the null vector $\snlight$ as
\begin{equation}\label{deckl}
 \wand = s( \u + \y),\qquad \snlight := -\frac{\epsilon}{2} s^{-1}( \u -\y),
\end{equation}
 which we complete to a semi-null frame \eqref{snframe}. Thus, using the notation introduced in \eqref{eqweylindex}  and \eqref{deckl}, after suitably removing non-zero common factors, type $\I$ condition \eqref{eqtypeI} reads
\begin{equation}\label{eqtypeI5d}
 \Wper_{\mu\nu}\h^\mu{}_{\mu'}\h^\nu{}_{\nu'} + 2\Wppar_{(\mu|\alpha|\nu)}\y^\alpha\h^\mu{}_{\mu'}\h^\nu{}_{\nu'} + \Wpar_{\alpha\mu\beta\nu}\y^\alpha\y^\beta\h^\mu{}_{\mu'}\h^\nu{}_{\nu'} = 0,
\end{equation}
while type $\Ia$  \eqref{eqtypeII} and type $\II$  \eqref{eqtypeIIreal} become, respectively,
 \begin{align}
 \Wper_{\mu\nu}\y^\mu\h^\nu{}_{\nu'} + \Wppar_{\alpha\beta\nu}\y^\alpha \y^\beta\h^\nu{}_{\nu'} & = 0,\label{eqtypeII5d}\\
  \Wppar_{\mu\beta\nu}\h^\mu{}_{\mu'}\h^\beta{}_{\beta'}\h^\nu{}_{\nu'} + \Wpar_{\alpha\mu\beta\nu}\y^\alpha\h^\mu{}_{\mu'}\h^\beta{}_{\beta'}\h^\nu{}_{\nu'}& = 0. \label{eqtypeII5d2}
%  \left(\Weyl_g^{\, \perp \parallel}\right)_{\mu\nu\beta}\h^\mu{}_{\mu'}\h^\nu{}_{\nu'}\h^\beta{}_{\beta'} + \left(\Weyl_g^{ \parallel}\right)_{\alpha\mu\nu\beta}\y^\alpha\h^\mu{}_{\mu'}\h^\nu{}_{\nu'}\h^\beta{}_{\beta'} & = 0.\label{eqtypeIIb2}
\end{align}

% \begin{equation}\label{eqtypeII5d2}
%   \Wppar_{\beta\mu\nu}\h^\beta{}_{\beta'}\h^\mu{}_{\mu'}\h^\nu{}_{\nu'} + \Wpar_{\alpha\beta\mu\nu}\y^\alpha\h^\beta{}_{\beta'}\h^\mu{}_{\mu'}\h^\nu{}_{\nu'}= 0.
% \end{equation}

\subsection{Four dimensional case}\label{sec4d}

In four dimensions, inserting equation \eqref{eqweylpar4d} into \eqref{eqtypeI5d}, \eqref{eqtypeII5d} and \eqref{eqtypeII5d2} provides a relation between $\Weyl_g^{\, \perp \parallel}$ and $\Weyl_g^{\, \perp}$.
% This is because by particularizing \eqref{aux1} to this case yields
% \begin{align}
%  \Weyl_g^{\, \parallel}  & = \Weyl_\u(\Rtt) - \frac{\epsilon}{\n-3} \Weyl_g^{\, \perp} \owedge \gamma = - \epsilon \Weyl_g^{\, \perp} \owedge \gamma,
% \end{align}
% which can be immediately shown by taking into account that $\Weyl_\u(\Rtt)$ is a three-dimensional Weyl-like tensor, thus identically zero.
% In index notation,
First we expand \eqref{eqweylpar4d} using  \eqref{KN}
\begin{align}
 \Wpar_{\alpha\beta\mu\nu} =-\epsilon(\Weyl_g^\perp \owedge \gamma)_{\alpha\beta\mu\nu} = -\epsilon(\Wper_{\alpha\mu} \gamma_{\beta \nu} -\Wper_{\alpha\nu} \gamma_{ \beta\mu}- \Wper_{\beta\mu} \gamma_{\alpha \nu} + \Wper_{\beta\nu} \gamma_{\alpha \mu})
\end{align}
and insert it into \eqref{eqtypeI5d}, \eqref{eqtypeII5d} and \eqref{eqtypeII5d2} to get
\begin{align}
 2 \Wper_{\mu\nu}\h^\mu{}_{\mu'}\h^\nu{}_{\nu'} + 2\Wppar_{(\mu|\alpha|\nu)}\y^\alpha\h^\mu{}_{\mu'}\h^\nu{}_{\nu'} -\epsilon \Wper_{\alpha\beta}\y^\alpha \y^\beta \h_{\mu'\nu'} & = 0,\label{tI4D}\\
 \Wper_{\mu\nu}\y^\mu\h^\nu{}_{\nu'} + \Wppar_{\alpha\beta\nu}\y^\alpha \y^\beta\h^\nu{}_{\nu'} & = 0,\label{tII4D}\\
  \Wppar_{\mu\beta\nu}\h^\mu{}_{\mu'}\h^\beta{}_{\beta'}\h^\nu{}_{\nu'} -2\epsilon \Wper_{\alpha\mu}\y^\alpha \h^\mu{}_{[\mu'}\h_{\nu']\beta'} & = 0.\label{tII4Db}
\end{align}
Recall that, in four dimensions, equation \eqref{tII4Db} follows from \eqref{tII4D} and the traceless property of the Weyl tensor, so without loss of generality we shall only consider  \eqref{tI4D} and  \eqref{tII4D} in the remainder.

Let us now decompose $\Wper$ into components parallel and perpendicular to $\y$:
\begin{align}
 \Wper_{\mu'\nu'} = \Wper_{\alpha\beta}\y^\alpha \y^\beta \y_{\mu'}y_{\nu'} -2 \epsilon \Wper_{\alpha\nu}\y^\alpha \y_{(\mu'}\h^{\nu}{}_{\nu')} + \Wper_{\mu\nu}\h^\mu{}_{\mu'} \h^\nu{}_{\nu'}.
\end{align}
Substituting $\Wper_{\mu\nu}\h^\mu{}_{\mu'} \h^\nu{}_{\nu'}$ according to \eqref{tI4D} and $\Wper_{\alpha\nu}\y^\alpha \h^{\nu}{}_{\nu'}$ according to \eqref{tII4D} gives
\begin{align}
 \Wper_{\mu'\nu'} & = \Wper_{\alpha\beta}\y^\alpha \y^\beta \y_{\mu'}y_{\nu'}
 +2 \epsilon \Wppar_{\alpha\beta\nu}\y^\alpha \y^\beta \y_{(\mu'}\h^{\nu}{}_{\nu')}
 - \Wppar_{(\mu|\alpha|\nu)}\y^\alpha\h^\mu{}_{\mu'}\h^\nu{}_{\nu'} +\frac{1}{2}\epsilon \Wper_{\alpha\beta}\y^\alpha \y^\beta \h_{\mu'\nu'} \\
 & =
  \Wper_{\alpha\beta}\y^\alpha \y^\beta \left(\y_{\mu'}y_{\nu'}  + \frac{\epsilon}{2} \h_{\mu'\nu'}\right)
 +2 \epsilon \Wppar_{\alpha\beta\nu}\y^\alpha \y^\beta \y_{(\mu'}\h^{\nu}{}_{\nu')}
 - \Wppar_{(\mu|\alpha|\nu)}\y^\alpha\h^\mu{}_{\mu'}\h^\nu{}_{\nu'} \\
 & =
  \frac{3}{2}\Wper_{\alpha\beta}\y^\alpha \y^\beta \left(\y_{\mu'}y_{\nu'}  + \frac{\epsilon}{3} \gamma_{\mu'\nu'}\right)
 +2 \epsilon \Wppar_{\alpha\beta\nu}\y^\alpha \y^\beta \y_{(\mu'}\h^{\nu}{}_{\nu')}
 - \Wppar_{(\mu|\alpha|\nu)}\y^\alpha\h^\mu{}_{\mu'}\h^\nu{}_{\nu'}\label{WpertypeII}
\end{align}
where in the last step we use $\h_{\mu'\nu'} = \gamma_{\mu'\nu'}+\epsilon \y_{\mu'} \y_{\nu'} $. Note that the contraction of \eqref{WpertypeII} with $y^{\mu'} y^{\nu'}$ gives a trivial identity. Thus, it does not determine the term $W^{\perp}_{\alpha\beta} \y^{\alpha}\y^{\beta}$. This observation will be relevant in the following arguments in order to keep track of the degrees of freedom in the final expressions. Namely, analyzing the leading order term in $\Omega$ of \eqref{WpertypeII} we shall find an expression for the electric part of the rescaled Weyl tensor at $\scri$, $\Wperscr = \Om^{-1} \Wper\mid_{\Omega = 0}$, in terms of the metric at $\scri$, $\gamma_0$,  a vector field $\y$ and a free function $\f$ (related to  $W^{\perp}_{\alpha\beta} \y^{\alpha}\y^{\beta}$) at $\scri$. Additionally, from the subleading order in $\Omega$ of \eqref{WpertypeII} we shall also obtain a differential  relation constraining  $\Wperscr,\gamma_0, \y$ and $f$.

% This is interesting because $\Wperscr$ together with $\gamma_0$ play the role of asymptotic data in the positive $\lambda$ case. This means that fixing $\gamma_0$, then solving the constraint (if possible)\carlos{a good question is to determine whether it is always possible or not} explicitly determines $\Wperscr$. We shall illustrate this in the particular case in which $\gamma_0$ is locally conformally flat, where the $\Wperscr$ obtained, together with $\gamma_0$, will be shown to be the data corresponding to Kerr-de Sitter and related metrics (cf. \cite{marspaetzseno16}).

%  In the remainder of this section,  most of the calculations will be carried out at $\scri$. In order to avoid excess of notation, we just indicate that all of the objects are evaluated at $\scri$ by using the symbol $\eqscr$.

\bigskip

% We shall next calculate zero, first and second order terms in the asymptotic expansion of different objects. Zero order terms are obtained by evaluating the corresponding object at $\scri$, first and second orders by evaluating its first and second order derivative with respect to $\partial_\Om$.
% When evaluating a quantity at $\scri$ we shall indicate this by using the following symbol $``\eqscr"$ in the first equality and restriction to $\scri$ will be implicit thereon.

When evaluating a quantity at $\scri$ we shall indicate this by using the symbol $``\eqscr"$ in the first equality and restriction to $\scri$ will be implicit thereon.
% In order to  simplify notation, we use $\eqo{i}$ to denote that all terms following correspond to the $i$-th order term. After a $\eqo{i}$ sign, all terms and equalities following are to be understood at $\scri$.
Using this notation, the electric part of the rescaled Weyl tensor is  (cf. \eqref{eqexpWper})
\begin{equation}
% \label{FOWper}
 \fop \Wper \eqscr  \Wperscr.
\end{equation}
%  so we define
% \begin{equation}
% % \label{defF}
% \frac{3}{2}\fop\left(\Wper\right)_{\alpha\beta}\y^\alpha \y^\beta  \eqscr  \frac{3}{2}\Wperscr_{\alpha\beta}\y^\alpha \y^\beta  = :  \mathcal{F}.
% \end{equation}
Additionally, we need to define the acceleration of the vector $\y$ at $\scri$
\begin{equation}\label{acyscri}
 \y^\alpha D_\alpha \y^\beta \eqscr: \ay^\beta
\end{equation}

To obtain the first order equation in $\Om$ of \eqref{WpertypeII} we apply $\fop$ and evaluate at $\scri$, substituting the derivatives of $\Wper$ and $\Wppar$ by the corresponding terms in \eqref{eqexpWper} and  \eqref{eqexpWperpar} respectively. A similar idea applies to second order in $\Om$ of \eqref{WpertypeII}, although this time we apply $(1/2)\fop^2$ and evaluate at $\scri$. The details of the calculations are omitted here; a complete derivation is provided in Appendix \ref{appexp}.

The first order term in the $\Om$ expansion of \eqref{WpertypeII} gives the equation
\begin{align}
 \Wperscr_{\mu'\nu'}   \eqscr
  \mathcal{F} \left(\y_{\mu'}y_{\nu'}  + \frac{\epsilon}{3} \gamma_{\mu'\nu'}\right)
 -2\epsilon|\lambda|^{-1/2}  (\Cot_{\Sigma})_{\alpha\beta\nu}\y^\alpha \y^\beta \y_{(\mu'}\h^{\nu}{}_{\nu')}
 +|\lambda|^{-1/2} (\Cot_{\Sigma})_{(\mu|\alpha|\nu)}\y^\alpha\h^\mu{}_{\mu'}\h^\nu{}_{\nu'}, \label{WpertypeIIO1}
\end{align}
while the $\Om^2$ order yields
\begin{align}
  (\Bac_\Sigma)_{\mu'\nu'}
  =&  \mathcal{G}\left(\y_{\mu'}\y_{\nu'} + \frac{\epsilon}{3}\gamma_{\mu'\nu'}\right) + 2 \epsilon \F   |\lambda|^{1/2} \ay_{(\mu'}\y_{\nu')}  \\
  +  &   (\Cot_{\Sigma})_{\alpha\beta\nu}\left(\y^{\alpha}\y^{\beta}\ay^\nu \h_{\mu'\nu'} -  4 \ay^{(\alpha}\y^{\beta)}\y_{(\mu'}\h^\nu{}_{\nu')}-2\y^\alpha \y^\beta  \h^{\nu}{}_{(\nu'} \ay_{\mu')} \right) \\
 +& (\Cot_\Sigma)_{\mu\alpha\nu}\left( \epsilon \ay^\alpha h^{\mu}{}_{(\mu'}  h^{\nu}{}_{\nu')} +2  \y^\alpha\y^{(\mu} \h^{\nu)}{}_{(\nu'}\ay_{\mu')} +2 \y^\alpha\ay^{(\mu} \h^{\nu)}{}_{(\nu'}\y_{\mu')}\right)\\
 + & 4 |\lambda|^{1/2}  D_{[\nu}\Wperscr_{\beta]\alpha}\y^\alpha \y^\beta\y_{(\mu'}\h^\nu{}_{\nu')} - 2\epsilon|\lambda|^{1/2}D_{[\mu} \Wperscr_{\alpha]\nu}\y^\alpha\h^\mu{}_{(\mu'}\h^\nu{}_{\nu')} . \label{2ndOfull}
\end{align}
Where $\mathcal{F}$ and $\mathcal{G}$ are scalar functions prescribing, respectively, $\frac{3}{2}\Wperscr_{\alpha\beta}\y^\alpha \y^\beta$ and $\frac{3}{2}(\Bac_\Sigma)_{\alpha\beta}\y^\alpha \y^\beta$.
We interpret the system of equations \eqref{WpertypeIIO1}–\eqref{2ndOfull} as follows. Equation \eqref{WpertypeIIO1} expresses $\Wperscr$ in terms of the asymptotic metric $\gamma$, a vector field $\y$, and a scalar function $\mathcal{F}$. Substituting \eqref{WpertypeIIO1} into \eqref{2ndOfull} transforms the latter into a constraint differential equation for $\mathcal{F}$, $\mathcal{G}$, $\y$, and $\Cot_\Sigma$. Thus, for a fixed choice of $\gamma$, one may attempt to solve \eqref{2ndOfull} to determine $\mathcal{F}$, $\mathcal{G}$ and $\y$, and subsequently determine $\Wperscr$ via \eqref{WpertypeIIO1}.

This is particularly relevant in the case $\lambda > 0$, where $(\gamma, \Wperscr)$ determines the asymptotic initial data for $\lambda > 0$ vacuum spacetimes (see Theorem \ref{theoexisuniq} and \eqref{ewg3wperp}). A solution to the system \eqref{WpertypeIIO1}–\eqref{2ndOfull} therefore defines a class of initial data corresponding to a subset of algebraically special metrics with prescribed asymptotic geometry.
Whether this class exhausts all algebraically special metrics with a given $\gamma$ depends on whether satisfying \eqref{WpertypeIIO1}–\eqref{2ndOfull} asymptotically is a sufficient condition for the spacetime to be of type $\mathrm{II}$ in a neighborhood of $\scri$. This question would require an analysis of the evolution of the Weyl tensor, for instance, through the second Bianchi identity. While it is not implausible that \eqref{WpertypeIIO1}–\eqref{2ndOfull} may turn out to be sufficient, this question lies beyond the scope of this paper.

%
% We understand the system of equations  \eqref{WpertypeIIO1}-\eqref{2ndOfull} as follows. Equation \eqref{WpertypeIIO1} gives $\Wperscr$ in terms of the asymptotic metric $\gamma$, a vector field $\y$ and a function $\F$. If we substitute  \eqref{WpertypeIIO1} into \eqref{2ndOfull}, we may read the result as a constraint differential equation for $\mathcal{F},\y$ and $\Cot_\Sigma$ (note that $g_{(4)}$ is also constructed from $\gamma$, cf. Appendix \ref{appg4}). Thus, for a fix choice of $\gamma$, we may be able to find the general solution of \eqref{2ndOfull}, obtaining $\mathcal{F},\y$. Plugging them back into \eqref{WpertypeIIO1} determines $\Wperscr$. This is particularly interesting in the $\lambda>0$ case, because fixes $(\gamma,\Wperscr)$ the asymptotic initial data of $\lambda>0$-vacuum metrics (cf. Theorem \ref{theoexisuniq}). Such a solution to \eqref{WpertypeIIO1}-\eqref{2ndOfull} determines a class of initial data  corresponding to a subset of algebraically special metrics with prescribed asymptotic geometry. Whether this class exhaust the whole set of algebraically special metrics with a fixed   $\gamma$ requires to test if satisfying \eqref{WpertypeIIO1}-\eqref{2ndOfull} asymptotically is a sufficient  condition for being type $\II$ in a neighbourhood of $\scri$. Such an argument would require to study the evolution of the Weyl tensor, e.g. via the second Bianchi identity. It is not implausible that this may be the case, but it is beyond the scope of this paper.

\bigskip

In this paper, we illustrate the idea in the previous paragraph restricting ourselves to the simplest scenario, namely, imposing  $\gamma$ to be locally conformally flat. This means
 that $\Cot_\Sigma = 0$ and \eqref{WpertypeIIO1} reads
\begin{align}
 \Wperscr_{\alpha\beta} =
  \mathcal{F} \left(\y_{\alpha}y_{\beta}  + \frac{\epsilon}{3} \gamma_{\alpha\beta}\right).
\end{align}
Now two possibilities arise. If $\F = 0$ then $\Wperscr = 0$ and the spacetime constructed from the data $(\gamma,\Wperscr = 0)$ is  locally diffeomorphic to de Sitter in a neighbourhood of $\scri$ (see e.g. \cite{Mars_scri1}), which concludes the analysis for this case. When  $\mathcal{F} \neq 0$ it turns out to be useful to  redefine  $\mathcal{F} = \f^{-3}$ so that
\begin{align}
 \Wperscr_{\alpha\beta} =
  \frac{1}{\f^3} \left(\y_{\alpha}y_{\beta}  + \frac{\epsilon}{3} \gamma_{\alpha\beta}\right).\label{WperloconflatO1}
\end{align}
From \eqref{WperloconflatO1} it is clear that in this case
\begin{align}\label{Wperscrayy}
 \Wperscr_{\alpha\beta} \ay^{(\alpha}\y^{\beta)} = 0,
\end{align}
 because recall $\ay^\alpha = \y^\beta D_\beta \y^\alpha$ and therefore $\ay_\alpha \y^\alpha = 0$.

 On the other hand,  $\gamma$ locally conformally flat implies that $\Bac_\Sigma = 0$. Thus, by definition $\mathcal{G} = 0$ (cf. \eqref{defG}), so   \eqref{2ndOfull} becomes,
after factoring out  $|\lambda|^{1/2}$,
 \begin{align}
   0
  = \frac{2 \epsilon }{f^3}  \ay_{(\mu'}\y_{\nu')}
  + & 4   D_{[\nu}\Wperscr_{\beta]\alpha}\y^\alpha \y^\beta \h^{\nu}{}_{(\nu'}\y_{\mu')}
  -  2\epsilon D_{[\mu} \Wperscr_{\alpha]\nu}\y^\alpha\h^\mu{}_{(\mu'}\h^\nu{}_{\nu')} . \label{WpertypeIIconflatO2}
\end{align}
Our claim now is that equation \eqref{WpertypeIIconflatO2} implies that the vector field $\xi := f \y$ is a conformal Killing vector field of $\gamma$. This will allow us to exploit the classification results obtained in \cite{marspeonKSKdS21}, as we describe in more detail below.

 Using  \eqref{WperloconflatO1} the derivative $D_{\nu} \Wperscr_{\beta\alpha}$ is
\begin{align}
  D_{\nu}\Wperscr_{\beta\alpha} = \frac{-3}{f^4} D_\nu f \left( \y_\beta \y_\alpha + \frac{\epsilon}{3} \gamma_{\beta \alpha}\right) + \frac{1}{f^3} \left(\y_\alpha D_\nu \y_\beta  + \y_\beta D_\nu \y_\alpha \right)
\end{align}
from which
\begin{align}
  D_{[\nu}\Wperscr_{\beta]\alpha}\y^\alpha \y^\beta \h^{\nu}{}_{(\nu'}\y_{\mu')}  &= \frac{-3}{f^4} D_{[\nu }f \left( \y_{\beta]} \y_\alpha + \frac{\epsilon}{3} \gamma_{\beta] \alpha}\right) \y^\alpha \y^\beta \h^{\nu}{}_{(\nu'}\y_{\mu')} + \frac{1}{f^3} \left( \y_\alpha D_{[\nu} \y_{\beta]}  + \y_{[\beta} D_{\nu]} \y_\alpha \right) \y^\alpha \y^\beta \h^{\nu}{}_{(\nu'}\y_{\mu')} \\ & =
 \frac{-1}{f^4} D_\nu f  \h^{\nu}{}_{(\nu'}\y_{\mu')} +
  \frac{1}{f^3} \frac{\epsilon}{2} \y^\beta D_{\beta} \y_{\nu}    \h^{\nu}{}_{(\nu'}\y_{\mu')} =
  \frac{-1}{f^4} D_\nu f  \h^{\nu}{}_{(\nu'}\y_{\mu')} +
  \frac{1}{f^3} \frac{\epsilon}{2} \ay_{(\nu'}\y_{\mu')}
\end{align}
and
\begin{align}
   D_{[\mu} \Wperscr_{\alpha]\nu}\y^\alpha\h^\mu{}_{(\mu'}\h^\nu{}_{\nu')}   &= \frac{-3}{f^4} D_{[\mu }f \left( \y_{\alpha]} \y_\nu + \frac{\epsilon}{3} \gamma_{\alpha] \nu}\right) \y^\alpha\h^\mu{}_{(\mu'}\h^\nu{}_{\nu')}  +
   \frac{1}{f^3} \left(\y_\nu D_{[\mu} \y_{\alpha]}  + \y_{[\alpha} D_{\mu]} \y_\nu \right) \y^\alpha\h^\mu{}_{(\mu'}\h^\nu{}_{\nu')}  \\ & =
 \frac{1}{f^4} \frac{\epsilon}{2}\y^\alpha D_\alpha f  \h_{\mu'\nu'}
 - \frac{1}{f^3} \frac{\epsilon}{2} D_{\mu} \y_{\nu}   \h^\mu{}_{(\mu'}\h^\nu{}_{\nu')}.
\end{align}
Thus \eqref{WpertypeIIconflatO2} reads
\begin{align}
   0
  & =
  \frac{2\epsilon}{f^3}   \ay_{(\mu'}\y_{\nu')}
  -
    \frac{4}{f^4} D_\nu f  \h^{\nu}{}_{(\nu'}\y_{\mu')}
    +
  \frac{2\epsilon}{f^3}  \ay_{(\nu'}\y_{\mu')}
  - \frac{1}{f^4} \y^\alpha D_\alpha f  \h_{\mu'\nu'}
 + \frac{1}{f^3} D_{\mu} \y_{\nu}   \h^\mu{}_{(\mu'}\h^\nu{}_{\nu')} \\ & = -\frac{4}{f^3} \left( \frac{1}{f}D_{\nu} f - \epsilon \ay_\nu \right)\h^{\nu}{}_{(\nu'}\y_{\mu')} - \frac{1}{f^4} \y^\alpha D_\alpha f  \h_{\mu'\nu'}
 + \frac{1}{f^3} D_{\mu} \y_{\nu}   \h^\mu{}_{(\mu'}\h^\nu{}_{\nu')}. \label{WpertypeIIconflatO2v2}
\end{align}
Now consider the following standard decomposition of the covariant derivative of a unit vector field $\y_\alpha$ in components parallel and orthogonal to $\y_\alpha$
\begin{equation}\label{eqdecnaby}
 \nabscr_\alpha \y_\beta = -\epsilon \y_\alpha \ay_\beta + \Pi_{\alpha \beta} + \frac{\h_{\alpha \beta}}{2} \L +  \w_{\alpha \beta},\qquad \L:= \nabscr_\alpha \y^\alpha,
%  \quad
%  \ay_\beta := \y^\alpha \nabscr_\alpha \y_\beta,
\end{equation}
where
\begin{equation}
 \Pi_{(\alpha \beta)} = \Pi_{\alpha \beta},\quad\quad {\Pi^\alpha}_\alpha = 0, \quad\quad \w_{[\alpha \beta]} = \w_{\alpha \beta},
\end{equation}
satisfy
\begin{equation}
 \y^\alpha \Pi_{\alpha \beta} = \y^\alpha \h_{\alpha \beta} = \y^\alpha \w_{\alpha \beta} = 0.
\end{equation}
In terms of these objects,  \eqref{WpertypeIIconflatO2v2} is
\begin{align}
0 =
& -4\left( \frac{1}{\f^4} D_\nu \f - \frac{\epsilon}{\f^3}\ay_\nu \right)\h^\nu{}_{(\nu'}\y_{\mu')}
- \left(\frac{1}{\f^4}\y^\alpha D_\alpha \f - \frac{\L}{2\f^3}\right) \h_{\mu' \nu'}
+\frac{1}{f^3} \Pi_{\mu'\nu'}.
\end{align}
One contraction with $\y^{\mu'}$ gives
\begin{align}\label{CKeq1}
  \h^\nu{}_{\nu'} D_\nu \f - \epsilon\f \ay_{\nu'}  = 0
\end{align}
while the projection orthogonal to $\y$ has, after multiplication by $\f^4$, trace and traceless parts
\begin{align}\label{CKeq2}
 \y^\alpha D_\alpha \f - \frac{\f\L}{2} = 0, \qquad \Pi_{\mu'\nu'} = 0 .
\end{align}
Equations \eqref{CKeq1}-\eqref{CKeq2} are a recast of the conformal Killing equation of $\xi = \f \y$, namely,
\begin{equation}\label{CKeqxi}
 2 D_{(\mu} \xi_{\nu)}  - \frac{2}{3} D_\alpha \xi^\alpha \gamma_{\mu \nu} = 0.
\end{equation}
This follows by subsituting $\xi = f \y$ and writing the result in terms of the quantities in \eqref{eqdecnaby}
\begin{align}
 2 D_{(\mu} \xi_{\nu)}  & = 2 \y_{(\mu} D_{\nu)}f + 2 f  D_{(\mu}y_{\nu)} = 2 \y_{(\mu} D_{\nu)}f  -2 \f \epsilon \y_{(\mu} \ay_{\nu)} + 2\f \Pi_{\mu\nu} +  \f \L \h_{\mu\nu},\\
 \frac{2}{3} D_\alpha \xi^\alpha \gamma_{\mu \nu} & = \frac{2}{3} (\y^\alpha D_\alpha f + f  D_\alpha \y^\alpha   ) \gamma_{\mu \nu} =\frac{2}{3} (\y^\alpha D_\alpha f + f  \L   ) \gamma_{\mu \nu},
\end{align}
so that \eqref{CKeqxi} takes the form
\begin{equation}\label{CKeqxi2}
  2 \y_{(\mu} D_{\nu)}f  -2 \f \epsilon \y_{(\mu} \ay_{\nu)} + 2\f \Pi_{\mu\nu} +  \f \L \h_{\mu\nu}  - \frac{2}{3} (\y^\alpha D_\alpha f + f  \L   ) \gamma_{\mu \nu} = 0.
\end{equation}
A full projection orthogonal to $\y$ of \eqref{CKeqxi2} yields
\begin{align}
  2\f \Pi_{\mu\nu} +  \f \L \h_{\mu\nu} - \frac{2}{3} (\y^\mu D_\mu f + f  \L   )\h_{\mu\nu} = 0,
\end{align}
whose trace and traceless parts are readily
equivalent to \eqref{CKeq2}. Similarly, a projection of \eqref{CKeqxi2} parallel to $\y$ in one index and orthogonal in the other index gives \eqref{CKeq1}. Also note that two contractions with $\y$ recover again the first equation in \eqref{CKeq2}, meaning that \eqref{CKeqxi2} is entirely equivalent to \eqref{CKeq1}-\eqref{CKeq2}.

\bigskip

In summary, we  have examined the asymptotic implications of the algebraic type $\II$ condition of the Weyl tensor up to  second order in $\Om$. In the cases in which the metric induced at $\scri$, $\gamma$, is locally conformally flat, this fixes the leading order of $\Wper_{\alpha\beta}$ to be of the form
\begin{align}
 \Wperscr_{\alpha\beta}
=  \frac{\kappa}{|\xi|_\gamma^5} \left(\xi_{\alpha}\xi_{\beta}  + \epsilon \frac{|\xi|_\gamma^2}{3} \gamma_{\alpha\beta}\right),\qquad \kappa  \in \{ 0,\pm 1\} , \label{WperloconflatO1c}
\end{align}
where $\xi = f \y$ and  $\kappa f := |f| =  |\xi|_{\gamma}$. The value $\kappa = 0$ is included so that \eqref{WperloconflatO1c} covers the case $\Wper_{\alpha\beta} = 0$ (see discussion before equation \eqref{WperloconflatO1}).
Note that \eqref{WperloconflatO1c} is just a recast of \eqref{WperloconflatO1} in terms of $\xi$, which we have  shown to be a conformal Killing vector of $\gamma$. The results up to this point hold for any sign of non-zero $\lambda$. However, as previously discussed, $(\gamma,\Wperscr)$ fixes the asymptotic data in the $\lambda>0$ vacuum case (cf. Theorem \ref{theoexisuniq} and equation \eqref{ewg3wperp}).

The above discussion can be reformulated as a theorem:
\begin{theorem}\label{theoKdSlike}
 Let $(\tilde M,\tilde g)$ be a a four dimensional Einstein manifold  with $\lambda \neq 0$, admitting a locally conformally flat $\scri$, and of algebraic type at least $\II$ with geodesic multiple WAND $\wand$. Assume that $\wand$ extends transversally to $\scri$ if $\lambda <0$.
 Then, the electric part of the Weyl tensor at $\scri$, $\Wperscr$, is of the form \eqref{WperloconflatO1c}, where $\xi$ is a conformal Killing vector of $\gamma$.
\end{theorem}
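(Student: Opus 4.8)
The plan is to assemble the chain of identities established in Section~\ref{sec4d}; almost all of the substantive work has already been carried out above, so the proof is mainly a matter of organising it and then specialising to a locally conformally flat $\scri$.

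First I would pass to a Poincar\'e quasi-Einstein conformal extension $(M,g;\Om)$ of $(\tilde M,\tilde g)$, which exists after fixing a representative $\gamma_0$ of the conformal class at $\scri$ and choosing the geodesic conformal gauge. By \cite{durkeereall} the multiple WAND can be taken geodesic, and the one-form $\wand_\alpha=\tilde k_\alpha$ defines a geodesic, affinely parametrised multiple WAND of $g$ away from $\scri$; for $\lambda>0$ it extends smoothly and non-trivially to $\scri$ by Lemma~\ref{lemmaextendk}, while for $\lambda<0$ this transversal extendability is assumed. With the unit normal $\u=|\lambda|^{-1/2}\grad_g\Om$ I would then introduce the vector $\y$ spanning, together with $\u$, the plane containing $\wand$, and build the adapted semi-null frame \eqref{deckl}, \eqref{snframe}.

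Next I would translate the type~$\II$ conditions. Combining the algebraic decomposition of $\Weyl_g$ from Section~\ref{secalgebraic} with the four-dimensional identity \eqref{eqweylpar4d} turns the frame conditions \eqref{eqtypeI5d}--\eqref{eqtypeII5d2} into \eqref{tI4D}--\eqref{tII4Db}; in four dimensions \eqref{tII4Db} follows from \eqref{tII4D} and tracelessness, so only \eqref{tI4D}--\eqref{tII4D} are independent, and eliminating the tangential and mixed pieces of $\Wper$ produces the single relation \eqref{WpertypeII}. I would then insert the Fefferman--Graham expansions of Section~\ref{secFG}, namely \eqref{eqexpWper} for $\Weyl_g^{\,\perp}$ and \eqref{eqexpWperpar} for $\Weyl_g^{\,\perp\parallel}$: applying $\fop$ and evaluating at $\scri$ yields the first-order equation \eqref{WpertypeIIO1}, and applying $\frac{1}{2}\fop^2$ yields the second-order equation \eqref{2ndOfull}. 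This $\Om$-expansion (carried out in full in Appendix~\ref{appexp}) is the computational heart of the argument, and I expect the bookkeeping of the numerous $\y$--$\h$ contractions at second order to be the main obstacle.

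Finally I would impose local conformal flatness of $\gamma$, so that $\Cot_\Sigma=0$ and $\Bac_\Sigma=0$ (whence $\mathcal{G}=0$ by its definition). Then \eqref{WpertypeIIO1} collapses to $\Wperscr_{\alpha\beta}=\mathcal{F}\bigl(\y_\alpha \y_\beta+\frac{\epsilon}{3}\gamma_{\alpha\beta}\bigr)$. If $\mathcal{F}=0$ then $\Wperscr=0$, which is the $\kappa=0$ instance of \eqref{WperloconflatO1c} (and corresponds to a spacetime locally isometric to de Sitter near $\scri$, cf.\ \cite{Mars_scri1}). If $\mathcal{F}\neq0$, set $\mathcal{F}=\f^{-3}$ as in \eqref{WperloconflatO1}; equation \eqref{2ndOfull} then reduces to \eqref{WpertypeIIconflatO2}. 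Substituting the derivative $D_\nu\Wperscr_{\beta\alpha}$ obtained from \eqref{WperloconflatO1} and using the standard decomposition \eqref{eqdecnaby} of $D_\alpha\y_\beta$, one brings \eqref{WpertypeIIconflatO2} to the form \eqref{WpertypeIIconflatO2v2} and reads off the equations \eqref{CKeq1}--\eqref{CKeq2}, which are exactly the conformal Killing equation \eqref{CKeqxi} for $\xi:=\f\y$. Setting $\kappa\f:=|\f|=|\xi|_\gamma$ with $\kappa\in\{0,\pm1\}$ rewrites $\Wperscr$ in the claimed form \eqref{WperloconflatO1c}, which finishes the proof.
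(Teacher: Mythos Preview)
Your proposal is correct and follows essentially the same route as the paper: pass to the Poincar\'e gauge, extend the geodesic multiple WAND to $\scri$, rewrite the type~$\II$ conditions via \eqref{eqweylpar4d} as \eqref{WpertypeII}, expand to first and second order in $\Om$ to obtain \eqref{WpertypeIIO1}--\eqref{2ndOfull}, and then impose $\Cot_\Sigma=\Bac_\Sigma=0$ to reduce to \eqref{WperloconflatO1} and \eqref{WpertypeIIconflatO2}, from which the conformal Killing equation for $\xi=\f\y$ follows exactly as in \eqref{CKeq1}--\eqref{CKeq2}. The only point worth flagging is that the paper's Appendix~\ref{appexp} also derives and uses the auxiliary relation $\fop\y_\beta\eqscr-\epsilon|\lambda|^{-1/2}\ay_\beta$ (from the geodesic property of $\wand$) when computing the second-order term, which you implicitly subsume under ``the computational heart''; otherwise your outline matches the paper step for step.
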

Restricting now to the case $\lambda > 0$, the class of asymptotic data described in Theorem \ref{theoKdSlike} has previously been studied in \cite{marspaetzseno16}. In that work, such data at $\scri$ arise as a consequence of imposing that the bulk spacetime satisfies the so-called alignment property between the Weyl tensor\footnote{See \cite{Mars_Kerr}, \cite{MarsSenovilla} for a precise definition of this property and its relation to the Kerr and Kerr-de Sitter solutions.} and a Killing vector field, together with the assumption that $\scri$ is locally conformally flat. Under these conditions, all metrics are explicitly constructed and classified, and are referred to as the Kerr-de Sitter-like class of metrics (see also \cite{Mars_scri1} for a related classification without imposing local conformal flatness of $\scri$.) As the name suggests, this class includes the Kerr-de Sitter family, but also contains additional families of solutions.

Extension of the definition to higher dimensions and further properties of this class (both in four and higher dimensions)
is provided in \cite{marspeonKSKdS21}, where
the Kerr-de Sitter like-class is shown to be characterized as (i.e. to be fully equivalent to)  the class of $\lambda>0$-vacuum Kerr-Shild metrics constructed from a de Sitter background and sharing the same null infinity as its background\footnote{A recent result \cite{chruscielfinnian}  constructs metrics with $\lambda<0$ by analytic extension of Kerr-Anti-de Sitter, thus admitting a Kerr-Schild form and sharing $\scri$ with its background. The method reminds the one employed in \cite{marspeonKSKdS21} to construct the so-called Wick-Rotated-Kerr de Sitter family, so potential connections may be established in the future.}. This hints a potential connection between the Kerr-de Sitter-like class and the set of all algebraically special metrics admitting a conformally flat $\scri$, because Kerr-Schild metrics are known \cite{malekpravda11} to be of algebraic type at least $\II$. Theorem \ref{theoKdSlike} makes this connection precise, thereby providing another interesting characterization of the Kerr-de Sitter-like class. A similar characterization may be developed in higher dimensions. However, the analysis in this case is more involved and will be addressed in a future work.

\appendix

\section{Second order expansion of algebraic type $\II$ equations}\label{appexp}

% Recall that  $\partial_\Om = \epsilon |\lambda|^{-1/2} u$, which gives the following relation between the differential operators
% \begin{equation}\label{eqops}
%  \mathcal{L}_{\partial_\Om}  =\epsilon |\lambda|^{-1/2}  \mathcal{L}_u, \qquad \frac{1}{2}\mathcal{L}^2_{\partial_\Om}  = \frac{1}{2}|\lambda|^{-1}  \mathcal{L}^2_u,
% \end{equation}
% that we shall use to obtain  the first and second order terms of the asymptotic expansions.

In order to calculate the first and second order terms of the asymptotic expansion of \eqref{WpertypeII} we apply, respectively, the operators $\mathcal{L}_{\partial_\Om}$ and $\frac{1}{2}\mathcal{L}_{\partial_\Om}^2$ and evaluate the resulting expression at $\scri$. We also employ the notation introduced in Section \ref{sec4d}, where $\eqscr$ means that the expressions hold at $\scri$. Recall the definition $\Wperscr$   introduced in Section \ref{sec4d}
\begin{align}\label{FOWper}
 \fop \Wper & \eqscr  \Wperscr,
 \end{align}
% \end{equation}
%  so we define
% \begin{equation}
   and let us also introduce the definition
   \begin{align}
\label{defF}
\frac{3}{2}\fop\left(\Wper\right)_{\alpha\beta}\y^\alpha \y^\beta  & \eqscr  \frac{3}{2}\Wperscr_{\alpha\beta}\y^\alpha \y^\beta  = :  \mathcal{F}.
\end{align}

Taking into account that both $\Wper_{\alpha\beta}$ and $ \Wppar_{\alpha\beta\nu}$ are $O (\Om)$ (cf. \eqref{eqexpWper} and \eqref{eqexpWperpar}),  equation \eqref{WpertypeII} gives at the order $\Omega$
\begin{align}
 \left(\fop\Wper\right)_{\mu'\nu'}  = & \frac{3}{2}\left(\fop\Wper\right)_{\alpha\beta}\y^\alpha \y^\beta \left(\y_{\mu'}y_{\nu'}  + \frac{\epsilon}{3} \gamma_{\mu'\nu'}\right)
 \\  + & 2 \epsilon \left(\fop\Wppar\right)_{\alpha\beta\nu}\y^\alpha \y^\beta \y_{(\mu'}\h^{\nu}{}_{\nu')}
 - \left(\fop\Wppar\right)_{(\mu|\alpha|\nu)}\y^\alpha\h^\mu{}_{\mu'}\h^\nu{}_{\nu'} + O(\Om).\label{der1}
%  \\ & =
%   \mathcal{F} \left(\y_{\mu'}y_{\nu'}  + \frac{\epsilon}{3} \gamma_{\mu'\nu'}\right)
%  -2|\lambda|^{-1/2} \epsilon (\Cot_{\Sigma})_{\alpha\beta\nu}\y^\alpha \y^\beta \y_{(\mu'}\h^{\nu}{}_{\nu')}
%  +|\lambda|^{-1/2} (\Cot_{\Sigma})_{(\mu|\alpha|\nu)}\y^\alpha\h^\mu{}_{\mu'}\h^\nu{}_{\nu'} \label{WpertypeIIO1}
\end{align}
The explicit expression for $\left(\fop\Wppar\right)_{\alpha\beta\nu}$ (and $\left(\fop\Wppar\right)_{(\mu|\alpha|\nu)}$) follows from \eqref{eqexpWperpar},
\begin{equation}
 \left(\fop \Wppar\right)_{\alpha\beta\nu} = -|\lambda|^{-1/2} (\Cot_{\Sigma})_{\alpha \beta  \nu}
\end{equation}
which in combination with expressions \eqref{FOWper} and \eqref{defF} takes \eqref{der1} into
\begin{align}
 \Wperscr_{\mu'\nu'}   \eqscr
  \mathcal{F} \left(\y_{\mu'}y_{\nu'}  + \frac{\epsilon}{3} \gamma_{\mu'\nu'}\right)
 -2\epsilon|\lambda|^{-1/2}  (\Cot_{\Sigma})_{\alpha\beta\nu}\y^\alpha \y^\beta \y_{(\mu'}\h^{\nu}{}_{\nu')}
 +|\lambda|^{-1/2} (\Cot_{\Sigma})_{(\mu|\alpha|\nu)}\y^\alpha\h^\mu{}_{\mu'}\h^\nu{}_{\nu'}. \label{WpertypeIIO1ap}
\end{align}
This is expression \eqref{WpertypeIIO1} used in the main text.

\bigskip

For the order $\Om^2$ we need some extra analysis. Recall that $\y$ is the unit vector obtained by proyecting the geodesic multiple WAND $\wand$ onto $\{\Om = const. \}$ submanifolds. Therefore, from \eqref{deckl},
\begin{align}
0 = \h^\beta{}_{\beta'} \wand^\alpha \nabla_\alpha\wand_\beta = \h^\beta{}_{\beta'} \wand^\alpha (\nabla_\alpha s) (u_\beta + y_\beta) +   \h^\beta{}_{\beta'} \wand^\alpha s (\nabla_\alpha  u_\beta + \nabla_\alpha  y_\beta)
 = \h^\beta{}_{\beta'} s^2 (u^\alpha + y^\alpha)  (\nabla_\alpha  u_\beta + \nabla_\alpha  y_\beta)
\end{align}
Taking into account that $\nabla_\alpha u_\beta = O(\Om)$ and $s = O(1)$ and different from zero at $\scri$, we have
\begin{align}\label{eqOOmy}
\h^\beta{}_{\beta'} (u^\alpha  \nabla_\alpha  y_\beta + y^\alpha \nabla_\alpha  y_\beta)   = O(\Om).
\end{align}
For the spacetime acceleration of the field $\y$ we get
\begin{align}\label{eqstacy}
\y^\alpha \nabla_\alpha  \y_\beta =  \y^\alpha D_\alpha  \y_\beta + \epsilon u^\mu  \y^\alpha (\nabla_\alpha  \y_\mu) u_\beta =  \y^\alpha D_\alpha  \y_\beta - \epsilon \y^\mu  \y^\alpha (\nabla_\alpha  u_\mu) u_\beta =  \y^\alpha D_\alpha  \y_\beta + O(\Om).
\end{align}
Denote $ \ay_\beta := \y^\alpha D_\alpha  \y_\beta$ to the intrinsic acceleration on $\{ \Om = const. \}$ leaves (note that this extends definition \ref{acyscri} off $\scri$.) Observe that from $\y$ being unit it follows $\ay_{\beta'} = \h^\beta{}_{\beta'} \ay_\beta$, so using \eqref{eqstacy}, equation \eqref{eqOOmy} can be rewritten
\begin{align}\label{eqOOmy0}
\h^\beta{}_{\beta'} u^\alpha  \nabla_\alpha  y_\beta + b_{\beta'}  = O(\Om).
\end{align}
On the other hand
\begin{align}
 \mathcal{L}_u \y_\beta = u^\alpha \nabla_\alpha \y_\beta  + \y^\alpha \nabla_\beta u_\alpha =  u^\alpha \nabla_\alpha \y_\beta + O(\Omega),
\end{align}
where note
 \begin{align}
  u^\alpha \nabla_\alpha \y_\beta & = \epsilon (\u^\mu u^\alpha \nabla_\alpha \y_\mu) \u_\beta-\epsilon (\y^\mu  u^\alpha \nabla_\alpha \y_\mu) \y_\beta + \h^\mu{}_{\beta} u^\alpha \nabla_\alpha \y_\mu = -\epsilon (\y^\mu u^\alpha \nabla_\alpha \u_\mu) u_\beta + \h^\mu{}_{\beta} u^\alpha \nabla_\alpha \y_\mu  \\ & = \h^\mu{}_{\beta} u^\alpha \nabla_\alpha \y_\mu + O(\Om).
\end{align}
Hence
\begin{align}
 \mathcal{L}_u \y_\beta = \h^\mu{}_{\beta} u^\alpha \nabla_\alpha \y_\mu + O(\Om),
\end{align}
so that equation \eqref{eqOOmy0} gives
\begin{align}\label{eqOOmy2}
\mathcal{L}_u y_\beta + b_\beta \eqscr 0
\end{align}
from which the order $\Omega$ of $\y$ is straightforward recalling that $ u = \epsilon |\lambda|^{1/2} \partial_\Om$
\begin{equation}\label{FOy}
  \fop \y_\beta \eqscr  -\epsilon |\lambda|^{-1/2} b_\beta.
\end{equation}
From this result, it is ready to obtain the leading order of $\h$, by just taking into account that
\begin{align}\label{FOh}
 O(\Om )=\mathcal{L}_u\gamma_{\alpha\beta} = -2 \epsilon  (\mathcal{L}_u \y)_{(\alpha} \y_{\beta)}+ \mathcal{L}_u \h_{\alpha\beta} \quad \Longrightarrow \quad  (\fop h)_{\alpha\beta} \eqscr  -2 |\lambda|^{-1/2}\ay_{(\alpha} \y_{\beta)}.
\end{align}
Note that expressions \eqref{FOy} and \eqref{FOh} also hold with upper indices  because
$\mathcal{L}_u\gamma^{\alpha\beta} = O(\Om)$.

\bigskip

We can now proceed with the calculation of the order $\Om^2$ of \eqref{WpertypeII}. Recall that $\Wper$ and $\Wppar$ are $O(\Om)$ (cf. \eqref{eqexpWper}, \eqref{eqexpWperpar}). Thus, $O(\Om^2)$ terms in \eqref{WpertypeII} must come from either $O(\Om^2)$ terms of  $\Wper$ or $\Wppar$ times $O(1)$ terms of the corresponding multiplying combination of $\y,\h,$ and $\gamma$; otherwise it must be a $O(\Om)$ term from $\Wper$ or $\Wppar$ times an $O(\Om)$ term of the corresponding multipliying combination of $\y,\h,$ and $\gamma$, times a factor of $2$ coming from the Leibniz rule.

By equation \eqref{eqexpWper}, the LHS of
\eqref{WpertypeII} gives
\begin{align}
 \frac{1}{2}\left(\fop^2\Wper\right)_{\mu'\nu'}  \eqscr \left( |\lambda|^{-1}\Sch_\Sigma^{(2)} - 4 |\lambda|g_{(4)} \right)_{\mu'\nu'} = -|\lambda|^{-1} (\Bac_\Sigma)_{\mu'\nu'},\label{2ndLHS}
\end{align}
where in the second equality we have introduced the expression \eqref{expresiong4} for $g_{(4)}$ with $\n = 4$.

We examine each term in the RHS of  separately \eqref{WpertypeII}. The first one
\begin{align}
&  \frac{1}{2} \fop^2\left(\frac{3}{2}\Wper_{\alpha\beta}\y^\alpha\y^\beta (\y_{\mu'}\y_{\nu'} + \frac{\epsilon}{3}\gamma_{\mu'\nu'})\right) = \\ & \frac{3}{4}\left(\fop^2\Wper\right)_{\alpha\beta}\y^\alpha\y^\beta (\y_{\mu'}\y_{\nu'} + \frac{\epsilon}{3}\gamma_{\mu'\nu'}) + \frac{3}{2}  \left(\fop\Wper\right)_{\alpha\beta}\fop\left(\y^\alpha\y^\beta (\y_{\mu'}\y_{\nu'} + \frac{\epsilon}{3}\gamma_{\mu'\nu'})\right) + O(\Om)=
\\ &
\left(\frac{3}{4}\left(\fop^2\Wper\right)_{\alpha\beta}\y^\alpha\y^\beta
+
  3 \left(\fop\Wper\right)_{\alpha\beta}\left(\fop\y\right)^{(\alpha}\y^{\beta)}
\right)
(\y_{\mu'}\y_{\nu'} + \frac{\epsilon}{3}\gamma_{\mu'\nu'})  + \frac{3}{2}  \left(\fop\Wper\right)_{\alpha\beta}\y^\alpha\y^\beta 2\left(\fop\y\right)_{(\mu'}\y_{\nu')}  + O(\Om),
\end{align}
where in the second equality we  have used $\fop\gamma_{\mu'\nu'} = O(\Om)$.
 Recalling the definitions  \eqref{FOWper}-\eqref{defF} and using the explicit expressions of $\fop^2 \Wper$ and  $\fop \y$ obtained in
 \eqref{2ndLHS} and \eqref{FOy} respectively, the above equation at $\scri$ is
% \begin{align}
% &   \frac{1}{2} \fop^2\left(\frac{3}{2}\Wper_{\alpha\beta}\y^\alpha\y^\beta (\y_{\mu'}\y_{\nu'} + \frac{\epsilon}{3}\gamma_{\mu'\nu'})\right) \eqscr \\ &   \left(\frac{3}{2}\left( |\lambda|^{-1}\Sch_\Sigma^{(2)} - 4 |\lambda|g_{(4)} \right)_{\alpha\beta}\y^\alpha \y^\beta
% - 3 \epsilon |\lambda|^{-1/2} \Wperscr_{\alpha\beta}  \ay^{\alpha}\y^{\beta} \right)\left(\y_{\mu'}\y_{\nu'} + \frac{\epsilon}{3}\gamma_{\mu'\nu'}\right)
% - 2 \epsilon \F   |\lambda|^{-1/2} \ay_{(\mu'}\y_{\nu')}.
% \end{align}
\begin{align}
&   \frac{1}{2} \fop^2\left(\frac{3}{2}\Wper_{\alpha\beta}\y^\alpha\y^\beta (\y_{\mu'}\y_{\nu'} + \frac{\epsilon}{3}\gamma_{\mu'\nu'})\right) \eqscr \\ &   \left(-\frac{3}{2} |\lambda|^{-1}(\Bac_\Sigma)_{\alpha\beta}\y^\alpha \y^\beta
- 3 \epsilon |\lambda|^{-1/2} \Wperscr_{\alpha\beta}  \ay^{\alpha}\y^{\beta} \right)\left(\y_{\mu'}\y_{\nu'} + \frac{\epsilon}{3}\gamma_{\mu'\nu'}\right)
- 2 \epsilon \F   |\lambda|^{-1/2} \ay_{(\mu'}\y_{\nu')}.
\end{align}
By expression \eqref{WpertypeIIO1ap} we have
\begin{align}
  \Wperscr_{\alpha\beta}  \ay^{\alpha}\y^{\beta} = |\lambda|^{-1/2} (\Cot_{\Sigma})_{\alpha\beta\nu}\y^{\alpha}\y^{\beta}\ay^\nu,
\end{align}
so that
\begin{align}
&   \frac{1}{2} \fop^2\left(\frac{3}{2}\Wper_{\alpha\beta}\y^\alpha\y^\beta (\y_{\mu'}\y_{\nu'} + \frac{\epsilon}{3}\gamma_{\mu'\nu'})\right) \eqscr \\ &   \left(-\frac{3}{2}|\lambda|^{-1}(\Bac_\Sigma)_{\alpha\beta}\y^\alpha \y^\beta
- 3 \epsilon |\lambda|^{-1} (\Cot_{\Sigma})_{\alpha\beta\nu}\y^{\alpha}\y^{\beta}\ay^\nu \right)\left(\y_{\mu'}\y_{\nu'} + \frac{\epsilon}{3}\gamma_{\mu'\nu'}\right)
- 2 \epsilon \F   |\lambda|^{-1/2} \ay_{(\mu'}\y_{\nu')}.\label{2ndOterm1ap}
\end{align}

For the second term in the RHS of \eqref{WpertypeII} we use analgous arguments:
\begin{align}
& \frac{1}{2}\fop^2\left( 2 \epsilon \Wppar_{\alpha\beta\nu}\y^\alpha \y^\beta \y_{(\mu'}\h^{\nu}{}_{\nu')}\right) = \\
&
\epsilon \left(\fop^2  \Wppar\right)_{\alpha\beta\nu}\y^\alpha \y^\beta \y_{(\mu'}\h^{\nu}{}_{\nu')} \\ + &  2 \epsilon \left(\fop\Wppar\right)_{\alpha\beta\nu}\left(2 (\fop\y)^{(\alpha} \y^{\beta)} \y_{(\mu'}\h^{\nu}{}_{\nu')} +
 \y^{\alpha} \y^{\beta} (\fop\y)_{(\mu'}\h^{\nu}{}_{\nu')}
+
 \y^{\alpha} \y^{\beta} \y_{(\mu'}(\fop\h)^{\nu}{}_{\nu')}
\right) + O(\Om)
% \\
% &  -2\epsilon |\lambda|^{-1/2} 2 \epsilon D_{[\nu}\Wperscr_{\beta]\alpha}\y^\alpha \y^\beta \h^{\nu}{}_{(\nu'}\y_{\mu')}  + 2 |\lambda|^{-1} (\Cot_\Sigma)_{\alpha\beta\nu} \left(2 \ay^{(\alpha}\y^{\beta)} \h^{\nu}_{(\nu'}\y_{\mu')} + \epsilon \y^\alpha \y^\beta \ay^\nu \y_{(\nu'}\y_{\mu')} + \y^\alpha \y^\beta  \h^{\nu}_{(\nu'} \ay_{\mu')}\right)
\end{align}
Now we replace $ \left(\fop^2  \Wppar\right)$ and  $ \left(\fop  \Wppar\right)$ from \eqref{eqexpWperpar}, $\fop\y$ from \eqref{FOy} and $\fop\h$ from \eqref{FOh}, so that we have
\begin{align}
& \frac{1}{2}\fop^2\left( 2 \epsilon \Wppar_{\alpha\beta\nu}\y^\alpha \y^\beta \y_{(\mu'}\h^{\nu}{}_{\nu')}\right) \eqscr \\
&  -4 |\lambda|^{-1/2}  D_{[\nu}\Wperscr_{\beta]\alpha}\y^\alpha \y^\beta\y_{(\mu'}\h^\nu{}_{\nu')}   - 2 \epsilon |\lambda|^{-1/2} (\Cot_\Sigma)_{\alpha\beta\nu} \left(-2  \epsilon |\lambda|^{-1/2} \ay^{(\alpha}\y^{\beta)}\y_{(\mu'}\h^\nu{}_{\nu')}- \epsilon |\lambda|^{-1/2} \y^\alpha \y^\beta  \h^{\nu}_{(\nu'} \ay_{\mu')} \right. \\ & \left. - |\lambda|^{-1/2}\y^\alpha \y^\beta \ay^\nu \y_{(\mu'}\y_{\nu')} - |\lambda|^{-1/2} \y^\alpha \y^\beta \y^\nu \y_{(\mu'}\ay_{\nu')}\right)
\\
= &  -4 |\lambda|^{-1/2}  D_{[\nu}\Wperscr_{\beta]\alpha}\y^\alpha \y^\beta\y_{(\mu'}\h^\nu{}_{\nu')}  + 2 |\lambda|^{-1} (\Cot_\Sigma)_{\alpha\beta\nu} \left(2 \ay^{(\alpha}\y^{\beta)}\y_{(\mu'}\h^\nu{}_{\nu')}+ \y^\alpha \y^\beta  \h^{\nu}_{(\nu'} \ay_{\mu')} + \epsilon \y^\alpha \y^\beta \ay^\nu \y_{\nu'}\y_{\mu'} \right)\\\label{2ndOterm2}
\end{align}
where in the last equality we use $(\Cot_\Sigma)_{\alpha\beta\nu} \y^\alpha \y^\beta \y^\nu \y_{(\mu'}\ay_{\nu')} = 0$ by the symmetries of the Cotton tensor.

For the last term in \eqref{WpertypeII} we obtain
\begin{align}
& \frac{1}{2} \fop^2\left(\Wppar_{(\mu|\alpha|\nu)}\y^\alpha\h^\mu{}_{\mu'}\h^\nu{}_{\nu'} \right) =
\\
& \frac{1}{2} \left(\fop^2\Wppar\right)_{(\mu|\alpha|\nu)}\y^\alpha\h^\mu{}_{\mu'}\h^\nu{}_{\nu'} +\left(\fop\Wppar\right)_{(\mu|\alpha|\nu)}\left( (\fop\y)^\alpha\h^\mu{}_{\mu'}\h^\nu{}_{\nu'}
+ \y^\alpha(\fop\h)^\mu{}_{\mu'}\h^\nu{}_{\nu'}
+\y^\alpha\h^\mu{}_{\mu'}(\fop\h)^\nu{}_{\nu'}
\right)
\end{align}
Again substituting  $ \left(\fop^2  \Wppar\right)$ and  $ \left(\fop  \Wppar\right)$, $\fop\y$ and $\fop\h$ by their corresponding expressions in \eqref{eqexpWperpar}, \eqref{FOy} and \eqref{FOh}, respectively, we get
\begin{align}
&  \frac{1}{2}\fop^2\left(\Wppar_{(\mu|\alpha|\nu)}\y^\alpha\h^\mu{}_{\mu'}\h^\nu{}_{\nu'} \right) \eqscr
% \\ & -2\epsilon|\lambda|^{-1/2}D_{([\nu |} \Wperscr_{\alpha]|\mu)}\y^\alpha\h^\mu{}_{\mu'}\h^\nu{}_{\nu'}- |\lambda|^{-1/2}(\Cot_\Sigma)_{(\mu|\alpha|\nu)} \left( -\epsilon |\lambda|^{-1/2} \ay^\alpha h^{\mu}{}_{\mu'}  h^{\nu}{}_{\nu'} \right.\\ &  -|\lambda|^{-1/2} \y^\alpha (\y^\mu \ay_{\mu'} + \y_{\mu'}\ay^\mu )\h^{\nu}{}_{\nu'} - |\lambda|^{-1/2}\y^\alpha\h^{\mu}{}_{\mu'}(\y^\nu \ay_{\nu'} + \y_{\nu'} \ay^\nu) =
\\ &
-\epsilon|\lambda|^{-1/2}\left(D_{[\nu} \Wperscr_{\alpha]\mu}+ D_{[\mu} \Wperscr_{\alpha]\nu}\right)\y^\alpha\h^\mu{}_{\mu'}\h^\nu{}_{\nu'}+ \epsilon|\lambda|^{-1}(\Cot_\Sigma)_{(\mu|\alpha|\nu)} \ay^\alpha h^{\mu}{}_{\mu'}  h^{\nu}{}_{\nu'}\\ & + |\lambda|^{-1}(\Cot_\Sigma)_{(\mu|\alpha|\nu)}\y^\alpha \left((\y^\mu \ay_{\mu'} + \y_{\mu'}\ay^\mu )\h^{\nu}{}_{\nu'}+ \h^{\mu}{}_{\mu'}(\y^\nu \ay_{\nu'} + \y_{\nu'} \ay^\nu) \right). \label{aux4}
\end{align}
We can rearrange the terms in the second line of \eqref{aux4} using the symmetries in the indices, namely,
\begin{align}
 \left(D_{[\nu} \Wperscr_{\alpha]\mu}+ D_{[\mu} \Wperscr_{\alpha]\nu}\right)\y^\alpha\h^\mu{}_{\mu'}\h^\nu{}_{\nu'} & = 2\left( D_{[\mu} \Wperscr_{\alpha]\nu}\right)\y^\alpha\h^\mu{}_{(\mu'}\h^\nu{}_{\nu')}, \\
% \end{equation}
% Next one is
% \begin{align}
 (\Cot_\Sigma)_{(\mu|\alpha|\nu)} \ay^\alpha h^{\mu}{}_{\mu'} h^{\nu}{}_{\nu'} & =  (\Cot_\Sigma)_{\mu\alpha\nu} \ay^\alpha h^{\mu}{}_{(\mu'}  h^{\nu}{}_{\nu')}.
\end{align}
Also notice that the terms in the last line of \eqref{aux4} can be also rearranged as follows
\begin{align}
& (\Cot_\Sigma)_{(\mu|\alpha|\nu)}\y^\alpha \left((\y^\mu \ay_{\mu'} + \y_{\mu'}\ay^\mu )\h^{\nu}{}_{\nu'}+ \h^{\mu}{}_{\mu'}(\y^\nu \ay_{\nu'} + \y_{\nu'} \ay^\nu) \right) \\ = &
(\Cot_\Sigma)_{\mu\alpha\nu}\y^\alpha \left(\y^{(\mu} \h^{\nu)}{}_{\nu'}\ay_{\mu'} + \y_{\mu'}\ay^{(\mu} \h^{\nu)}{}_{\nu'}+ \h^{(\mu}{}_{\mu'}\y^{\nu)} \ay_{\nu'} + \y_{\nu'} \h^{(\mu}{}_{\mu'} \ay^{\nu)} \right) \\ = & 2 (\Cot_\Sigma)_{\mu\alpha\nu}\y^\alpha \left(\y^{(\mu} \h^{\nu)}{}_{(\nu'}\ay_{\mu')} + \ay^{(\mu} \h^{\nu)}{}_{(\nu'}\y_{\mu')}\right).
\end{align}
%
% the first and third terms in the last line of \eqref{aux4} containing  $\Cot_\Sigma$ twice contracted with $\y$ are completely orthogonal to $\y$, so another contraction with $\h$ is redundant. Thus
% \begin{align}
% & (\Cot_\Sigma)_{(\mu|\alpha|\nu)}\y^\alpha \left(\y^\mu \ay_{\mu'} \h^{\nu}{}_{\nu'}+ \h^{\mu}{}_{\mu'}\y^\nu \ay_{\nu'}  \right) = (\Cot_\Sigma)_{(\mu|\alpha|\nu')}\y^\alpha \y^\mu \ay_{\mu'}+  (\Cot_\Sigma)_{(\mu'|\alpha|\nu)}\y^\alpha\y^\nu \ay_{\nu'} \\=  & \frac{1}{2}(\Cot_\Sigma)_{\mu\alpha\nu'}\y^\alpha \y^\mu \ay_{\mu'}+ \frac{1}{2} (\Cot_\Sigma)_{\nu\alpha\mu'}  \y^\alpha\y^\nu \ay_{\nu'} = \y^\alpha\y^\nu (\Cot_\Sigma)_{\nu\alpha(\mu'}\ay_{\nu')},
% \end{align}
% where the skew-symmetry in the last two indices is used in the second equality. Finally, the remaining terms are
% \begin{align}
%  & (\Cot_\Sigma)_{(\mu|\alpha|\nu)}\y^\alpha \left( \y_{\mu'}\ay^\mu \h^{\nu}{}_{\nu'}+ \h^{\mu}{}_{\mu'}  \y_{\nu'} \ay^\nu \right)  = (\Cot_\Sigma)_{(\mu|\alpha|\nu)}\y^\alpha \ay^\mu\h^{\nu}{}_{(\nu'}\y_{\mu')}.
% % & \frac{1}{2}(\Cot_\Sigma)_{\mu\alpha\nu}\y^\alpha  \y_{\mu'}\ay^\mu \h^{\nu}{}_{\nu'}+\frac{1}{2}(\Cot_\Sigma)_{\nu\alpha\mu}\y^\alpha  \y_{\mu'}\ay^\mu \h^{\nu}{}_{\nu'}  +   \frac{1}{2}(\Cot_\Sigma)_{\mu\alpha\nu}\y^\alpha \h^{\mu}{}_{\mu'}  \y_{\nu'} \ay^\nu +  \frac{1}{2}(\Cot_\Sigma)_{\nu\alpha\mu}\y^\alpha\h^{\mu}{}_{\mu'}  \y_{\nu'} \ay^\nu \\
% %  = & \y^\alpha\ay^\mu(\Cot_\Sigma)_{\mu\alpha\nu}\h^{\nu}{}_{(\nu'}  \y_{\mu')} +
% \end{align}
Therefore
\begin{align}
&  \frac{1}{2}\fop^2\left(\Wppar_{(\mu|\alpha|\nu)}\y^\alpha\h^\mu{}_{\mu'}\h^\nu{}_{\nu'} \right) \eqscr \\
-& 2\epsilon|\lambda|^{-1/2}D_{[\mu} \Wperscr_{\alpha]\nu}\y^\alpha\h^\mu{}_{(\mu'}\h^\nu{}_{\nu')}+ |\lambda|^{-1}(\Cot_\Sigma)_{\mu\alpha\nu}\left( \epsilon \ay^\alpha h^{\mu}{}_{(\mu'}  h^{\nu}{}_{\nu')} + 2 \y^\alpha\y^{(\mu} \h^{\nu)}{}_{(\nu'}\ay_{\mu')} +2 \y^\alpha\ay^{(\mu} \h^{\nu)}{}_{(\nu'}\y_{\mu')}\right).\\ \label{2ndOterm3}
\end{align}
% \begin{align}backup
% &  \frac{1}{2}\fop^2\left(\Wppar_{(\mu|\alpha|\nu)}\y^\alpha\h^\mu{}_{\mu'}\h^\nu{}_{\nu'} \right) \eqscr \\ &
% -2\epsilon|\lambda|^{-1/2}D_{[\mu} \Wperscr_{\alpha]\nu}\y^\alpha\h^\mu{}_{(\mu'}\h^\nu{}_{\nu')}+ \epsilon|\lambda|^{-1}(\Cot_\Sigma)_{\mu\alpha\nu} \ay^\alpha h^{\mu}{}_{(\mu'}  h^{\nu}{}_{\nu')}\\ & + |\lambda|^{-1} (\Cot_\Sigma)_{\mu\alpha\nu}\y^\alpha \left((\y^{(\mu} \h^{\nu)}{}_{(\nu'}\ay_{\mu')} + \ay^{(\mu} \h^{\nu)}{}_{(\nu'}\y_{\mu')}\right).\label{2ndOterm3}
% \end{align}
%
% simplify some terms in the last expression. First noticing that a 2-dimensional Cotton tensor vanishes, so that
% \begin{equation}
%  (\Cot_\Sigma)_{(\mu|\alpha|\nu)} \ay^\alpha h^{\mu}{}_{\mu'}  h^{\nu}{}_{\nu'} = 0.
% \end{equation}
Replacing \eqref{2ndLHS} in the LHS of (the second order derivative of) \eqref{WpertypeII} and  \eqref{2ndOterm1ap},\eqref{2ndOterm2} and \eqref{2ndOterm3} in the RHS of (the second order derivative of) \eqref{WpertypeII} yields, after a multiplication by $-|\lambda|$,
\begin{align}
  (\Bac_\Sigma)_{\mu'\nu'}
  =& \left(\frac{3}{2}(\Bac_\Sigma)_{\alpha\beta}\y^\alpha \y^\beta+ 3 \epsilon  (\Cot_{\Sigma})_{\alpha\beta\nu}\y^{\alpha}\y^{\beta}\ay^\nu  \right)\left(\y_{\mu'}\y_{\nu'} + \frac{\epsilon}{3}\gamma_{\mu'\nu'}\right)
+ 2 \epsilon \F   |\lambda|^{1/2} \ay_{(\mu'}\y_{\nu')} \\
  + & 4 |\lambda|^{1/2}  D_{[\nu}\Wperscr_{\beta]\alpha}\y^\alpha \y^\beta\y_{(\mu'}\h^\nu{}_{\nu')}  - 2  (\Cot_\Sigma)_{\alpha\beta\nu} \left(2 \ay^{(\alpha}\y^{\beta)}\y_{(\mu'}\h^\nu{}_{\nu')}+ \y^\alpha \y^\beta  \h^{\nu}_{(\nu'} \ay_{\mu')} + \epsilon \y^\alpha \y^\beta \ay^\nu \y_{\nu'}\y_{\mu'} \right) \\
  -& 2\epsilon|\lambda|^{1/2}D_{[\mu} \Wperscr_{\alpha]\nu}\y^\alpha\h^\mu{}_{(\mu'}\h^\nu{}_{\nu')}+(\Cot_\Sigma)_{\mu\alpha\nu}\left( \epsilon \ay^\alpha h^{\mu}{}_{(\mu'}  h^{\nu}{}_{\nu')} + 2 \y^\alpha\y^{(\mu} \h^{\nu)}{}_{(\nu'}\ay_{\mu')} +2 \y^\alpha\ay^{(\mu} \h^{\nu)}{}_{(\nu'}\y_{\mu')}\right). \label{2ndOfullap}
\end{align}
Noting that $\y$ is orthogonal to $\ay$ and $\h$, it is immediate to check that contraction with $\y^{\mu'}\y^{\nu'}$ gives a tautology, meaning that $(\Bac_\Sigma)_{\mu'\nu'}\y^{\mu'}\y^{\nu'}$ is not determined by \eqref{2ndOfullap}. We thus define
\begin{equation}\label{defG}
 \mathcal{G} := \frac{3}{2} (\Bac_\Sigma)_{\alpha\beta}\y^{\alpha}\y^{\beta}.
\end{equation}
Also, using $\gamma_{\mu'\nu'} = -\epsilon\y_{\mu'}\y_{\nu'} + \h_{\mu'\nu'}$, we can rearrange the first term involving the Cotton tensor as
\begin{equation}\label{cotcomp1}
 3 \epsilon  (\Cot_{\Sigma})_{\alpha\beta\nu}\y^{\alpha}\y^{\beta}\ay^\nu  \left(\y_{\mu'}\y_{\nu'} + \frac{\epsilon}{3}\gamma_{\mu'\nu'}\right) =  2 \epsilon  (\Cot_{\Sigma})_{\alpha\beta\nu}\y^{\alpha}\y^{\beta}\ay^\nu  \y_{\mu'}\y_{\nu'} + (\Cot_{\Sigma})_{\alpha\beta\nu}\y^{\alpha}\y^{\beta}\ay^\nu \h_{\mu'\nu'},
\end{equation}
so inserting \eqref{defG} and \eqref{cotcomp1} into \eqref{2ndOfullap} yields
\begin{align}
  (\Bac_\Sigma)_{\mu'\nu'}
  =&~  \mathcal{G}\left(\y_{\mu'}\y_{\nu'} + \frac{\epsilon}{3}\gamma_{\mu'\nu'}\right) + 2 \epsilon \F   |\lambda|^{1/2} \ay_{(\mu'}\y_{\nu')}  +
    (\Cot_{\Sigma})_{\alpha\beta\nu}\y^{\alpha}\y^{\beta}\ay^\nu \h_{\mu'\nu'}  \\
  +& 4 |\lambda|^{1/2}  D_{[\nu}\Wperscr_{\beta]\alpha}\y^\alpha \y^\beta\y_{(\mu'}\h^\nu{}_{\nu')}  - 2  (\Cot_\Sigma)_{\alpha\beta\nu} \left(2 \ay^{(\alpha}\y^{\beta)}\y_{(\mu'}\h^\nu{}_{\nu')}+ \y^\alpha \y^\beta  \h^{\nu}_{(\nu'} \ay_{\mu')} \right) \\
 -& 2\epsilon|\lambda|^{1/2}D_{[\mu} \Wperscr_{\alpha]\nu}\y^\alpha\h^\mu{}_{(\mu'}\h^\nu{}_{\nu')}+(\Cot_\Sigma)_{\mu\alpha\nu}\left( \epsilon \ay^\alpha h^{\mu}{}_{(\mu'}  h^{\nu}{}_{\nu')} + 2 \y^\alpha\y^{(\mu} \h^{\nu)}{}_{(\nu'}\ay_{\mu')} +2 \y^\alpha\ay^{(\mu} \h^{\nu)}{}_{(\nu'}\y_{\mu')}\right). \label{2ndOfullap2}
\end{align}
Finally, reordering terms, we find
\begin{align}
  (\Bac_\Sigma)_{\mu'\nu'}
  =&  \mathcal{G}\left(\y_{\mu'}\y_{\nu'} + \frac{\epsilon}{3}\gamma_{\mu'\nu'}\right) + 2 \epsilon \F   |\lambda|^{1/2} \ay_{(\mu'}\y_{\nu')}  \\
  +  &   (\Cot_{\Sigma})_{\alpha\beta\nu}\left(\y^{\alpha}\y^{\beta}\ay^\nu \h_{\mu'\nu'} -  4 \ay^{(\alpha}\y^{\beta)}\y_{(\mu'}\h^\nu{}_{\nu')}-2\y^\alpha \y^\beta  \h^{\nu}{}_{(\nu'} \ay_{\mu')} \right) \\
 +& (\Cot_\Sigma)_{\mu\alpha\nu}\left( \epsilon \ay^\alpha h^{\mu}{}_{(\mu'}  h^{\nu}{}_{\nu')} +2  \y^\alpha\y^{(\mu} \h^{\nu)}{}_{(\nu'}\ay_{\mu')} +2 \y^\alpha\ay^{(\mu} \h^{\nu)}{}_{(\nu'}\y_{\mu')}\right)\\
 + & 4 |\lambda|^{1/2}  D_{[\nu}\Wperscr_{\beta]\alpha}\y^\alpha \y^\beta\y_{(\mu'}\h^\nu{}_{\nu')} - 2\epsilon|\lambda|^{1/2}D_{[\mu} \Wperscr_{\alpha]\nu}\y^\alpha\h^\mu{}_{(\mu'}\h^\nu{}_{\nu')} . \label{2ndOfullap3}
\end{align}
which is the expression \eqref{2ndOfull} used in the main text.

\end{document}